\DeclareMathOperator{\pref}{\textit{pre}}
\DeclareMathOperator{\suff}{\textit{suf}}
\DeclareMathOperator{\size}{\textit{size}}
\DeclareMathOperator{\dt}{\textit{DT}}
\DeclareMathOperator{\hub}{\textit{hub}}
\begin{document}
\pagenumbering{Roman}
\pagestyle{empty}
\begin{center}
\noindent \large UNIVERZA NA PRIMORSKEM\\
\large FAKULTETA ZA MATEMATIKO, NARAVOSLOVJE IN\\
INFORMACIJSKE TEHNOLOGIJE

\normalsize
\vspace{5.5cm}
Zaklju\v cna naloga\\
(Final project paper)\\
\textbf{\large Od\v citljivost digrafov in dvodelnih grafov}\\
\normalsize
(Readability of digraphs and bipartite graphs)\\
\end{center}

\begin{flushleft}
\vspace{5cm}
\noindent Ime in priimek: Vladan Jovi\v ci\' c
% add your first name and last name in the line above
\\
\noindent \v Studijski program: Matematika
% add your study program in the line above
\\
\noindent Mentor: izr.~prof.~dr.~Martin Milani\v c
% add the academic title, first name, and last name of your mentor in the line above
\\
\noindent Somentor: prof.~dr.~Andrej Brodnik
% if you have a co-mentor, add his/her academic title, first name, and last name in the line above
% if you do not have a co-mentor, delete the line above and the line below
\\
\end{flushleft}

\vspace{4cm}
\begin{center}
\large \textbf{Koper, julij 2016}
% add the month and year of submission of your final project paper
\end{center}
\newpage

\pagestyle{fancy}
%%%%%%%%%%%%%%%%%%%%%%%%%%%%%%% Key words documentation (Slovene and English) %%%%%%%%%%%

\section*{Klju\v cna dokumentacijska informacija}

\medskip
\begin{center}
\fbox{\parbox{\linewidth}{
\vspace{0.2cm}
\noindent
Ime in PRIIMEK: Vladan JOVI\v CI\' C\vspace{0.5cm}\\
Naslov zaklju\v cne naloge: Od\v citljivost digrafov in dvodelnih grafov\vspace{0.5cm}\\
Kraj: Koper\vspace{0.5cm}\\
Leto: 2016\vspace{0.5cm}\\
\v Stevilo listov: 39\hspace{2cm} \v Stevilo slik: 13\hspace{2.6cm} \v Stevilo tabel: 2\hspace{2cm}\vspace{0.5cm}\\
\v Stevilo referenc: 8\vspace{0.5cm}\\
Mentor: izr. prof. dr. Martin Milani\v c\vspace{0.5cm}\\
Somentor: prof. dr. Andrej Brodnik\vspace{0.5cm}\\
Klju\v cne besede: od\v citljivost, graf prekrivanj, ozna\v cevalna funkcija, celo\v stevilski linearn program, razlo\v cljivost, dekompozicija, HUB-\v stevilo, dvodimenzionalne mre\v ze, toroidalne mre\v ze\vspace{0.5cm}\\
Math.~Subj.~Class.~(2010): 05C20, 05C62, 05C85, 05C75, 68W32, 90C10\vspace{0.5cm}\\
{\bf Izvle\v cek:}\\
V zaklju\v cni nalogi smo obravnavali invarianto grafov, imenovano od\v citljivost grafa. Motivacija za od\v citljivost prihaja iz bioinformatike. Grafi, ki se pojavijo v problemih sekvenciranja genoma, imajo majhno od\v citljivost, kar motivira \v studij grafov majhne od\v citljivosti. Predstavili smo algoritem Brage in Meidanisa, ki poka\v ze, da je parameter od\v citljivost dobro definiran in da je poljuben digraf mo\v zno predstaviti kot graf prekrivanj neke mno\v zice besed. Iz tega smo izpeljali zgornjo mejo za od\v citljivost.  Parameter od\v citljivost je mo\v c definirati tudi za dvodelne grafe; temu modelu je v zaklju\v cni nalogi posve\v cena posebna pozornost. Zahtevnost ra\v cunanja od\v citljivosti danega digrafa (ali dvodelnega grafa) \v se ni znana. Predstavili smo nov na\v cin za natan\v cno ra\v cunanje od\v citljivost s pomo\v cjo celo\v stevilskega linearnega programiranja. Obravnavali smo tudi dva pristopa za ra\v cunanje mej za od\v citljivost. Na koncu smo natan\v cno pora\v cunali od\v citljivost dvodimenzionalnih in toroidalnih mre\v z in predstavili polinomski algoritem za izra\v cun optimalne ozna\v cevalne funkcije dane dvodimenzionalne ali toroidalne mre\v ze.
\vspace{0.2cm}
}}
\end{center}

\newpage

\section*{Key words documentation}

\medskip

\begin{center}
\fbox{\parbox{\linewidth}{
\vspace{0.2cm}
\noindent
Name and SURNAME: Vladan JOVI\v CI\' C\vspace{0.5cm}\\
Title of final project paper: Readability of digraphs and bipartite graphs\vspace{0.5cm}\\
Place: Koper\vspace{0.5cm}\\
Year: 2016\vspace{0.5cm}\\
Number of pages: 39\hspace{1.6cm} Number of figures: 13\hspace{2.2cm} Number of tables: 2\vspace{0.5cm}\\
Number of references: 8\vspace{0.5cm}\\
Mentor: Assoc.~Prof.~Martin~Milani\v c, PhD\vspace{0.5cm}\\
% for : "title" write one of the following:
% Assist.~Prof.~(if the title is "docent"),
% Assoc.~Prof.~(if the title is "izredni profesor"),
% Prof.~(if the title is "profesor")
Co-Mentor: Prof.~Andrej~Brodnik, PhD\vspace{0.5cm}\\
Keywords: readability, overlap graph, labeling, integer linear program, distinctness, decomposition, HUB-number, two-dimensional grid graphs, toroidal grid graphs\vspace{0.5cm}\\
Math.~Subj.~Class.~(2010): 05C20, 05C62, 05C85, 05C75, 68W32, 90C10\vspace{0.5cm}\\
{\bf Abstract:}\\
In the final project paper we consider a graph parameter called readability. Motivation for readability comes from bioinformatics applications. Graphs arising in problems related to genome sequencing are of small readability, which motivates the study of graphs of small readability. We present an algorithm due to Braga and Meidanis, which shows that every digraph is isomorphic to the overlap graph of some set of strings. An upper bound on readability is derived from the algorithm. The readability parameter can also be defined for bipartite graphs; in the final project paper special emphasis is given to the bipartite model. The complexity of computing the readability of a given digraph (or of a given bipartite graph) is unknown. A way for the exact computation of readability is presented using Integer Linear Programming. We also present two approaches for computing upper and lower bounds for readability due to Chikhi at al. Finally, the readability is computed exactly for toroidal and two-dimensional grid graphs and a polynomial time algorithm for constructing an optimal overlap labeling of a given two-dimensional or toroidal grid graph is presented.
\vspace{0.2cm}
}}
\end{center}

%%%%%%%%%%%%%%%%%%%%%%%%%%%%%%% Acknowledgement %%%%%%%%%%%%%%%%%%%%%%%%%%%%%%%%%%%%%

\newpage
\section*{Acknowledgement}

I would like to express my very great appreciation to my mentor and professor Martin Milani\v c for his guidance and suggestions during the planning, development and writing of this final project paper.

I would also like to thank my co-mentor and professor Andrej Brodnik for given advice and support during the writing of this paper.

My special thanks are extended to UP FAMNIT for giving me support during my study.

Finally, I wish to thank my parents and sister for their support and encouragement throughout my education.

%%%%%%%%%%%%%%%%%%%%%%%%%%%%% Table of contents, list of figures, etc. %%%%%%%%%%%%%%%%%%%%%%%%%%%%%%
\newpage

\tableofcontents
\addtocontents{toc}{\protect\thispagestyle{fancy}}
% if there are no tables in your final project paper, delete the following three lines
\newpage
\listoftables
\addtocontents{lot}{\protect\thispagestyle{fancy}}
% if there are no figures in your final project paper, delete the following three lines
\newpage
\listoffigures
\addtocontents{lof}{\protect\thispagestyle{fancy}}
\newpage
% Since the appendices are not numbered, we also do not want to show the dots to their (non-existing) page numbers.
%\renewcommand{\cftdot}{}
%\listofappendices
%\thispagestyle{fancy}
%\newpage

\chapter*{List of Abbreviations}
\thispagestyle{fancyplain}
\begin{longtable}{@{}p{1cm}@{}p{\dimexpr\textwidth-1cm\relax}@{}}
\nomenclature{{\it i.e.}}{that is}
\nomenclature{{\it e.g.}}{for example}
\end{longtable}
\newpage

\normalsize

%%%%%%%%%%%%%%%%%%%%%%%%%%%%%%%%%% Chapters: %%%%%%%%%%%%%%%%%%%%%%%%%%%%%%%%%%%%%

% Hint: You might find it convenient to keep the contents of separate chapters in separate files, each in their own
% .tex file. They all have to be stored in the same folder as the main file. Each chapter is included with the \include command.
% Example: we can insert FirstChapter.tex and SecondChapter.tex as follows:
% \include{FirstChapter}
% \include{SecondChapter}

\chapter{Introduction}
\thispagestyle{fancy}
\pagenumbering{arabic}

\section{Preliminaries}

In this final project we study a graph parameter called readability. Suppose we are given a finite set of finite strings $C$ over some alphabet $\Sigma$. Let $s_1, s_2 \in C$. For a positive integer $k$ such that $k \leq \min\{length(s_1), length(s_2)\}$ we say that $s_1$ overlaps $s_2$ by $k$ if the suffix of $s_1$ of length $k$ equals the prefix of $s_2$ of length $k$. Denote by $ov(s_1,s_2)$ the minimum $k$ such that $s_1$ overlaps $s_2$ and set $ov(s_1,s_2) = 0$ if $s_1$ does not overlap $s_2$. Given a set of strings $C$, we can construct the following directed graph: each string represents one vertex and there is directed edge between two vertices $u$ and $v$ if and only if $ov(u,v) > 0$. The graph obtained this way is called the \textit{overlap graph} of set $C$. Clearly this construction can be done in polynomial time. Now consider the reverse problem: given a directed graph $G$, find a set of strings $C$ such that $G$ is its overlap graph. The smallest integer $k$ for which there is a set of strings $C$ such that the length of each string in $C$ is at most $k$ and $G$ is the overlap graph of $C$ is called the readability of $G$ and denoted by $r(G)$. Note that the size of the alphabet of strings in $C$ is unrestricted. %In \cite{MilanicRayanMedvedev} it is shown that asymptotically we can study readability only on balanced bipartite graphs.

In this paper we will review the algorithm for building a set of strings for a given overlap graph presented in \cite{BragaMeidanis} and consequently we will show that readability is well defined and derive a (weak) upper bound (Chapter 2). In Chapters 3 and 4 we present an ILP formulation for the exact computation of readability, firstly for balanced bipartite graphs (for which parameter will be introduced in Chapter 3) and then for digraphs. We also present lower and upper bounds on readability in Chapter 5, given by \cite{MilanicRayanMedvedev}. In the last chapter, we explore the readability of graphs known as \textit{two-dimensional grid graphs} and \text{toroidal grid graphs}.

Motivation for the readability comes from bioinformatics applications. For example, well known graphs in bioinformatics are overlap graphs where each vertex represents some DNA sequence and two vertices are adjacent if and only if there is an overlap between corresponding sequences. There are several problems that have been studied on such graphs, e.g., the \textit{Minimum s-Walk Problem} and the \textit{De Bruijn Superwalk Problem}~\cite{deBruijn}. The problem of constructing a set of strings of a given (overlap) graph also has an application showing that a certain family of problems, a variation of the so-called Minimum Contig Problems, are NP-hard~\cite{BragaMeidanis}.

\section{Basic definitions and notations}
%We will use standard graph theoretic notations and terminology.
We study readability of finite graphs that can be directed or undirected. We will denote by $\pref (s,l)$ the prefix of string $s$ of length $l$. Similarly, $\suff (s,l)$ will denote the suffix of $s$ of length $l$. A \textit{labeling} $\ell$ of a graph $G$ is assignment of a string to each vertex of $G$. The \textit{length} of a labeling $\ell$ is denoted by $len(\ell)$ and defined as the maximum length of a string in the image of $\ell$. An \textit{overlap labeling} of a digraph $D$ is a labeling $\ell$ of $D$ such that for all pairs $u,v$ of vertices of $D$, the pair $(u,v)$ is an arc of $D$ if and only if $ov(\ell(u),\ell(v))>0$. We will denote the $i$th character of the string assigned to some vertex $u$ by $u(i)$.

\begin{definition}
Let $D$ be a digraph. The readability of $D$, denoted by $r(D)$ is the minimum positive integer $k$ such that there exists an injective overlap labeling of $D$ of length $k$.
\end{definition}

We will now introduce some standard notions in graph theory used to study the readability.

\begin{definition}
The {\it chromatic index} $\chi '(G)$ of an undirected graph $G$ is the minimum number of colors needed to color the edges of $G$ such that no two distinct edges that share an endpoint have the same color.
\end{definition}
\begin{definition}
Given an undirected graph $G=(V,E)$, a set $M \subseteq E(G)$ is called a \emph{matching} if each vertex in $V(G)$ is incident to at most one edge from $M$.
\end{definition}

\begin{definition}
The \emph{disjoint union of graphs} $H_1$ and $H_2$ with disjoint vertex sets is the graph $H = H_1 + H_2$ with $V(H) = V(H_1) \cup V(H_2)$ and $E(H) = E(H_1) \cup E(H_2)$.
\end{definition}

We denote the maximum degree of an undirected graph $G$ with $\Delta(G)$. In the case of a digraph $G$, $\Delta^+(G)$ denotes the maximum out-degree and $\Delta^-(G)$ denotes the maximum in-degree. All bipartite graphs considered in this final project paper will be assumed given with a bipartition of their vertex set into two independent sets (called parts) and thus denoted by $G=(S\cup T, E)$, where $S$ and $T$ are parts of $G$.

\begin{definition}
A bipartite graph $G=(S\cup T, E)$ with parts $S$ and $T$ is called a \emph{biclique} if for any two vertices $u \in S$ and $v \in T$ we have $uv \in E(G)$.
\end{definition}

\chapter{An upper bound on readability}

As stated in introduction, the readability of a digraph is well defined. To be able to prove this and give an upper bound for the readability, we will define the notions of a {\it directed matching} and a {\it directed edge coloring}.
\begin{definition}
A {\it directed matching} of a directed graph $G$ is a set $M \subseteq E(G)$ such that for any two distinct arcs $u_1v_1 \in M$ and $u_2v_2 \in M$ we have $u_1 \neq u_2$ and $v_1 \neq v_2$ i.e., the tails of the edges have to be different and the heads have to be different.
\end{definition}
\begin{definition}
Let $L = \{M_1, M_2, \dots, M_k\}$ be a a collection of pairwise disjoint directed matchings of a graph $G$. If $L$ covers all the edges of $G$, i.e., each edge belongs to $M_i$ for some $i \in \{1, \dots, k\}$, then $L$ is said to be a \textit{directed edge coloring}.
\end{definition}
\begin{theorem}[Braga and Meidanis~\cite{BragaMeidanis}]\label{thm:wellD}
For an arbitrary directed graph $G = (V,A)$ there exists a labeling $\ell$ of $G$ such that $len(\ell) \leq 2^{p+1} -1$ where $p = \max\{\Delta^{+}(G), \Delta^{-}(G)\}$ and $G$ is the overlap graph of the corresponding set of strings.
\end{theorem}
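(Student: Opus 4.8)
The plan is to construct the labeling explicitly by first decorating the arcs of $G$ with enough information to distinguish overlaps, and then reading the label of each vertex off the arcs incident to it. The first step is to produce a directed edge coloring of $G$ with few colors. Since every vertex has out-degree at most $p$ and in-degree at most $p$, a greedy/counting argument (in the spirit of Vizing's theorem for the directed matching version) shows that the arcs of $G$ can be partitioned into at most $2p$ (in fact we will aim for a bound that, together with the block sizes below, yields $2^{p+1}-1$) pairwise disjoint directed matchings $M_1,\dots,M_k$; the point of a directed matching is that within one $M_i$ no two arcs share a tail and no two share a head, so the ``overlap contribution'' of that color to any vertex is unambiguous. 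I would first prove this colouring lemma carefully, since it is the combinatorial core.

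Next I would assign to each color $i$ a binary block $B_i$ — a string over $\{0,1\}$ — of a prescribed length $\ell_i$, chosen so that the blocks are ``prefix/suffix incompatible'': the only way a suffix of a concatenation of blocks can equal a prefix of another concatenation is if the corresponding arc is actually present. The natural choice is to make block $i$ have length roughly $2^{i}$ (or to use the doubling idea $1,2,4,\dots$), so that a single long block cannot be spoofed by several short ones; the total length $\sum_i \ell_i$ is then of order $2^{p+1}$, which is where the bound $2^{p+1}-1$ comes from. Then the label of a vertex $v$ is defined by concatenating, in the fixed color order, the blocks $B_i$ for which $v$ is the head of an arc of $M_i$ (this forms the ``prefix part'' of $v$'s string) followed by a separator and then the blocks $B_i$ for which $v$ is the tail of an arc of $M_i$ (the ``suffix part''); one must set this up so that $\suff(\ell(u),\cdot)$ can match $\pref(\ell(v),\cdot)$ exactly when $uv\in A$.

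After the construction I would verify the two directions of the overlap condition. If $uv\in A$, then $uv\in M_i$ for exactly one $i$, and the suffix part of $u$'s label contains $B_i$ in a position that lines up with $B_i$ at the start of $v$'s prefix part, giving a genuine positive overlap; this direction is routine bookkeeping. The harder direction — and the main obstacle — is the converse: showing that $ov(\ell(u),\ell(v))>0$ forces $uv\in A$, i.e., that no ``accidental'' overlap can occur between two concatenations of blocks belonging to non-adjacent vertices. This is exactly why the blocks must be chosen with incompatible lengths and contents (e.g. using a dedicated alphabet symbol as a separator, or exponentially growing block lengths so that alignments are forced to respect block boundaries); pinning down this non-interference property and then checking that it survives taking the maximum length $2^{p+1}-1$ is the delicate part of the argument. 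Finally I would bound $len(\ell)$ by summing the block lengths plus separators and confirm it is at most $2^{p+1}-1$ with $p=\max\{\Delta^+(G),\Delta^-(G)\}$.
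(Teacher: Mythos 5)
There is a genuine gap, and it sits exactly where you flag the ``delicate part'': your blocks $B_i$ are indexed by \emph{color}, not by \emph{arc}, and that is not enough information. If $u$ is the tail of some arc of $M_i$ (with head $w\neq v$) and $v$ is the head of some arc of $M_i$ (with tail $x\neq u$), then $B_i$ appears in the suffix part of $\ell(u)$ and in the prefix part of $\ell(v)$; in the simplest situation (color $i$ is the last color in $u$'s suffix part and the first in $v$'s prefix part) this already produces a spurious overlap of length $|B_i|$ between the non-adjacent pair $u,v$, no matter how you choose block lengths or separators. Conversely, the forward direction is not ``routine bookkeeping'' with your layout either: if $uv\in M_i$ but $v$ is also the head of arcs of smaller colors $i_1<i$, then $\ell(v)$ begins with $B_{i_1}\cdots$, and a suffix of $\ell(u)$ must reproduce all of that material to create any overlap at all; since those earlier blocks belong to arcs whose tails need not be $u$, nothing forces them into $\ell(u)$. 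So with independently assembled per-color blocks you can simultaneously miss required overlaps and create forbidden ones. Separately, your coloring step is weaker than what is needed for the stated bound: the paper gets a directed edge coloring with \emph{exactly} $p=\max\{\Delta^+(G),\Delta^-(G)\}$ directed matchings by splitting each vertex into a tail-copy and a head-copy and applying K\"onig's line coloring theorem to the resulting bipartite graph; with $2p$ colors and geometrically growing blocks your total length would be on the order of $2^{2p}$, not $2^{p+1}-1$.

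The paper's construction repairs both defects by coupling the two endpoints of each arc and by injecting a \emph{fresh} character per arc. Processing the matchings $M_1,\dots,M_p$ in order, for each arc $uv\in M_i$ it sets $\pref(v):=\pref(v).I.\suff(u)$ with $I$ a never-before-used character and then $\suff(u):=\pref(v)$; afterwards $\suff(u)$ is only ever extended to the left and $\pref(v)$ only to the right, so the overlap created for the arc $uv$ survives to the end, and the length recursion $\size(i)\le 2\,\size(i-1)+1$ gives $\size(p)\le 2^p-1$ and hence final labels $\pref(u).I.\suff(u)$ of length at most $2^{p+1}-1$. The converse (no accidental overlaps) is then proved by observing that the largest character occurring in any suffix of $\ell(u)$, respectively prefix of $\ell(v)$, is one of these unique per-arc characters, and if a suffix of $\ell(u)$ equals a prefix of $\ell(v)$ their largest characters coincide, which identifies a single arc with tail $u$ and head $v$. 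This per-arc uniqueness, together with the nested (rather than blockwise) build-up of the strings, is the idea your proposal is missing and cannot be recovered by tuning block lengths or adding separators alone.
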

We follow the proof of Braga and Meidanis~\cite{BragaMeidanis} and to prove this theorem, we will need the following two theorems.
\begin{theorem}[K\" onig's Line Coloring Theorem]\label{thm:Konig}
Every bipartite graph $G$ satisfies $\chi'(G) = \Delta(G)$.
\end{theorem}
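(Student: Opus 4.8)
The plan is to prove the two inequalities $\chi'(G) \ge \Delta(G)$ and $\chi'(G) \le \Delta(G)$ separately. The lower bound is immediate: if $v$ is a vertex of maximum degree, then the $\Delta(G)$ edges incident with $v$ pairwise share the endpoint $v$, so they must all receive distinct colors, forcing $\chi'(G) \ge \Delta(G)$. All the work lies in the upper bound $\chi'(G) \le \Delta(G)$, i.e.\ in exhibiting a proper edge coloring with only $\Delta := \Delta(G)$ colors. I would establish this by induction on the number of edges, the edgeless graph being trivial.

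For the inductive step, I pick any edge $uv$ and apply the induction hypothesis to $G - uv$; since $\Delta(G - uv) \le \Delta$, this yields a proper edge coloring of $G - uv$ with the color set $\{1, \dots, \Delta\}$. In $G - uv$ we have $\deg(u) \le \Delta - 1$ and $\deg(v) \le \Delta - 1$, so at least one color is missing at $u$ and at least one is missing at $v$. If some single color is missing at both $u$ and $v$, I assign it to $uv$ and the step is complete. The only remaining case is that the set of colors missing at $u$ is disjoint from the set missing at $v$; I fix $\alpha$ missing at $u$ and $\beta$ missing at $v$, so $\alpha \ne \beta$, and by disjointness $\alpha$ is present at $v$ and $\beta$ is present at $u$.

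The heart of the argument is a Kempe-chain recoloring, and this is exactly where bipartiteness enters. Let $H$ be the subgraph of $G - uv$ consisting of all edges colored $\alpha$ or $\beta$; every vertex has degree at most $2$ in $H$, so each component of $H$ is a path or an even cycle. The component $P$ containing $v$ is a path that starts at $v$ with an $\alpha$-edge (as $\alpha$ is present and $\beta$ absent at $v$). I claim $u \notin P$. Indeed, $u$ cannot be an interior vertex of $P$, since an interior vertex meets both an $\alpha$-edge and a $\beta$-edge whereas $\alpha$ is missing at $u$; and $u$ cannot be the far endpoint of $P$ either, because the colors along $P$ alternate $\alpha, \beta, \alpha, \dots$ and the edge reaching $u$ cannot be $\alpha$, which forces $P$ to have an even number of edges and thus places $u$ and $v$ in the same part of the bipartition — impossible, since $u$ and $v$ are adjacent and hence lie in different parts. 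Therefore $u$ and $v$ lie in different components of $H$. Swapping the colors $\alpha$ and $\beta$ along $P$ keeps the coloring proper (interchanging two colors on an entire alternating component never creates a conflict), now makes $\alpha$ missing at $v$, and leaves $u$ untouched so that $\alpha$ is still missing at $u$; I then color $uv$ with $\alpha$, completing the induction.

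The main obstacle is precisely the claim $u \notin P$, and it is the one place where the hypothesis that $G$ is bipartite is indispensable: the parity of the alternating path is controlled exactly by the two-colorability of $G$. For non-bipartite graphs this step genuinely fails, which is consistent with Vizing's theorem permitting $\Delta + 1$ colors. I should also note an alternative route — extending $G$ to a $\Delta$-regular bipartite (multi)graph and decomposing it into $\Delta$ perfect matchings, one per color — but since that argument relies on Hall's theorem (equivalently, the matching form of K\"onig's theorem), which has not been introduced in this text, the self-contained alternating-path induction above is the approach I would carry out.
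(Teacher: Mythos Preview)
Your proof is correct and is the classical alternating-path (Kempe-chain) argument for K\"onig's edge-coloring theorem; the parity step where bipartiteness is invoked is handled cleanly. Note, however, that the paper does not actually prove this theorem: it simply states it and refers the reader to Diestel~\cite{Diestel}, so there is no in-paper argument to compare against. Your write-up is essentially the proof one finds in standard references (including Diestel), so nothing is lost.
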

A proof of Theorem~\ref{thm:Konig} can be found in~\cite{Diestel}.
\begin{theorem}\label{thm:dec}
For every directed graph $G$ there is a directed edge coloring $L = \{M_1, M_2, \dots, M_p\}$ where $p =\ max\{\Delta^{+}(G), \Delta^{-}(G)\}$.
\end{theorem}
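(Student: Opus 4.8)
The plan is to reduce the statement about directed edge colorings to K\"onig's Line Coloring Theorem (Theorem~\ref{thm:Konig}) by passing from the digraph $G$ to an auxiliary \emph{undirected bipartite} graph. Given $G=(V,A)$, I would build a bipartite graph $H$ with parts $V_{\text{out}}$ and $V_{\text{in}}$, each a disjoint copy of $V$: for every arc $(u,v)\in A$ we place an edge between the copy of $u$ in $V_{\text{out}}$ and the copy of $v$ in $V_{\text{in}}$. This is the standard bipartite ``split'' of a digraph. The crucial bookkeeping observation is that the degree of the $V_{\text{out}}$-copy of a vertex $u$ in $H$ equals $\deg^+_G(u)$, and the degree of the $V_{\text{in}}$-copy of $u$ equals $\deg^-_G(u)$; hence $\Delta(H)=\max\{\Delta^+(G),\Delta^-(G)\}=p$.

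Next I would apply Theorem~\ref{thm:Konig} to $H$, obtaining a proper edge coloring of $H$ with exactly $p$ colors, i.e.\ a partition of $E(H)$ into $p$ matchings $N_1,\dots,N_p$. Pulling this coloring back to $G$ (each edge of $H$ corresponds to a unique arc of $G$) gives a partition $A = M_1 \cup \dots \cup M_p$ of the arc set. It then remains to check that each $M_i$ is a directed matching in the sense of the paper's definition: if two distinct arcs $(u_1,v_1),(u_2,v_2)$ lie in the same $M_i$, then their images lie in the matching $N_i$ of $H$, so they share no endpoint in $H$; in particular the $V_{\text{out}}$-copies of $u_1$ and $u_2$ differ, giving $u_1\neq u_2$, and the $V_{\text{in}}$-copies of $v_1$ and $v_2$ differ, giving $v_1\neq v_2$. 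That is exactly the condition that tails are pairwise distinct and heads are pairwise distinct. Pairwise disjointness of the $M_i$ and the covering property are inherited directly from the corresponding properties of $N_1,\dots,N_p$, so $L=\{M_1,\dots,M_p\}$ is a directed edge coloring of $G$.

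The only genuine subtlety — and the step I would be most careful about — is the handling of degenerate configurations: loops (arcs $(u,u)$) and digons (both $(u,v)$ and $(v,u)$ present). In the bipartite split these cause no trouble at all, precisely because the out-copy and the in-copy of a vertex are kept distinct: a loop $(u,u)$ becomes an ordinary edge between two different vertices of $H$, and a digon becomes two edges forming a path of length $2$, which K\"onig's theorem will happily $2$-color. So the construction is robust, but I would state explicitly that the definition of ``directed matching'' permits, e.g., both $(u,v)$ and $(v,u)$ in the same $M_i$ (since it only forbids repeated tails and repeated heads), which is consistent with what the bipartite coloring produces. With that remark in place the proof is essentially a one-line translation, and no further estimates are needed.
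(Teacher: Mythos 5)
Your proof is correct and follows essentially the same route as the paper: split each vertex into an out-copy and an in-copy, note that $\Delta(H)=\max\{\Delta^+(G),\Delta^-(G)\}$, apply K\"onig's Line Coloring Theorem, and pull the $p$ matchings back to directed matchings of $G$. Your extra remark about loops and digons is a harmless addition the paper does not spell out.
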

\begin{proof}
Given a directed graph $G$ we construct a bipartite graph $H$ as follows:
\begin{itemize}
\item for every $v \in V(G)$ add two vertices $v'$ and $v''$ to $V(H)$
\item for every $e=uv \in E(G)$ add the edge $e' = u'v''$ to $E(H)$
\end{itemize}
It is obvious that $H$ is bipartite with $|V(H)| = 2|V(G)|$ and $|E(H)| = |E(G)|$. By Theorem~\ref{thm:Konig} we have that the minimum number of colors needed to color the edges of $H$ is $\Delta(H) = \max\{\Delta^{+}(G), \Delta^{-}(G)\} = p$. Observe that a set of edges colored by the same color is a matching of graph $H$. Let $L' = \{M_1', M_2', \dots, M_p'\}$ be a collection of pairwise disjoint matchings of $H$ given by an optimal edge coloring.
Define $M_i = \{uv \mid u'v'' \in M_i'\}$.
Since $M_i'$ is matching of $H$, we have that $M_i$ is a directed matching in $G$ and since matchings in $L'$ were pairwise disjoint we conclude that directed matchings in $L = \{M_1, M_2, \dots, M_p\}$ are also pairwise disjoint. Also, since the edge coloring assigns a unique color to each edge, we have that for each edge $e \in E(H)$ there exists some $i \in \{1, 2, \dots, p \}$ such that $e \in M_i'$. This means that $L$ covers all edges of $G$. Thus, $L$ is a directed edge coloring of $G$ and $|L| = p$.
\end{proof}
Now we are ready to prove Theorem~\ref{thm:wellD}. We propose a constructive proof. The idea is to generate a string of the form $\pref (v).{\it Unique(v)}.\suff (v)$ for each vertex $v \in V(G)$ where $Unique(v)$ is a character not used before, i.e., not occurring in $\pref (v)$ nor $\suff (v)$ nor in any other string assigned to vertex $u \in V(G)$. This way, the number of different characters that appear in the strings assigned to vertices is bounded from below. In fact, algorithm produces labels on vertices using exactly $|V(G)| + |E(G)|$ characters in total.
\begin{algorithm}[h!]\label{algorithm11}
\KwIn{Directed graph $G$}
\KwOut{A set of strings $C = \{s_u : u \in V(G)\}$ whose overlap graph is $G$}
\caption{Construction of an overlap labeling of given digraph $G$}
{
    \For{$v \in V(G)$}
    {
        $\pref (v) = \epsilon$; \\
        $\suff (v) = \epsilon$;
    }
    {	
    Find a minimum directed edge coloring $L=\{M_1,M_2,\dots,M_p\}$ where $p=max\{\Delta^{+}(G), \Delta^{-}(G)\}$ \\
    }
    {$I$ = 0;} \\
    \For{$M_i \in L$}
    {
    	\For{$e=uv \in M_i$}
        {
        	$\pref (v) = \pref (v).I.\suff (u)$; \\
            $\suff (u) = \pref (v)$; \\
            $I = I + 1$; \\
        }
    }
    {$C = \emptyset$} \\
    \For{$v \in V(G)$}
    {
    	$s_v = \pref (v).I.\suff (v)$; \\
    	$C = C \cup \{ s_v \}$; \\
        $I = I + 1$;
    }
}
\Return{$C$;}
\end{algorithm}
Figure~\ref{fig:alg_exec1} provides an example of algorithm execution. Firstly, we calculate a minimum directed edge coloring $L = \{M_1, M_2\}$ of size $p = \max\{\Delta^+(D), \Delta^-(D)\} = 2$ and then we process the edges one by one. The maximal length of string after processing all edges and merging is $5$, showing that $r(D) \leq 5$
\bigskip
\bigskip
\begin{figure}[h!]
\begin{center}
\includegraphics[width=1\linewidth]{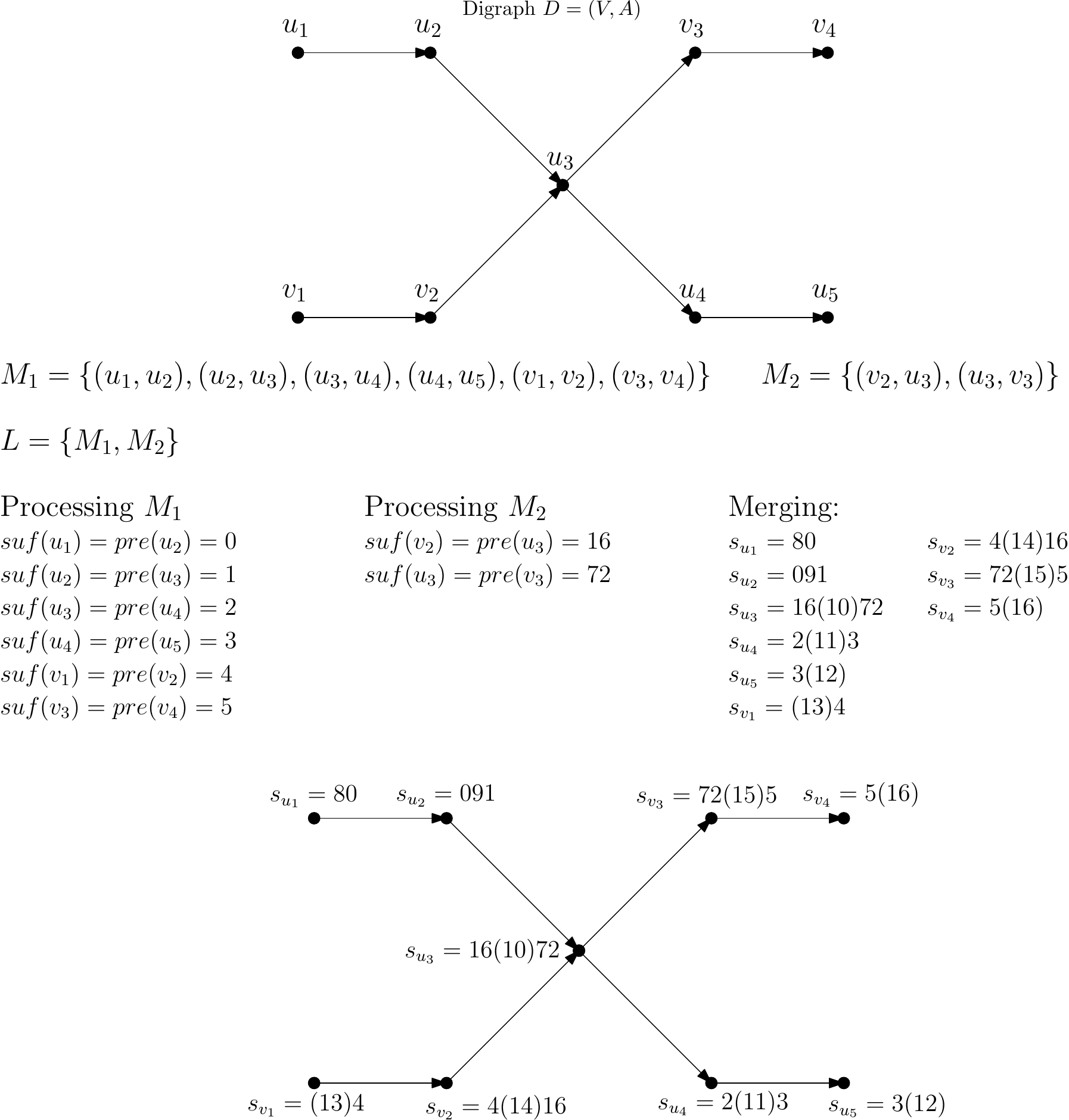}
\end{center}
\caption{Execution of Algorithm~\ref{algorithm11} on digraph $D$.}\label{fig:alg_exec1}
\end{figure}

Finally, to prove that the readability is well defined, we have to prove correctness of Algorithm~\ref{algorithm11}.
\begin{proof}[Proof of Theorem~\ref{thm:wellD}]
We have to prove few things:\begin{enumerate}
\item $uv \in E(G) \iff \exists i \in \mathbb{N}$ s.t. $\suff (s_u,i) = \pref (s_v,i)$ and
\item $\forall s_u \in C$, $|s_u| \leq 2^{p+1} - 1$.
\end{enumerate}
First, we will prove that if $e = uv \in E(G)$ then there exists a positive integer $i$ such that after the execution of algorithm $\suff (s_u,i) = \pref (s_v,i)$.
If $uv \in E(G)$ then $\exists i \in \{1,2,\dots,p\}$ s.t. $uv \in M_i$ since $L$ is a directed edge coloring of $G$.
Thus the edge is processed in steps $7,8,9$. In step $8$ an overlap of positive length is created between partial strings assigned to vertices $u$ and $v$. In further execution of the algorithm, we only extend $\suff (u)$ to the left and only extend $\pref (v)$ to the right, which means that this overlapping will exist at the end of the execution of the algorithm.

Now, we prove the opposite direction: if there is a positive integer $l$ s.t. $\suff (s_u,l) = \pref (s_v, l)$ then $uv \in E(G)$.
Let $i=d^{+}(u)$ and order the out neighborhood of $u$ as $N^{+}(u) = \{v_1, \dots, v_i\}$ according to the order in which the edges $uv_i$ are processed by the algorithm, meaning that $uv_1$ is first processed, $uv_2$ second and so on.
Denote by $I(u,v_i)$ the value of $I$ in the algorithm right before the edge $uv_i$ is processed. Observe that $I(u,v_i) > I(u,v_{i-1}) > \dots > I(u,v_1)$ which follows from the definition of the ordering of the vertices in the out neighborhood of $u$.
Since we chose $I(u,v_i)$ as a unique character, it appears exactly once in $\suff (u)$ after the execution of the algorithm.
Moreover, $I(u,v_i)$ is the largest character in $\suff (u)$.
Similarly, since we extend $\suff(u)$ only to the left, we conclude that for $1 \leq i' < i$, character $I(u,v_{i'})$ appears exactly once and  is the largest among all characters to the right of $I(u,v_{i'+1})$.
From the other side, let $j=d^{-}(v)$ and order neighborhood $N^{-}(v) = \{u_1, \dots, u_j\}$ according to the order in which the edges $u_iv$ are processed by the algorithm.
Similarly as above, one can conclude that largest integer in any prefix $\pref (v)$ is one of the unique characters, i.e. of the form $I(u_{j'},v)$ for some $j' \in \{1,\dots, j\}$. Now, if there is a positive integer $l$ such that $s = \suff (s_u,l) = \pref (s_v,l)$ then by the above, the largest character in $s$ is of the form $I(u,v_{i'})$ for some $i' \in \{1,2,\dots, i\}$ and of the form $I(u_{j'},v)$ for some $j' \in \{1,2,\dots, j\}$ and since there is an overlap, they must be the same. Thus, $I(u,v_{i'})=I(u_{j'},v) \implies u = u_{j'}$ and $v = v_{i'}$ which shows that $uv \in E(G)$.

Here, we will prove inductively that $\forall u \in V(G)$ and for $s_u \in C$ we have $|s_u| \leq 2^{p+1} -1$ where $s_u$ denotes a string assigned to vertex $u$. Denote by $size(i)$ for $i \in \{0,1,\dots,p\}$ the maximum length of a strings assigned to $\pref(u)$ and $\suff (u)$ for $u \in V(G)$ right after processing all edges from $M_i$. size(0) = 0 since before processing $M_1$ we have only empty strings.
During the processing the edges of $M_i$, the strings $\pref (v)$ and $\suff (v)$, where $v$ is head or tail of a edge, are modified at most once since $M_i$ is directed matching and by definition there is no vertex which is head for two edges and there is no vertex which is tail for two edges. Observe that during processing edge $uv \in M_i$ we only modify $\pref (v)$ by concatenating strings obtained in previous step $i-1$. Thus, we have $size(i) = 2\cdot size(i-1) + 1$. We use the above fact and induction on $i$ to prove $\size(i) \leq 2^i - 1$:
\begin{enumerate}
	\item (basic step) $i=0$: $0 = \size(0) \leq 2^0 - 1 = 0$.
	\item (induction step) $i \rightarrow i+1$: $\size(i+1) \leq 2\cdot \size(i) + 1 \leq 2\cdot (2^i - 1) + 1 = 2^{i+1} - 1$.
\end{enumerate}
By the above proof, we have $size(p) \leq 2^p -1$, which means that the maximum length of strings assigned to $\pref(u)$ or $\suff(u)$ after the execution of the algorithm is at most $2^p -1$ for every vertex $u \in V(G)$. Since at the end we concatenate $\pref (u)$ and $\suff (u)$ with additional letter, we have $|s_u| \leq 2\cdot (2^p -1) +1 = 2^{p+1} -1$.
\end{proof}

The running time of the Algorithm~\ref{algorithm11} is $O(2^p(n+m))$ where $m = |E(G)|$ and $n = |V(G)|$. First, in steps $1,2,3$ we initialize the values of $\pref (v)$ and $\suff (v)$. This is done in linear time in the number of vertices, that is $O(n)$. Then, we compute a directed edge coloring in step $4$. By~\cite{BipartiteEdgeColoring} this can be done in time $O(pm)$. The time needed for steps $6-10$ can be bounded by $\sum_{i=1}^{|L|}{|M_i|\cdot (2^{p+1}-1)}$. The factor $2^{p+1}-1$ appears because we have to iterate through all characters of strings assigned to the vertices and by previous proof $|s_u| \leq 2^{p+1}-1$. Thus, $\sum_{i=1}^{|L|}{|M_i|\cdot (2^{p+1}-1)} = (2^{p+1}-1)\cdot \sum_{i=1}^{|L|}{|M_i|} = (2^{p+1}-1)\cdot m \in O(2^pm)$. A similar reasoning applies to steps $12,\dots,15$, for which the running time is $\sum_{i=1}^{|V(G)|}{(2^{p+1}-1)} \in O(2^pn)$. Summing up everything, we get that the time complexity of the algorithm is given by $O(2^p(n+m))$.

As we can see, to have the algorithm working correctly we must have that the size of alphabet is at least $|V(G)|+|E(G)|$. Braga and Meidanis~\cite{BragaMeidanis} showed that given a directed graph $G$ and an alphabet $\Sigma'$ with at least two different symbols, one can compute a set $C'$ of strings over $\Sigma'$ such that the overlap graph of $C'$ is $G$ in two steps:
\begin{enumerate}
\item Using Algorithm~\ref{algorithm11} compute a set of strings $C$ written over alphabet $\Sigma$ s.t. $|\Sigma| = n+m$, where the overlap graph of $C$ is $G$.
\item Map each string of $C$ into another string to obtain a set $C'$.
\end{enumerate}
The maximum length of the string in $C'$ is bounded from above by the maximum length of strings in $C$ multiplied by a factor of $O(\log_{|\Sigma'|}(n+m))$.

At the expense of possibly increasing the alphabet size, the algorithm could be modified to produce strings of the same length. In the last steps during concatenation of $\pref(v)$ and $\suff(v)$ i.e., in step 13, instead of one unique character $I$, one can add arbitrary many unique characters between $\pref(v)$ and $\suff(v)$ to achieve desired length.

\chapter{Readability of bipartite graphs and an integer programming formulation}

\section{Readability of bipartite graphs}

In the first chapter we have seen how the readability is defined for directed graphs. Here, we will define and study the readability of bipartite graphs in order to try to understand behavior of this parameter of digraphs.
This is possible at least approximately (see Theorem~\ref{thm:rdb_assymp}).

Given two finite sets of finite strings $S_s, T_s$, the bipartite overlap graph of $(S_s,T_s)$ is the bipartite graph $G = (S\cup T, E)$ with parts $S = \{u_s : s \in S_s\}$ and $\{v_t : t \in T_s\}$ such that $u_sv_t \in E(G)$ if and only if $ov(s,t)>0$.
A \textit{labeling} of a bipartite graph $G$ is function $\ell$ assigning a string to each vertex of $G$ such that all string have the same length. The \textit{length of a labeling} of a bipartite graph is defined and denoted the same as the length of labeling of a digraph. An \textit{overlap labeling} of a bipartite graph $G$ is a labeling of $G$ such that for all $u\in S$ and $v\in T$, we have $uv \in E(G)$ if and only if $ov(\ell(u),\ell(v))>0$.

\begin{definition}
The \emph{readability} of a bipartite graph $G = (S\cup T, E)$ is the minimum non-negative integer $k$ such that there is an overlap labeling $\ell$ of $G$ with $len(\ell) = k$.
\end{definition}
Observe that we do not request the labeling to be injective, which is not the case for digraphs. Moreover, we only care about overlaps ``in one direction", from $S$ to $T$. This gives us more flexibility in the study of readability.
Also, the assumption that the strings assigned to vertices by a labeling are of the equal length is without loss of generality (which is not the case for digraphs, see Chapter 4).

Denote by $D_n$ the set of all digraphs with vertex set $[n]:=\{1,\ldots,n\}$ and by $B_{n\times n}$ the set of all balanced bipartite graphs with a copy of $[n]$ in each part of the bipartition. The next theorem shows that as long as we are interested in the readability approximately, we can focus our attention to balanced bipartite graphs instead of to digraphs.
\newline
\begin{theorem}[Chikhi at al.~\cite{MilanicRayanMedvedev}]\label{thm:rdb_assymp}
There exists a bijection $\phi : B_{n\times n} \to D_n$ such that for each graph $H \in B_{n\times n}$ and $G \in D_n$ with $\phi(H) = G$ the following holds: $r(G) < r(D) \leq 2\cdot r(G) + 1$.
\end{theorem}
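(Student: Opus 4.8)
The plan is to construct the bijection $\phi$ explicitly by reversing the construction used in Theorem~\ref{thm:dec}, and then to argue the two inequalities separately, the lower bound being almost immediate and the upper bound requiring a labeling-transfer argument. Given a balanced bipartite graph $H = (S \cup T, E)$ with $S$ and $T$ each a copy of $[n]$, define the digraph $G = \phi(H)$ on vertex set $[n]$ by putting the arc $(i,j)$ into $G$ exactly when $i \in S$ and $j \in T$ are adjacent in $H$. This is visibly a bijection $B_{n\times n} \to D_n$: from $G$ one recovers $H$ by the "splitting" operation $v \mapsto (v', v'')$ from the proof of Theorem~\ref{thm:dec}, and the two operations are mutually inverse. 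So the content of the theorem is the sandwich $r(G) < r(D) \le 2 r(G) + 1$ — where I read $D = \phi(H) = G$, so the claim is really $r(H) < r(\phi(H)) \le 2 r(H) + 1$, i.e.\ the digraph readability is, up to a factor of two, the bipartite readability of its split.

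For the upper bound $r(G) \le 2 r(H) + 1$: start from an optimal overlap labeling $\ell$ of the bipartite graph $H$ of length $k = r(H)$, assigning to each $i \in S$ a string $\ell(i')$ and to each $j \in T$ a string $\ell(j'')$, all of length $k$. I would build an overlap labeling $m$ of the digraph $G$ by setting $m(v) := \ell(v'') . c_v . \ell(v')$, i.e.\ concatenating the "in-string" of $v$, then a private separator character $c_v$ unique to $v$, then the "out-string" of $v$. Then a suffix of $m(u)$ meeting a prefix of $m(v)$ in a positive-length overlap: since all separators are distinct and occur once, and the out-blocks/in-blocks have fixed length $k$, any overlap must be confined to $\suff(m(u)) = \ell(u')$ meeting $\pref(m(v)) = \ell(v'')$, which happens iff $ov(\ell(u'),\ell(v'')) > 0$ iff $u'v'' \in E(H)$ iff $(u,v) \in A(G)$. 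The separators also force the labeling to be injective. Each $m(v)$ has length $2k + 1$, giving $r(G) \le 2 r(H) + 1$. Some care is needed to rule out spurious overlaps that straddle a separator — this is where the uniqueness of $c_v$ and the equal lengths of the blocks do the work, and it is the one place where a short but genuine argument is required.

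For the strict lower bound $r(H) < r(G)$: take an optimal injective overlap labeling $m$ of the digraph $G$ of length $K = r(G)$. Define a bipartite labeling of $H$ by $\ell(v') := m(v)$ and $\ell(v'') := m(v)$ — or, if equal lengths or the "one direction only" convention need adjusting, a trivial truncation/padding — so that $u'v'' \in E(H)$ iff $(u,v) \in A(G)$ iff $ov(m(u), m(v)) > 0$ iff $ov(\ell(u'), \ell(v'')) > 0$; this is an overlap labeling of $H$ of length $\le K$, so $r(H) \le r(G)$. To get strict inequality one observes that a bipartite overlap labeling need not be injective and only constrains overlaps from $S$ to $T$, so one can always shave off one character: e.g.\ from an overlap labeling of $H$ of length $K$ one produces one of length $K - 1$ by deleting the first character of every $T$-string and the last character of every $S$-string, checking that this cannot destroy a required overlap nor create a forbidden one — hence $r(H) \le K - 1 < K = r(G)$. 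The main obstacle is getting this last truncation step exactly right: one must verify that an overlap of length $\ge 1$ in the trimmed strings corresponds precisely to an overlap of length $\ge 2$ in the original, and that every edge of $H$ could in fact be realized with an overlap of length $\ge 2$ in an optimal-length labeling (or otherwise handle the length-$1$ overlaps separately). I expect this to be the delicate part of the proof, whereas the bijection and the upper bound are routine.
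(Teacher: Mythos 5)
The paper itself does not prove Theorem~\ref{thm:rdb_assymp} --- it only cites \cite{MilanicRayanMedvedev} --- so your proposal has to be judged on its own merits. Its skeleton is the natural one and matches the cited source in outline: the statement as printed is garbled (the quantity $r(D)$ is undefined), and your reading $r(H) < r(G) \leq 2\, r(H)+1$ is the sensible one; the splitting bijection is correct; the separator construction $m(v)=\ell(v'')\, c_v\, \ell(v')$ does give $r(G)\leq 2\,r(H)+1$, since the private characters $c_v$ exclude any overlap of length greater than $r(H)$ between labels of distinct vertices and force injectivity; and copying an optimal digraph label to both $v'$ and $v''$, padded to equal length by fresh characters prepended to $S$-strings and appended to $T$-strings, gives $r(H)\leq r(G)$.

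The genuine gap is your strictness step. First, the trimming goes in the wrong direction: deleting the \emph{last} character of an $S$-string and the \emph{first} character of a $T$-string neither preserves required overlaps (take $\ell(u')=xa$, $\ell(v'')=ab$ with overlap $a$; after trimming, $x$ and $b$ do not overlap) nor avoids creating forbidden ones ($ab$ and $ca$ do not overlap, but after trimming $a$ and $a$ do), so the claimed correspondence between overlaps of length at least $1$ after trimming and of length at least $2$ before is false. The correct direction --- delete the first character of each $S$-string and the last character of each $T$-string, so that trimmed $S$-strings are suffixes and trimmed $T$-strings are prefixes of the originals --- preserves every overlap of length at most $K-1$ and creates none, but it destroys edges whose only overlap has the full length $K$, and such edges cannot in general be avoided: in $K_{n,n}$ the optimal labeling assigns the single character $a$ to every vertex, every edge is realized only by a full-length overlap, and indeed $r(K_{n,n})=1$ cannot be lowered. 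Hence strict inequality cannot be obtained by shaving a character off an arbitrary optimal bipartite labeling; any proof of $r(H)<r(G)$ must exploit that the bipartite labeling was transferred from an \emph{injective} digraph labeling (or argue separately that equality is impossible). As written, your argument establishes only $r(H)\leq r(G)\leq 2\,r(H)+1$.
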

A proof of Theorem~\ref{thm:rdb_assymp} can be found in~\cite{MilanicRayanMedvedev}.

\section{Variables of the ILP}
Suppose we are given a balanced bipartite graph $G = (S \cup T, E)$ and a positive integer $r$. Consider the following decision problem: does $G$ have readability $r(G) \leq r$? To answer this question, we will describe an integer linear program that has a feasible solution if and only if $r(G) \leq r$.

For every pair of vertices $u \in S$ and $v \in T$ and for each $i,j \in \{1,2,\dots, r\}$ we introduce a binary variable $x_{u,v,i,j}$ which is equal to $1$ if the $i$th character of the string assigned to vertex $u$ is equal to $j$th character of the string assigned to vertex $v$ and equal to $0$ if the above characters are different. Furthermore, for each edge $e = uv \in E(G)$ and for each $i \in \{1,2,\dots, r\}$ we define a binary variable $z_{e,i}$ which is equal to $1$ if there is a overlapping of the size $i$ between vertices $u$ and $v$, i.e., $x_{u,v,i-k+1,k} = 1$ $\forall k \in \{ 1,\dots,i \}$. The variable $z_{e,i}$ is equal to $0$ if the above is not true.
\section{Constraints of the ILP}
We add the constraints listed below.
\begin{itemize}
\item \textbf{Edge constraints:} for each edge $e = uv \in E(G)$, $u \in S$ and $v \in T$, add the following constraints:
$$
\sum_{i=1}^r{z_{e,i}} \geq 1,
$$
$$
\sum_{k=i}^{r}{x_{u,v,k,k-i+1}} \geq (r-i+1) \cdot z_{e,i} \quad \text{for all } i \in \{1,\dots, r\}.
$$
\item \textbf{Non-edge constraints:} for each $u \in S$ and $v \in T$ such that $uv \not \in E(G)$ add the following constraints:
$$
\sum_{k=i}^{r}{(1-x_{u,v,k,k-i+1})} \geq 1 \quad \text{for all } i \in \{1, \dots, r\}.
$$
\item \textbf{Transitivity constraints: } for each $u,w \in S$ such that $u \neq w$ and for each $v,q \in T$ such that $v \neq q$ and for each $i,j,k,l \in \{1,\dots, r\}$ add
$$
x_{u,v,i,j} + x_{w,v,k,j} + x_{w,q,k,l} - x_{u,q,i,l} \leq 2
$$
\end{itemize}
Constraints of the first type ensure that we will have an overlap between adjacent vertices. Constraints of the second type ensure that there is no overlap between non-adjacent vertices. Constraints of the third type (\textit{transitivity constraints}) we introduce to ensure the following implication: if $u(i) = v(j)$ and $v(j) = w(k)$ and $w(k) = q(l) \implies u(i) = q(l)$.
\section{Objective function of the ILP and an example}
Since we are looking only for a feasible solution the objective function can be arbitrary, for example the constant $0$. Now, we have completed the description of the integer linear program.
An example of the defined variables and constraints is given in Figure~\ref{fig:vars}.
\begin{figure}[h!]
\begin{center}
\includegraphics[width=1\linewidth]{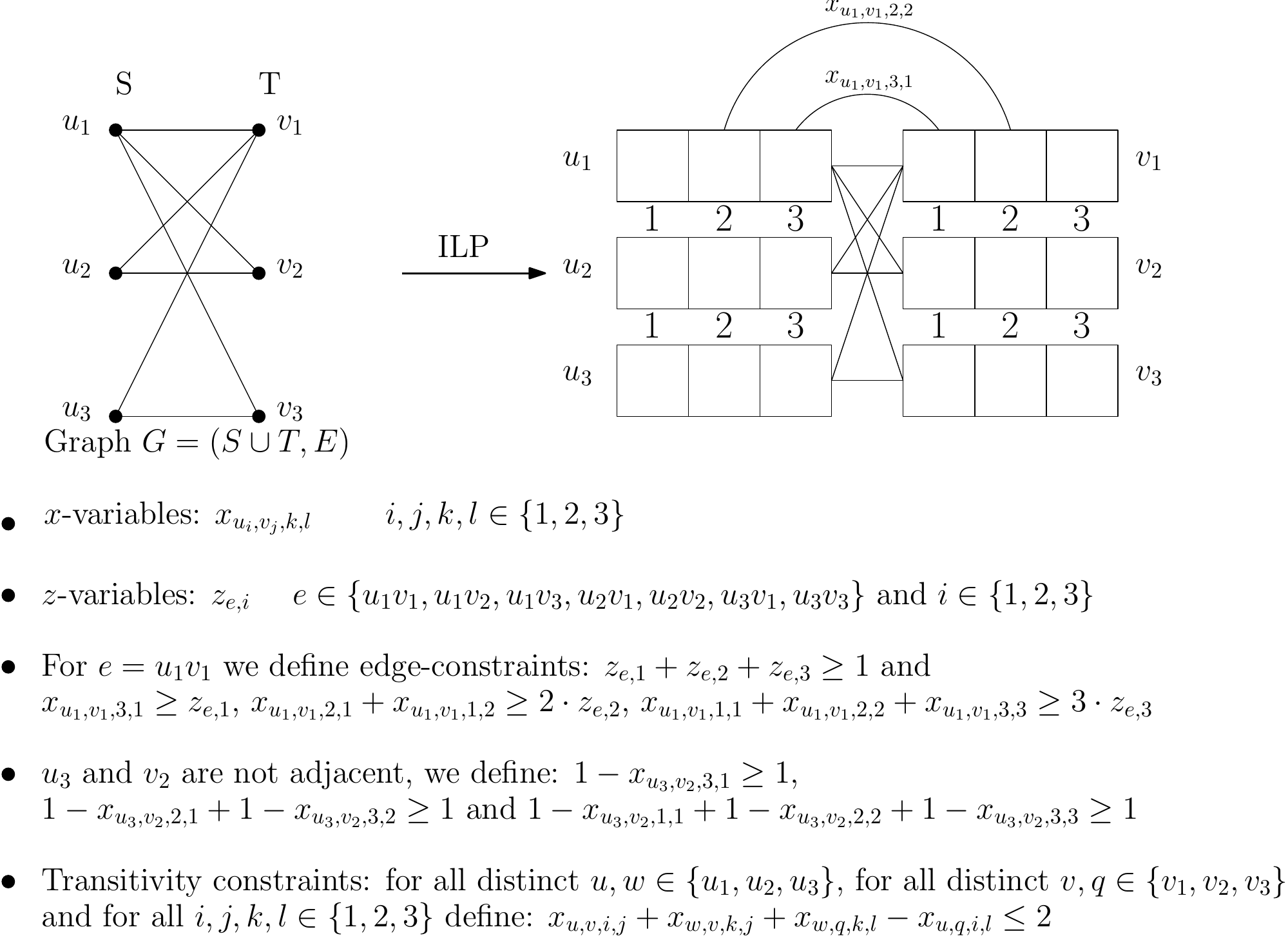}
\end{center}
\caption{Example of defined variables and constraints for the ILP with given graph $G = (S\cup T, E)$ and integer $r = 3$.}\label{fig:vars}
\end{figure}

\section{Proof of correctness}
\begin{theorem}\label{thm:model}
The above defined integer linear program has a feasible solution if and only if given graph is of the readability at most $r$.
\end{theorem}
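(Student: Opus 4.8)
The plan is to prove the two implications separately. Throughout, recall that a labeling assigns to each vertex a string of length exactly $r$, and that an overlap labeling requires $ov(\ell(u),\ell(v)) > 0$ exactly on the edges from $S$ to $T$.

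First I would handle the ``only if'' direction: assuming $r(G) \le r$, produce a feasible solution of the ILP. Fix an overlap labeling $\ell$ of $G$ with $len(\ell) \le r$ (padding all strings to length exactly $r$ if necessary, which does not change overlaps ``from $S$ to $T$'' since padding is done on, say, the right of $S$-strings and the left of $T$-strings — or more simply, one invokes that equal length is WLOG as noted in the text). Set $x_{u,v,i,j} := 1$ iff $u(i) = v(j)$, and $z_{e,i} := 1$ iff there is an overlap of size exactly $i$ between $u$ and $v$ (i.e.\ $\suff(\ell(u),i) = \pref(\ell(v),i)$). Then I verify each family of constraints: the edge constraints hold because an edge $uv$ has some overlap, giving $\sum_i z_{e,i} \ge 1$, and whenever $z_{e,i}=1$ all $i$ relevant character-equalities hold so the corresponding $x$-sum attains its maximum $r-i+1$; the non-edge constraints hold because a non-edge has no overlap of any size $i$, so for each $i$ at least one of the required character matches fails, making some $1-x$ term equal to $1$; the transitivity constraints hold because character-equality is transitive. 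This direction is essentially bookkeeping.

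The harder and more interesting direction is ``if'': from a feasible ILP solution, reconstruct an actual overlap labeling of length $r$. The key idea is that the variables $x_{u,v,i,j}$, together with the transitivity constraints, define an equivalence-like relation on the set of ``positions'' $P = \{(u,i) : u \in S, i \in [r]\} \cup \{(v,j) : v \in T, j \in [r]\}$, and each equivalence class becomes one alphabet symbol. Concretely, I would declare $(u,i) \sim (v,j)$ when $x_{u,v,i,j}=1$ (for $u\in S$, $v\in T$), and then take the transitive-symmetric-reflexive closure. The main obstacle is that the ILP only constrains $x$ for cross pairs (one endpoint in $S$, one in $T$), so I must argue the closure is still well-behaved: a path in the relation alternates between $S$- and $T$-positions, and the transitivity constraint $x_{u,v,i,j} + x_{w,v,k,j} + x_{w,q,k,l} - x_{u,q,i,l} \le 2$ is exactly what is needed to collapse a length-three alternating path (and hence, inductively, any alternating path) — so two $S$-positions linked through a common $T$-position get identified consistently, and vice versa. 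Once the classes are defined, assign to vertex $w$ the string whose $i$th character is the class of $(w,i)$; this is the labeling.

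Finally I would check this labeling is an overlap labeling. For an edge $e=uv$, feasibility gives some $i$ with $z_{e,i}=1$, and the second edge constraint then forces $x_{u,v,k,k-i+1}=1$ for all $k \in \{i,\dots,r\}$, i.e.\ $u(k) = v(k-i+1)$ along the whole diagonal, which says $\suff(\ell(u),i) = \pref(\ell(v),i)$, so $ov(\ell(u),\ell(v)) > 0$. For a non-edge $uv$ and any candidate overlap size $i$, the non-edge constraint gives some $k \in \{i,\dots,r\}$ with $x_{u,v,k,k-i+1}=0$; I must also confirm that $x_{u,v,k,k-i+1}=0$ genuinely implies $u(k) \ne v(k-i+1)$ in the constructed labeling — this is where it matters that distinct classes get distinct symbols and that $x=0$ was not ``accidentally'' consistent with an equality forced by transitivity; this follows because if $u(k) = v(k-i+1)$ then $(u,k)$ and $(v,k-i+1)$ lie in the same class, which by the definition of the closure means $x_{u,v,k,k-i+1}=1$, a contradiction. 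Hence no overlap of any size $i$ exists, so $ov(\ell(u),\ell(v)) = 0$. This shows $\ell$ is an overlap labeling of length $r$, giving $r(G) \le r$ and completing the proof. The one subtlety to watch is making the equivalence-closure argument airtight — ensuring the transitivity constraints really do prevent any inconsistency when closing under transitivity through chains that pass through $S$ repeatedly or $T$ repeatedly — and I would isolate that as a small lemma before the main argument.
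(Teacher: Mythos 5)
Your proposal is correct and takes essentially the same route as the paper: the forward direction is the same bookkeeping (set $x_{u,v,i,j}=1$ iff $u(i)=v(j)$, $z_{e,i}=1$ iff the full diagonal of matches holds), and your equivalence-class construction on positions is just a cleaner packaging of the paper's greedy procedure, which starts from all-null strings and propagates characters along $x=1$ links, with the transitivity constraints doing the same consistency work in both. The chain-collapsing lemma you flag as needing care is precisely the point the paper also handles only briefly (via a minimal-counterexample argument for well-definedness), so your outline does not diverge from the paper's proof in substance.
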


\begin{proof}
($\impliedby$) Let $G = (S \cup T, E)$ be a balanced bipartite graph of readability at most $r$ and let $S_s,T_s$ be a sets of strings with $\forall s \in S_s\cup T_s$ $|s| \leq r$ such that the overlap graph of $(S_s,T_s)$ is isomorphic to $G$. We may assume without loss of generality that $|s| = r \quad \forall s \in S_s\cup T_s$ since if $\exists s \in S_s \cup T_s$ such that $|s|<r$ we can easily find a sets of strings $S_s',T_s'$ by extending each string of $S_s \cup T_s$ with a new characters, such that for every $s' \in S_s'\cup T_s'$ $|s'| = r$ and the overlap graph of $(S_s',T_s')$ is isomorphic to $G$. We have to find an assignment to the variables such that each constraint is satisfied. As the definition of variables $x_{u,v,i,j}$ suggests, let $x_{u,v,i,j}$ be equal to $1$ if $u(i) = v(j)$, and $0$ otherwise. For each edge $e = uv \in E(G)$ and for each $i \in \{1,\dots, r\}$ let $z_{e,i} = 1$ if $\sum_{k=i}^{r}{x_{u,v,k,k-i+1}} = r-i+1$ and $0$ otherwise. Now, let $uv \in E(G)$. Since $G$ is the overlap graph of $(S_s,T_s)$, there exists some $i \in \{1,2,\dots, r\}$ such that $\suff (u,i) = \pref (v,i)$, i.e., $\forall k \in \{i, i+1, ..., r\}$, $x_{u,v,k,k-i+1} = 1 \implies \sum_{k=i}^{r}{x_{u,v,k,k-i+1}} = r-i+1 \implies z_{e,i} = 1 \implies z_{e,1} + z_{e,2} + ... + z_{e,r} \geq 1$ is satisfied. We also need to prove that the {\textit edge constraints} of the second and third type are satisfied. If $0$ is assigned to $z_{e,i} $ for some $i \in \{1,\dots,r\}$ then $(r-i+1) \cdot z_{e,i} \leq \sum_{k=i}^{r}{x_{u,v,k,k-i+1}}$ since all variables are non-negative. If $1$ is assigned to $z_{e,i}$ for some $i \in \{1,\dots,r\}$ then $\forall k \in \{i,\dots r\}$ $x_{u,v,k,k-i+1} = 1 \implies \sum_{k=i}^{r}{x_{u,v,k,k-i+1}} = r-i+1 \geq (r-i+1)\cdot z_{e,i}$.\textit{Edge constraints} of the third type are also satisfied: if $z_{e,i} = 0$ for some $i \in \{1,\dots, r\}$ then $\exists k \in \{i, \dots, r\}$ s.t. $x_{u,v,k,k-i+1} = 0 \implies \sum_{k=i}^{r}{x_{u,v,k,k-i+1}} \leq r-i$.
If $z_{e,i} = 1$ for some $i \in \{1,\dots,r\}$ then $\sum_{k=i}^{r}{x_{u,v,k,k-i+1}} = r-i+1 = r-i + z_{e,i}$.

For each $u \in S$ and $v \in T$ such that $uv \not \in E(G)$ and for each $i \in \{1,\dots, r\}$ among variables $x_{u,v,k,k-i+1}$ for $k \in \{i,i+1, ..., r\}$ there exists at least one variable $x_{u,v,l,l-i+1}$ for some $l \in \{i, i+1, ..., r\}$ such that $x_{u,v,l,l-i+1} = 0$, otherwise, if all of them are equal to $1$, then by the definition of the $x$-variables, there will be an overlap between vertices $u$ and $v$ which would be contradicting the fact that $G$ is overlap graph of $(S_s, T_s)$.
Then $\sum_{k=i}^{r}{(1-x_{u,v,k,k-i+1})} \geq 1$ is satisfied since $1 - x_{u,v,l,l-i+1} = 1$ and $1-x_{u,v,k,k-i+1} \geq 0$ for all $k \in \{i,\dots, r\}$.
Since this is true for an arbitrary $i$, all non-edge constraints are satisfied.

Transitivity constraints are obviously satisfied: if $x_{u,v,i,j} = 1$ and $x_{w,v,k,j} = 1$ and $x_{w,q,k,l}=1$ then $u(i) = v(j) = w(k) = q(l)$ which means that also $x_{u,q,i,l} = 1$. Thus $ x_{u,v,i,j} + x_{w,v,j,k} + x_{w,q,k,l} - x_{u,q,i,l} \leq 2$ is satisfied.
In all other cases, when at least one among variables $x_{u,v,i,j}, x_{w,v,j,k}, x_{w,q,k,l}$ is equal to $0$, the sum $x_{u,v,i,j} + x_{w,v,j,k} + x_{w,q,k,l}$ is at most $2$ and since $x_{u,q,i,l}$ is non-negative variable we conclude $ x_{u,v,i,j} + x_{w,v,j,k} + x_{w,q,k,l} - x_{u,q,i,l} \leq 2$.

($\implies$) Suppose now that our integer linear program has a feasible solution. We want to prove that $r(G) \leq r$.
Assign a string of length $r$ consisting of null characters i.e., character $*$, to each vertex $u \in V(G)$ and suppose that vertices of $S$ and $T$ are linearly ordered so that we have $S = \{u_1, \dots, u_n\}$ and $T = \{v_1, \dots, v_n\}$. Then iterate over $u \in \{u_1, u_2, \dots, u_n\}$ and $v \in \{v_1, v_2, \dots, v_n\}$ and over all indices $i,j \in \{1,\dots,r\}$ and do the following:
\begin{itemize}
\item if $x_{u,v,i,j} = 1$ and $u(i) = v(j) = *$, take a character $c$ not yet used and assign $c$ as the $i$th character of $u$ and $j$th character of $v$
\item if $x_{u,v,i,j} = 1$ and $u(i) \neq *$, $v(j) = *$ then assign $u(i)$ as the $j$th character of $v$
\item if $x_{u,v,i,j} = 1$ and $v(j) \neq *$, $u(i) = *$ then assign $v(j)$ as the $i$th character of $u$.
\end{itemize}
We have to prove that the overlap graph of the set of strings obtained this way is isomorphic to $G$.
First, we will prove that such assignment is well defined.
To do this, it is enough to prove that $u(i) \neq v(j)$ and $u(i),v(j) \neq *$ is impossible, i.e., that the $i$th character of $u$ and the $j$th character of $v$ are not both different from each other and from the null character at same time.
We want to prove that such a situation leads to a contradiction.
Suppose $x_{u,v,i,j} = 1$ for some tuple $u,v,i,j$ where $u \in S$, $v \in T$ and suppose $u(i) \neq v(j) \neq *$ and suppose that this occurs for the first time i.e. $u,v$ are minimal.
Then, there exist $w \in S, w < u$, $k \in \{1,\dots r\}$ and $q \in T, q < v$, $l \in \{1,\dots,r\}$ such that $x_{u,q,i,l} = 1$ and $x_{w,v,k,j} = 1$.
Then we have $x_{u,v,i,j} + x_{u,q,i,l} + x_{w,v,k,j} - x_{w,q,k,l} \leq 2 \implies x_{w,q,k,l} = 1$ since all transitivity constraints are satisfied.
This means that it has already happened that two characters are different and the variable denoting their equality is equal to $1$.
This is the contradiction with assumption that $u,v$ are minimal.
Now, choose any edge $e = uv \in E(G)$.
We want to show that the string assigned to $u$ overlaps the string assigned to $v$.
Since $z_{e,1} + z_{e,2} + ... + z_{e,r} \geq 1$ is satisfied, there exists and index $i \in \{1,2, ..., r\}$ such that $z_{e,i} = 1$.
Then, $(r-i+1)\cdot z_{e,i} = r-i+1 \implies x_{u,v,k,k-i+1} = 1 \quad \forall k \in \{i, i+1, ..., r\}$ otherwise edge constraints of the second type would not be satisfied. This means that $u(i) = v(1)$, $u(i+1)=v(2)$, ..., $u(r) = v(i)$.
Therefore, $u$ overlaps $v$. Suppose that for some vertices $u \in S$ and $v \in T$ $uv \not \in E(G)$. Then, we have to prove that $u$ does not overlap $v$.
Since all non-edge constraints are satisfied, for each $i \in \{1,\dots, r\}$ there exists some $k \in \{i, i+1,\dots, r\}$ such that $x_{u,v,k,k-i+1} = 0$.
This means that in our procedure $u(k)$ and $v(k-i+1)$ will be different since $x_{u,v,k,k-i+1} = 0$ which means that there is no overlap of size $r-i+1$ between $u$ and $v$.
The above arguments show that $G$ is overlap graph of the constructed set $C$ which completes the proof.
\end{proof}
\section{Size of the ILP}
Let us compute the number of variables and constraints of the derived ILP for a given balanced bipartite graph $G$ with $n$ vertices on both sides and $m$ edges.
We defined variables $x_{u,v,i,j}$ for each pair of vertices $u \in S$ and $v \in T$ and for each pair of indices over $\{1,\dots,r\}$.
Thus, we have $n\cdot n \cdot r \cdot r = n^2r^2$ $x$-variables.
Also, for each edge $e$ and for each index from the set $\{1,\dots,r\}$ we defined a variable $z_{e,i}$ which gives us $|E(G)|\cdot r$ $z$-variables.
In total, we have $n^2r^2 + r|E(G)|$ variables.

For each edge, we defined one constraint consisting of variables $z$ and $2r$ constraints of the second type.
For each non-edge we defined $r$ constraints.
We defined one constraint for each $u,w \in S$, $v,q \in T$, with $u\neq w$ and $v \neq q$, and for each $i,j,k,l \in \{1, \dots, r \}$ which gives us ${n\choose 2} \cdot {n\choose 2} \cdot r^4$.
In total, we have $(1 + 2r) m + r(n^2-m) + {n\choose 2}{n\choose 2}r^4 = m + r n^2 + r m + {n\choose 2}{n\choose 2}r^4$ constraints.

Observe that the number of transitivity constraints can be huge for graphs with large number of vertices. So, unfortunately, in practice this model can be used only for small graphs. For example consider a bipartite graph $G=(V,E)$ with $|V(G)| = n = 10$, $|E(G)| = m = 50$ and $r = 10$. The number of variables is $1050$, the number of constraints is $20251550$.

\chapter{An integer programming formulation for digraphs}

In the bipartite model, we assumed without loss of generality that if a given graph $G$ is of readability at most $r$ then we can find a sets $S_s$ and $T_s$ such that each string in $S_s \cup T_s$ is of the length exactly $r$.
In the model for digraphs, such assumption would not be without loss of generality. An example for this is given in Figure~\ref{fig:digraph_cnt_same_length}.

\begin{figure}[h!]
\begin{center}
\includegraphics[width=0.4\linewidth]{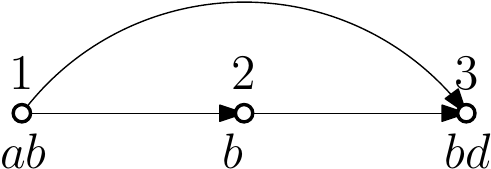}
\end{center}
\caption{Digraph of readability 2 for which there is no overlap labeling such that all strings from the image of labeling are of length 2.}\label{fig:digraph_cnt_same_length}
\end{figure}

\section{Variables of the ILP}

Let a digraph $D = (V,A)$ and a positive integer $r$ be given. Consider the same decision problem as above: is $D$ of readability at most $r$?
To be able to answer this question, we introduce an integer linear program.
The idea is similar to the integer linear program defined for balanced bipartite graphs except that we need new variables modeling how long the individual strings are and that we need to assure injectivity of the labeling.
For every two distinct vertices $u,v \in V(D)$ and for every two indices $i,j \in \{1,\dots,r\}$ we introduce a binary variable $x_{u,v,i,j}$ which is equal to $1$ if the $i$th character of the string assigned to vertex $u$ is equal to the $j$th character of the string assigned to $v$ and not equal to the null character, i.e., $u(i) = v(j)\neq *$, and $0$ otherwise.
Also, for every vertex $u \in V(D)$ and for every index $i \in \{1,\dots,r\}$ we introduce a binary variable $t_{u,i}$ which is equal to $1$ if $\forall k \in \{i,\dots,r\}$ $u(k) = *$ and zero otherwise. Recall that $*$ stands for the null character.
Variable $t_{u,i}$ is equal to $1$ if and only if the characters at positions $i,i+1,\dots,r$ of the string assigned to vertex $u$ are equal to the null character.
The third type of the variables for the ILP are $z$-variables. For every edge $e = uv \in A(D)$, for every index $i \in \{1,\dots,r\}$, and for every index $l \in \{i,\dots,r\}$, we introduce a binary variable $z_{e,i,l}$.
The variable $z_{e,i,l}$ equals to $1$ if $\forall k \in \{i,\dots,l\}$ $x_{u,v,k,k-i+1} = 1$ and $0$ otherwise.

\section{Constraints of the ILP}

\begin{itemize}
\item \textbf{Edge constraints:} for each edge $e = uv \in A(D)$ add the following constraints:
	\begin{enumerate}
		\item \begin{equation}\label{const:e_fst_type}
		\sum_{i=1}^{r}{\sum_{l=i}^{r}{z_{e,i,l}}} \geq 1
		\end{equation}
		\item $\forall i \in \{1,\dots,r\}$, $\forall l \in \{i,\dots,r\}$:
			\begin{equation}\label{const:e_snd_type}
			\sum_{k=i}^{l}{x_{u,v,k,k-i+1}} \geq (l-i+1)\cdot z_{e,i,l}
			\end{equation} and
			\begin{equation}\label{const:e_thrd_type}
			t_{u,l+1} - z_{e,i,l} \geq 0
			\end{equation}
	\end{enumerate}
\item \textbf{Non-edge constraints:} for every two vertices $u,v \in V(D)$ such that $uv \not \in A(D)$ add the following constraints:
	\begin{enumerate}
		\item $\forall i \in \{1,\dots,r-1\}$, $\forall l \in \{1,\dots,i\}$
		\begin{equation}\label{const:ne_fst_type}
		t_{u,i+1} + \sum_{k=l}^{i}{x_{u,v,k,k-l+1}} \leq i-l+1
		\end{equation}
		\item $\forall l \in \{1,\dots,r\}$
		\begin{equation}\label{const:ne_snd_type}
		\sum_{k=l}^{r}{x_{u,v,k,k-l+1}} \leq r-l
		\end{equation}
	\end{enumerate}
\item \textbf{Transitivity constraints: }
	for all pairwise distinct $u,v,w \in V(D)$ and for all indices $i,j,k \in \{1,\dots,r\}$ add the following constraints:
		\begin{equation}\label{const:a_fst_type}
		x_{u,v,i,j} + x_{v,w,j,k} - x_{u,w,i,k} \leq 1
		\end{equation}
\item \textbf{$t$-monotonicity constraints: }
		for all vertices $u \in V(D)$ and for all $i \in \{1,\dots,r-1\}$ add the following:
		\begin{equation}\label{const:a_snd_type}
		t_{u,i} \leq t_{u,i+1}
		\end{equation}
		
\item \textbf{Injectivity constraints: }
	for all distinct vertices $u,v \in V(D)$:
		\begin{equation}
		\sum_{i=1}^r{x_{u,v,i,i}} \leq r-1	
		\end{equation}
\item \textbf{Symmetry constraints: } for all distinct vertices $u,v \in V(D)$ and for all indices $i,j \in \{1,\dots,r\}$:
		\begin{equation}
			x_{u,v,i,j} = x_{v,u,j,i}
		\end{equation}
\item \textbf{Additional constraints: }
for all vertices $u,v \in V(D)$ and for all indices $i,j \in \{1,2,\dots,r\}$:
		\begin{equation}\label{const:a_thrd_type}
		1-x_{u,v,i,j} \geq t_{v,j}
		\end{equation}
		\begin{equation}\label{const:a_fourth_type}
		1-x_{u,v,i,j} \geq t_{u,i}
		\end{equation}
\end{itemize}
\section{Objective function}
We look for a feasible solution of above constraints, so we can define an objective function to be any linear function, for example the constant zero function.

\section{Description of the ILP}

\begin{theorem}
For a given digraph $D = (V,A)$ and a positive integer $r$, the above defined integer program has a feasible solution if and only if $D$ is of readability at most~$r$.
\end{theorem}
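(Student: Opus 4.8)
The proof will follow the same two-direction template as the bipartite case (Theorem~\ref{thm:model}), but with extra bookkeeping to handle string lengths (the $t$-variables) and injectivity. The plan is to prove the two implications separately.

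\textbf{($\Longrightarrow$, feasibility implies small readability).} Assume the ILP has a feasible solution. First I would reconstruct a labeling. As in the bipartite proof, initialize every vertex string to $r$ copies of the null character $*$, fix a linear order on $V(D)$, and sweep over all tuples $(u,v,i,j)$ with $x_{u,v,i,j}=1$, assigning a fresh character whenever both positions are still $*$, and propagating the existing character otherwise. The first key step is to show this assignment is well defined, i.e. one never tries to identify two positions already carrying distinct non-null characters; this is exactly where the transitivity constraints~\eqref{const:a_fst_type} (now in the three-variable symmetric form) together with the symmetry constraints are used, via a ``minimal counterexample'' argument identical in spirit to the bipartite case. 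The second step is to read off the string lengths: I would argue that $t_{u,i}=1$ forces positions $i,\dots,r$ of $u$'s string to stay $*$ (using the additional constraints~\eqref{const:a_thrd_type}--\eqref{const:a_fourth_type}, which prevent any $x$-variable from ``writing'' into a position that $t$ declares null), and the $t$-monotonicity constraints~\eqref{const:a_snd_type} guarantee that the null positions form a suffix, so each string is a genuine word whose true length is the largest index with a non-null character. One then truncates the null suffixes. The third step verifies the overlap structure: for an arc $e=uv$, constraint~\eqref{const:e_fst_type} gives some $z_{e,i,l}=1$, constraint~\eqref{const:e_snd_type} forces the diagonal $x$-variables to be $1$ (so $u$'s suffix matches $v$'s prefix), and crucially constraint~\eqref{const:e_thrd_type} forces $t_{u,l+1}=1$, i.e.\ $u$'s string really ends at position $l$, so the matching block is a true suffix of $u$ — this is the new ingredient absent from the bipartite model. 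For a non-arc, the non-edge constraints~\eqref{const:ne_fst_type}--\eqref{const:ne_snd_type} rule out an overlap of every possible length, again taking string lengths into account. Finally, the injectivity constraints give that for distinct $u,v$ the strings differ in at least one position, so the labeling is injective, and by the definition of $r(D)$ we conclude $r(D)\le r$.

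\textbf{($\Longleftarrow$, small readability implies feasibility).} Assume $r(D)\le r$, so there is an injective overlap labeling $\ell$ with all strings of length at most $r$. Pad each string on the right with null characters $*$ to length exactly $r$ — note, unlike the bipartite case, the padding symbol must be a reserved null symbol, not a fresh real character, precisely because of the digraph example in Figure~\ref{fig:digraph_cnt_same_length}. Then set $x_{u,v,i,j}=1$ iff $u(i)=v(j)\ne *$; set $t_{u,i}=1$ iff positions $i,\dots,r$ of $u$ are all $*$; and set $z_{e,i,l}=1$ iff $x_{u,v,k,k-i+1}=1$ for all $k\in\{i,\dots,l\}$. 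The remaining work is a routine, constraint-by-constraint check that this assignment is feasible: edge constraints follow from the existence of an overlap (whose matched block is a suffix of $u(1)\cdots u(r)$, giving~\eqref{const:e_thrd_type}); non-edge constraints follow from the absence of any overlap; transitivity, symmetry, $t$-monotonicity and the additional constraints follow directly from the definitions of the variables; and the injectivity constraint follows because $\ell$ is injective, so two distinct strings disagree somewhere on positions $1,\dots,r$.

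\textbf{Main obstacle.} The routine direction is the constraint verification bureaucracy, but the genuine difficulty — and the place I would spend the most care — is the well-definedness of the reconstructed labeling in the $(\Longrightarrow)$ direction, now complicated by the interaction between the $x$-variables and the $t$-variables: one must check not only that the transitivity/symmetry constraints prevent conflicting character identifications (as in the bipartite proof) but also that the additional constraints~\eqref{const:a_thrd_type}--\eqref{const:a_fourth_type} make the ``null suffix'' picture consistent, so that the reconstructed words have exactly the lengths dictated by the $t$-variables and the overlap on an arc $uv$ is a true suffix-of-$u$/prefix-of-$v$ overlap rather than an internal coincidence. Getting this bookkeeping exactly right, together with the subtle point that the padding symbol in the converse direction cannot be reused as a real character, is the crux.
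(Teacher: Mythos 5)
The paper itself gives no formal proof of this theorem --- it explicitly omits one and only explains the intended role of each constraint group --- so your plan can only be compared against that sketch and against the bipartite proof of Theorem~\ref{thm:model}. At that level your outline follows exactly the intended route: reconstruct the strings from the $x$-variables, use transitivity and symmetry for well-definedness, use the $t$-variables together with \eqref{const:e_thrd_type} and \eqref{const:a_thrd_type}--\eqref{const:a_fourth_type} to make the matched block a genuine suffix of $u$, and in the converse direction pad with a reserved null symbol. Your converse direction is routine and fine.

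There is, however, one step in your forward direction that is asserted but does not follow as stated: injectivity. The constraint $\sum_{i=1}^{r} x_{u,v,i,i} \leq r-1$ only forbids the two strings from agreeing, with non-null characters, on all $r$ positions. If the reconstructed strings of $u$ and $v$ are identical of some length $k < r$, then the diagonal variables with index greater than $k$ are $0$ simply because those positions are null, the constraint is satisfied, and yet the labeling is not injective; so ``the strings differ in at least one position'' is not a direct consequence, and an extra argument (or a strengthening of the constraint using the $t$-variables) is needed. A related point deserving the same care you promise for well-definedness: in the reconstruction $x_{u,v,i,j}=1$ forces $u(i)=v(j)$, but $x_{u,v,i,j}=0$ does not by itself force $u(i)\neq v(j)$, because equal characters can arise through chains of identifications that no constraint of the form \eqref{const:a_fst_type} controls (for instance chains alternating between only the two vertices $u$ and $v$, where the three vertices in \eqref{const:a_fst_type} cannot be pairwise distinct). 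Both your non-arc verification and your injectivity verification silently use the implication ``$x=0$ implies the characters differ'', so this must be argued, or the reconstruction arranged so that it holds, just as in the corresponding step of the bipartite proof.
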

We omit a formal proof but describe the main ideas of the proof. The ILP formulation for the exact computation of readability for digraphs is (as already stated) very similar to the ILP for balanced bipartite graphs. The only difference is that we have to ensure injectivity and allow strings of different lengths. For that reasons, we introduced new set of variables, $t$-variables. \emph{Edge constraints} ensure that there is an overlap between any two adjacent vertices. The variable $z_{e,i,l}$ for some edge $e \in A(D)$ and indices $i,l \in \{1,\dots,r\}$ (here we have in mind the same digraph $D$ and integer $r$ as in the formulation of the decision problem introduced at the beginning of this chapter) intuitively means that characters at positions $i,\dots,l$ of a string assigned to vertex $u$ are equal to characters at positions $1,2,\dots,i-l+1$ of a string assigned to vertex $v$ (respectively) and not equal to the null character. Since $e \in A(D)$ we want that at least one variable among $z_{e,i,l}$ for $i\in \{1,\dots,r\}$ and $l \in \{i,\dots,r\}$ is equal to $1$. This is ensured by summing all $z$-variables for a fixed edge and setting that it must be at least 1.
The second type of edge-constraints (inequality \eqref{const:e_snd_type}) ensures us that $z_{e,i,l}$ can not be equal to $1$ if not all characters at positions $i,\dots,l$ are equal to characters at positions $1,\dots,i-l+1$ of a strings $s_u$ and $s_v$.
Inequality \eqref{const:e_thrd_type} ensures us that $z_{e,i,l}$ can be equal to $1$ if all characters at positions $l+1,\dots,r$ are equal to null character.

The \textit{non-edge constraints} we introduce to forbid an overlapping between non-adjacent vertices. Let $i \in \{1,\dots,r-1\}$ and $l \in \{1,2,\dots,i\}$ be fixed.
We have two cases.
The trivial case is if $t_{u,i+1} = 0$.
Then there is at least one character of a string $s_u$ different from the null character at positions starting at $i+1$ (see also \textit{additional constraints}).
So there is no overlap of length $i-l+1$ (for fixed $i,l$).
The second, more difficult case is if all characters at positions $i+1,\dots,r$ are equal to the null character.
In that case $t_{u,i+1}=1$ and since we do not have an overlapping of a strings $s_u$ and $s_v$, in order to have inequality~\eqref{const:ne_fst_type} satisfied, at least one of the variables $x_{u,v,k,k-l+1}$ for $k \in \{l,\dots,i\}$ must be equal to $0$, because otherwise the sum on the left side of \eqref{const:ne_fst_type} will be equal to $i-l+1$ and since $t_{u,i+1}=1$ we have the left side equal to $i-l+2$ and the right side equal to $i-l+1$.
Note that in inequality~\eqref{const:ne_fst_type} index $i \in \{1,\dots,r-1\}$ (it can not take value $r$).
For that, boundary case, we introduce inequlaity~\eqref{const:ne_snd_type}, the idea is the same as above.

The role of other constraints can be explained as follows.
\begin{enumerate}
\item (\textbf{Transitivity constraints}) Let $u,v,w \in V(D)$ be some distinct vertices and let $i,j,k \in \{1,\dots,e\}$. The inequality~\eqref{const:a_fst_type} is introduced to ensure the following implication: if $s_u(i) = s_v(j)$ and $s_v(j) = s_w(k)$ then $s_u(i) = s_w(k)$.
\item (\textbf{$t$-monotonicity constraints}) Let $u \in V(D)$ and $i \in \{1,\dots,r-1\}$. If all characters at positions $i,\dots,r$ of a string $s_u$ are equal to the null character i.e., $t_{u,i} = 1$ then also all characters at positions $i+1,\dots,r$ are equal to the null character. In other words $t_{u,i} = 1 \implies t_{u,i+1} = 1$. This is ensured with inequality~\eqref{const:a_snd_type}.
\item The role of \textbf{injectivity} and \textbf{symmetry} constraints is obvious.
\item (\textbf{Additional constraints}) The inequalities~\eqref{const:a_thrd_type} and \eqref{const:a_fourth_type} ensure that we do not have overlapping of some length such that it consists of null characters.

\end{enumerate}

\chapter{Bounds on readability}

In Chapter 2 we have presented an upper bound for readability for an arbitrary directed graph $G$ is given by $2^{p+1} - 1$ where $p=\max\{\Delta^+(G), \Delta^-(G)\}$.
This bound can be very weak. For example, for the complete bipartite graph $G$ with $n$ we get $r(G) \leq 2^{n+1} - 1$ while the readability of $G$ is $1$. An overlap labeling of length $1$ is obtained by assigning the same character to each vertex.
In this chapter we present lower and upper bounds for the readability of bipartite graphs using characterizations given below as well as characterization of the graphs of readability at most 2.

\section{Graphs of readability at most 2}
In this section we will present graph theoretic characterizations of the graphs with readability 1 or 2 following Kratsch et al.~\cite{MilanicNotPublished}.
\begin{theorem}\label{thm:read1}
Let $G = (S\cup T, E)$ be a bipartite graph. The following is equivalent:
\begin{enumerate}
\item $r(G) = 1$.
\item $G$ is a disjoint union of bicliques.
\end{enumerate}
\end{theorem}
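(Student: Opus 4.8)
The plan is to prove the two implications of the equivalence separately, using the explicit structure of the labeling.

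\textbf{Proof proposal.} First I would establish $(2)\Rightarrow(1)$, which is the easy direction. Suppose $G$ is a disjoint union of bicliques $B_1,\dots,B_k$. For each biclique $B_i$ pick a distinct character $c_i$ from the alphabet, and assign to every vertex of $B_i$ the length-one string $c_i$. Then for $u\in S$ and $v\in T$, $ov(\ell(u),\ell(v))>0$ if and only if $\ell(u)=\ell(v)$ (since the strings have length one), which happens if and only if $u$ and $v$ lie in the same biclique $B_i$, which by definition of disjoint union of bicliques happens if and only if $uv\in E(G)$. Hence $\ell$ is an overlap labeling of length $1$, so $r(G)\le 1$; and since $G$ has at least one edge in the nontrivial case, $r(G)=1$ (if $G$ has no edges it is the empty union of bicliques and $r(G)=0$, so one should either assume $G$ has an edge or note this degenerate case).

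For $(1)\Rightarrow(2)$, suppose $r(G)=1$, so there is an overlap labeling $\ell$ assigning a single character to each vertex. Define an equivalence-type relation: $u\in S$ and $v\in T$ are ``linked'' when $\ell(u)=\ell(v)$. The key observations are: (a) if $uv\in E(G)$ then $\ell(u)=\ell(v)$, since with strings of length one the only possible overlap is equality; (b) conversely if $\ell(u)=\ell(v)$ then $ov(\ell(u),\ell(v))=1>0$, so $uv\in E(G)$. Thus the edge relation is exactly ``same character.'' Now group the vertices by the character assigned to them: let $V_c=\{x\in V(G):\ell(x)=c\}$ for each character $c$ appearing in the image of $\ell$. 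By (a)–(b), for $u\in S\cap V_c$ and $v\in T\cap V_{c'}$ we have $uv\in E(G)$ iff $c=c'$. Hence each $V_c$ with both $S\cap V_c\ne\emptyset$ and $T\cap V_c\ne\emptyset$ induces a biclique, the $V_c$ partition $V(G)$, and there are no edges between different classes; vertices in classes meeting only one side are isolated (a biclique with an empty part, or handled as trivial components). Therefore $G$ is a disjoint union of bicliques.

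I do not expect a genuine obstacle here — the statement is essentially unwinding the definition of length-one overlap. The only subtlety worth stating carefully is the boundary behavior: deciding whether isolated vertices and the edgeless graph are counted as ``disjoint unions of bicliques'' (one can allow a biclique to have one empty part, i.e. a single vertex or an independent set on one side), and noting that the definition of readability for bipartite graphs allows $k=0$ while $r(G)=1$ presumes $G$ has an edge. Making the convention explicit is the one place where care is needed; the rest is a direct translation between the combinatorial and string-labeling descriptions.
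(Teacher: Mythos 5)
Your proposal is correct and follows essentially the same route as the paper: assigning one fresh character per biclique for the easy direction, and partitioning the vertices into character classes $V_c$ (the paper's $S_i\cup T_i$) to recover the bicliques for the converse. Your extra remark about the degenerate cases (edgeless graphs, single-vertex components, and $r(G)=1$ versus $r(G)\le 1$) is a reasonable point of care that the paper passes over silently, but it does not change the substance of the argument.
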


\begin{proof}
(1) $\implies$ (2): Suppose that $r(G) = 1$. Let $\ell$ be an overlap labeling of $G$ of the size $1$ and let $k$ be the number of different characters used by labeling $\ell$. Denote the characters by $a_1, \dots, a_k$. Construct the following graphs: $\forall i \in \{1, \dots, k\}$ let $G_i$ be the subgraph of $G$ induced by $S_i \cup T_i$ where $S_i = \{u \in S \mid \ell(u) = a_i \}$ and $T_i = \{v \in T \mid \ell(v) = a_i \}$. $G_i$ is obviously biclique because there is an overlap in $G$ between any two vertices $u \in S_i$ and $v \in T_i$ so it must hold that there is an edge between any two of them. From the construction of graphs $G_i$ it is clear that any vertex $u \in S$ is contained in exactly one $S_i$ for some $i \in \{1,\dots,k\}$ since $\ell(u) = a_i$. The same argument we can apply for vertices in $T$. Combining the above, we get $G_1 + \dots + G_k = G$.

(1) $\impliedby$ (2): Suppose $G$ is disjoint union of graphs $G_1=(S_1\cup T_1,E_1), \dots, G_k=(S_k\cup T_k, E_k)$ with $S = \cup_{i=1}^{k}S_i$ and $T = \cup_{i=1}^kT_i$ and let $a_1, \dots a_k$ be pairwise distinct characters. Then $\forall i \in \{1,\dots,k\}$ label all vertices of $G_i$ with the same character $a_i$. That way we made an overlap between any two vertices that belong to $G_i$. Since $G$ is the disjoint union of $G_i$s there are no edges between $G_i$ and $G_j$ in $G$ for $i \neq j$ and also by the above construction there is no overlap since $a_i \neq a_j$ for $i \neq j$. Thus, such assignment is an overlap labeling of length $1$, which completes the proof.
\end{proof}

Using the above characterization one can find a polynomial time algorithm to recognize whether a bipartite graph has readability $1$. An example is given in Figure~\ref{fig:alg_exec1}.

\begin{figure}[h!]
\begin{center}
\includegraphics[width=0.3\linewidth]{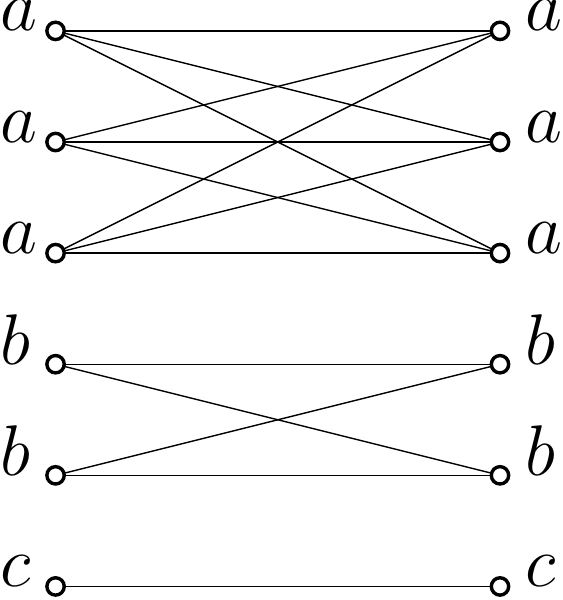}
\end{center}
\caption{Example of a bipartite graph of readability 1.}\label{fig:read_ex_1}
\end{figure}

In order to state the characterization of bipartite graphs with readability at most 2 we need to introduce the following definitions.
\begin{definition}
A bipartite graph $G$ is called \emph{twin-free} if the following implication holds: $(\forall \, u,v \in V(G))$ $(N(u) = N(v) \implies u = v)$.
\end{definition}

\begin{definition}\label{def:feasible_matching}
A matching $M$ in a bipartite graph $G$ is said to be \emph{feasible} if and only if the following conditions are satisfied:
\begin{enumerate}
\item The graph $G-M$ is a disjoint union of bicliques.
\item For every induced subgraph $F$ of $G$ isomorphic to $C_6$ (cycle on 6 vertices), we have $|M \cap E(F)| = 3$. In other words, if the edges of $F$ are labeled as in the Figure~\ref{fig:c6_domino}, then $M \cap E(F) \in \{\{e_1,e_3,e_5\}, \{e_2,e_4,e_6\}\}$.
\item For every induced subgraph $F$ of $G$ isomorphic to the domino (see Definition~\ref{def:domino}) with edges labeled as in Figure~\ref{fig:c6_domino} we have $M\cap E(F) \in \{ \{e_2,e_6\}, \{ e_3, e_5 \} \}$.
\end{enumerate}
\end{definition}

\begin{figure}[h!]
\begin{center}
\includegraphics[width=0.35\linewidth]{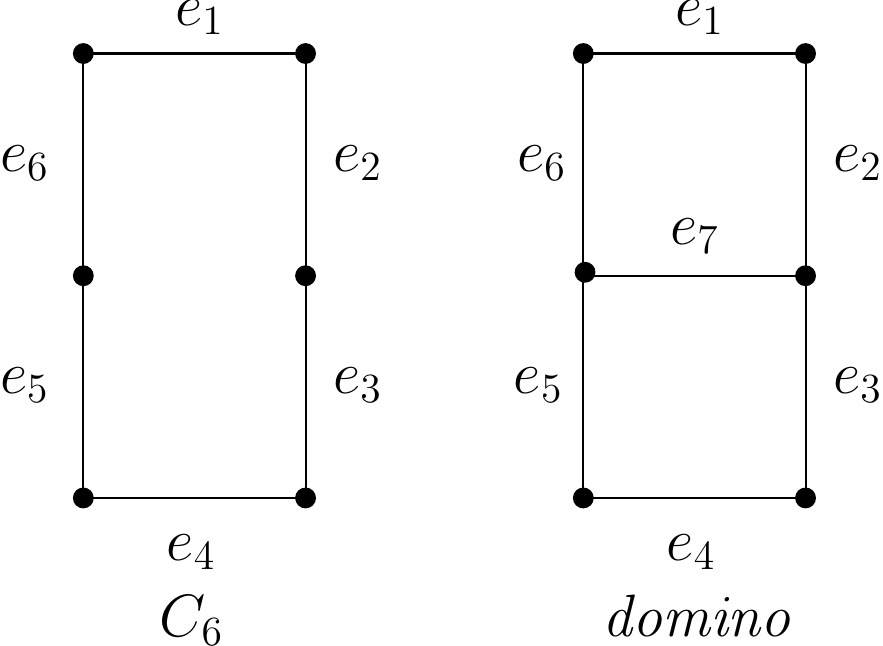}
\end{center}
\caption{Labeling of edges of the $C_6$ and the domino used in Definition~\ref{def:feasible_matching}.}\label{fig:c6_domino}
\end{figure}

\begin{theorem}[Kratsch et al.~\cite{MilanicNotPublished}]\label{thm:chrrdb2}
Let $G$ be a twin-free bipartite graph. Then, $r(G) \leq 2$ if and only if $G$ has a feasible matching.
\end{theorem}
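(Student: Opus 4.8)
The plan is to prove both directions of the equivalence, connecting an overlap labeling of length $2$ with a feasible matching. For the easy direction, suppose $r(G) \le 2$ and fix an overlap labeling $\ell$ with strings of length $2$. Each vertex $u$ gets a string $u(1)u(2)$. The adjacency of $u \in S$ and $v \in T$ means either $u(2) = v(1)$ (overlap of length $1$) or $u(1)u(2) = v(1)v(2)$ (overlap of length $2$); I would let $M$ be the set of edges realized \emph{only} by a length-$2$ overlap, i.e. edges $uv$ with $u(1)u(2) = v(1)v(2)$ but $u(2) \ne v(1)$ would be excluded from $M$ — actually the right choice is to let $M$ collect edges where the length-$2$ overlap is the \emph{unique} reason for adjacency. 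Then I would argue that $G - M$ is a disjoint union of bicliques by a readability-$1$-style argument (Theorem~\ref{thm:read1}): in $G-M$ every surviving edge comes from a length-$1$ overlap $u(2)=v(1)$, and grouping vertices of $S$ by their second character and vertices of $T$ by their first character exhibits $G-M$ as a disjoint union of bicliques, modulo checking that no spurious edges or isolated-structure issues arise because we removed $M$. Then I would verify conditions (2) and (3) of Definition~\ref{def:feasible_matching} by a short case analysis on how the six (resp.\ four) characters of an induced $C_6$ (resp.\ domino) can be consistently assigned; this is where twin-freeness is used to rule out degenerate assignments, and it forces exactly the two allowed patterns for $M \cap E(F)$.

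For the converse, suppose $G$ has a feasible matching $M$. By condition (1), $G - M$ is a disjoint union of bicliques $B_1, \dots, B_k$. Following the readability-$1$ construction, assign to each biclique $B_t$ a distinct character $a_t$: give every $u \in S \cap V(B_t)$ second character $a_t$ and every $v \in T \cap V(B_t)$ first character $a_t$. This already realizes all edges of $G - M$ via length-$1$ overlaps and creates no spurious length-$1$ overlaps across different bicliques. It remains to choose the \emph{other} character of each vertex (first character of $S$-vertices, second character of $T$-vertices) so that (a) each edge $uv \in M$ gets a length-$2$ overlap $u(1)u(2) = v(1)v(2)$, which given the already-fixed characters forces $u(1) = v(1)$ and $u(2) = v(2)$ — i.e. $M$ must be a matching whose edges join vertices in the same biclique is \emph{not} required; rather, for $uv \in M$ we need the free character of $u$ to equal the free character of $v$ and simultaneously $u(2) = v(1)$, so $M$-edges must already satisfy $u(2) = v(1)$, meaning both endpoints lie in the same biclique of $G-M$ — and (b) no spurious length-$2$ overlap is created between a non-adjacent pair or between an adjacent pair not in $M$. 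I would handle (a)–(b) by setting up, for each biclique, an auxiliary graph on its vertices where the constraints ``free character equal'' (from $M$-edges) and ``free character different'' (from non-edges within the biclique, and from adjacent non-$M$ pairs where a length-$2$ overlap would be illegal) must be simultaneously satisfiable; feasibility of this $2$-coloring-like system is exactly what conditions (2) and (3) guarantee, since $C_6$ and the domino are the minimal obstructions to consistently orienting these equality/inequality constraints.

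The main obstacle, and the technical heart of the proof, is the converse: showing that the local conditions on induced $C_6$'s and dominoes are \emph{sufficient} to globally satisfy all the equality/inequality constraints on the free characters. I expect this to require a careful structural analysis of $G - M$ together with $M$ — essentially proving that the constraint (hyper)graph has no odd-cycle-like obstruction beyond those already forbidden, perhaps by induction on the number of bicliques or by a direct argument that any violating configuration contains an induced $C_6$ or domino witnessing a failure of condition (2) or (3). Verifying that the construction introduces no unwanted overlaps of length $1$ or $2$ across bicliques is routine but must be done carefully, using the distinctness of the biclique characters $a_t$ and the fact that the free characters can be drawn from a separate, sufficiently large pool. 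Twin-freeness will be needed at several points to guarantee that vertices which ``ought'' to be distinguishable by the labeling really are, and to ensure the induced-subgraph conditions have bite.
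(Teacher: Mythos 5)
The paper itself does not prove this theorem (it is quoted from Kratsch et al.~\cite{MilanicNotPublished}, with only a pointer to a 2-SAT-based algorithm), so your proposal has to stand on its own. Your forward direction is essentially the right plan: take $M$ to be the edges realized \emph{only} by an overlap of length $2$; equal length-$2$ labels force equal neighborhoods, so twin-freeness makes $M$ a matching; grouping $S$-vertices by their second character and $T$-vertices by their first character shows $G-M$ is a disjoint union of bicliques; and the $C_6$/domino conditions do come out of a case analysis of the possible labelings (which you only promise, not perform, but the plan is sound).

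The converse is where the argument genuinely fails. For strings of length exactly $2$, an overlap of length $2$ means the two strings are equal, i.e.\ $u(1)=v(1)$ and $u(2)=v(2)$; it does \emph{not} require $u(2)=v(1)$, nor that the two free characters agree with each other. Your conclusion that ``$M$-edges must already satisfy $u(2)=v(1)$, meaning both endpoints lie in the same biclique of $G-M$'' is therefore wrong, and in fact impossible: if $uv\in M$ then $uv\notin E(G-M)$, while every $S$--$T$ pair inside one biclique component of $G-M$ is adjacent in $G-M$, so no $M$-edge can ever lie inside a biclique. (Test against $C_6$, which is twin-free with $r(C_6)=2$: its feasible matching is an alternating triple whose edges join distinct $K_2$-bicliques of $G-M$.) Consequently the per-biclique ``equal/different free character'' constraint system you propose is aimed at the wrong object. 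In the correct construction there is no freedom to encode: if $u\in S$ lies in biclique $B_t$ and is $M$-matched to a vertex of biclique $B_{t'}$, the length-$2$ overlap forces $u(1)=a_{t'}$, and symmetrically $v(2)=a_t$ for the $T$-endpoint; unmatched vertices get fresh characters. The real technical content, which your outline never reaches, is to verify that this forced labeling creates no spurious overlap: a spurious length-$2$ overlap between non-adjacent $u\in B_t$ and $v\in B_{t'}$ arises exactly when $u$ is $M$-matched into $B_{t'}$ and $v$ is $M$-matched into $B_t$, and one must then extract from this configuration (choosing partners inside $B_t$ and $B_{t'}$ and analyzing possible chords) an induced $C_6$ containing two ``opposite'' $M$-edges, contradicting condition (2) of Definition~\ref{def:feasible_matching}, or a domino contradicting condition (3). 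That step is the heart of the sufficiency proof and is missing from your proposal.
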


The condition that $G$ must be twin-free graph is not necessary. The proof for that is a simple corollary of Lemma~\ref{lemma:twin_reduction}.

\begin{lemma}\label{lemma:twin_reduction}
Let $G$ be a bipartite graph with two distinct vertices $u$ and $v$ such that $N(u) = N(v)$. Then, $r(G) = r(G-u)$.
\end{lemma}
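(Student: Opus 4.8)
The plan is to show that the "twin" vertex $v$ is redundant: any overlap labeling of $G-u$ can be extended to an overlap labeling of $G$ of the same length by giving $v$ the same string that was assigned to $u$, and conversely any overlap labeling of $G$ restricts to one of $G-u$. Since $G-u$ is an induced subgraph of $G$, the inequality $r(G-u) \le r(G)$ is immediate (restrict a labeling of $G$ to the vertices of $G-u$; overlaps are determined pairwise, so adjacencies and non-adjacencies are preserved). The content is the reverse inequality $r(G) \le r(G-u)$.

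First I would fix notation: let $u,v$ be distinct vertices with $N(u)=N(v)$; since $G$ is bipartite and $N(u)=N(v)$ is nonempty-or-empty in the same part, $u$ and $v$ lie in the same part, say both in $S$ (the case of empty neighbourhoods is handled the same way, or separately as a trivial subcase). Note $u$ and $v$ are non-adjacent (same part). Now take an optimal overlap labeling $\ell'$ of $G' := G-u$, so $\mathrm{len}(\ell') = r(G-u)$. Define a labeling $\ell$ of $G$ by $\ell(x) = \ell'(x)$ for every vertex $x \ne u$, and $\ell(u) = \ell'(v)$. Clearly $\mathrm{len}(\ell) = \mathrm{len}(\ell') = r(G-u)$.

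Next I would verify $\ell$ is an overlap labeling of $G$, i.e. for all $x \in S$, $y \in T$: $xy \in E(G) \iff ov(\ell(x),\ell(y)) > 0$. If neither $x$ nor $y$ equals $u$, this is exactly the property of $\ell'$ on $G'$, together with the observation that $E(G') = E(G) \cap (V(G') \times V(G'))$ and $N_{G'}$ agrees with $N_G$ on vertices other than $u$ (here one uses that $u$'s only "partner" in adjacency questions is across the bipartition, and removing $u$ changes no edge not incident to $u$). The remaining case is $x = u$ (the case $y = u$ cannot occur since $u \in S$ and $y \in T$). Then for any $y \in T$: $uy \in E(G) \iff y \in N_G(u) = N_G(v)$; and since $v \ne u$, $v$ is a vertex of $G'$, so $y \in N_G(v) \iff y \in N_{G'}(v) \iff vy \in E(G') \iff ov(\ell'(v),\ell'(y)) > 0 \iff ov(\ell(u),\ell(y)) > 0$, using $\ell(u) = \ell'(v)$ and $\ell(y) = \ell'(y)$. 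This establishes the equivalence, so $\ell$ is an overlap labeling of $G$ and hence $r(G) \le \mathrm{len}(\ell) = r(G-u)$. Combining the two inequalities gives $r(G) = r(G-u)$.

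The only subtlety — and it is a mild one rather than a real obstacle — is bookkeeping about which part $u$ and $v$ belong to and the possibility that $N(u) = N(v) = \emptyset$: one must make sure that in the bipartite overlap-labeling model the only adjacency queries involving $u$ are of the form $uy$ with $y \in T$, so that copying $v$'s string onto $u$ cannot create or destroy any edge. Once that is pinned down, the argument is a routine "copy the twin" substitution. I expect the write-up to be a half-page at most, with the part-membership discussion being the one place to be careful.
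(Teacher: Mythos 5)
Your argument is correct, and in fact the paper states this lemma without giving a proof, so there is nothing to compare against beyond noting that your ``copy the twin's string'' substitution is the natural argument: the direction $r(G-u)\le r(G)$ is exactly the induced-subgraph Lemma~\ref{thm:rdbind}, and the reverse direction works precisely because bipartite overlap labelings are \emph{not} required to be injective (the paper stresses this point after defining bipartite readability), so setting $\ell(u)=\ell'(v)$ is legitimate even though it duplicates a label. The one caveat you flag deserves the ``separate trivial subcase'' treatment rather than the ``same way'' treatment: if $N(u)=N(v)=\emptyset$ and $u$, $v$ happen to lie in different parts, copying $\ell'(v)$ onto $u\in S$ is not automatically safe, since the suffixes of $\ell'(v)$ were never constrained against prefixes of the other $T$-strings; there one should instead label the isolated vertex $u$ with $c^{r}$ for a fresh character $c$, which creates no overlaps. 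With that subcase pinned down (and the harmless slip in your opening sentence, where the roles of $u$ and $v$ are momentarily swapped relative to the detailed construction $\ell(u)=\ell'(v)$), the proof is complete.
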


Thus, if $G$ is not twin-free, then we can define its twin-free reduction $TF(G)$ which consists of equivalence classes of $G$ with respect to twin relation $u \sim v \iff N(u) = N(v)$ and classes $U$ and $V$ are adjacent if and only if $uv \in E(G)$ for some $u \in U$ and $v \in V$. With that reduction we obtain a twin-free bipartite graph $H$ and by the above lemma we have $r(H) = r(G)$ so we can use characterization given by Theorem~\ref{thm:chrrdb2} to check if $G$ is of readability at most two. A polynomial time algorithm reducing the problem to \textit{2-SAT} is presented in \cite{MilanicNotPublished}.

\section{Lower and upper bounds on readability}

In order to give lower bound for the readability of bipartite graphs, we introduce a graph parameter \textit{distinctness}.
\begin{definition}
Let $G = (V,E)$ be an arbitrary graph. The \textit{distinctness} of $u,v \in V(G)$ denoted by $\dt(u,v)$, is defined as $\dt(u,v) = \max\{ |N(u)\backslash N(v)|, |N(v)\backslash N(u)| \}$. The \textit{distinctness of a bipartite graph $G=(S \cup T, E)$} is denoted by $\dt(G)$ and given by $\dt(G) = \min_{u,v \in S, u,v \in T}\{ \dt(u,v) \}$, that is, the minimum distinctness of any pair of vertices that belong to the same part of the bipartition.
\end{definition}
Note that the distinctness of a given bipartite graph can be computed in polynomial time.

\begin{theorem}[Chikhi at al.~\cite{MilanicRayanMedvedev}]\label{thm:lowerb}
For an arbitrary bipartite graph $G$ of maximum degree at least two, $r(G) \geq \dt(G)+1$.
\end{theorem}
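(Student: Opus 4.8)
The plan is to argue by contradiction: suppose $G = (S \cup T, E)$ has maximum degree at least two but admits an overlap labeling $\ell$ with $len(\ell) = r \le \dt(G)$. I will derive a contradiction by analysing how a short labeling constrains the neighbourhoods of vertices lying in the same part of the bipartition. The key observation is that the ``footprint'' of a vertex $u \in S$ — namely which vertices $v \in T$ overlap it and via which offset — is determined by the $r$ characters of $\ell(u)$ together with the characters of the strings on the other side, and an overlap of $s_u$ into $s_v$ of length $i$ forces $s_u$ to end with a specific block of $i$ characters while $s_v$ begins with the same block.

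First I would fix a vertex $u$ of degree at least two (which exists since $\Delta(G) \ge 2$; if the max-degree vertex is in $T$ one flips the roles, using that overlap is read ``from $S$ to $T$'' but the bound is symmetric in the definition of $\dt$) and look at the shortest overlap length $i_0$ realised between $s_u$ and any neighbour of $u$. For every neighbour $v$ of $u$, the suffix of $s_u$ of length $i_0$ must be a prefix of $s_v$, so all neighbours of $u$ share a common prefix of length $i_0$. Conversely, any $v \in T$ whose string has that length-$i_0$ prefix equal to $\suff(s_u,i_0)$ is forced to be a neighbour of $u$. The idea is then to compare $u$ with a second vertex $u' \in S$ and count: the neighbourhoods $N(u)$ and $N(u')$ can differ only in vertices $v$ for which the relevant length-$i$ suffix/prefix condition distinguishes $s_u$ from $s_{u'}$, and since the strings have length only $r$, there are at most $r$ possible overlap lengths, each ``governed'' by one position of the string. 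This should bound $\dt(u,u')$ (hence $\dt(G)$) by something like $r-1$, contradicting $r \le \dt(G)$.

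More concretely, the step I would carry out carefully is: pick $u, u' \in S$ (or both in $T$) achieving $\dt(G) = \dt(u,u')$, and show $|N(u) \setminus N(u')| \le r - 1$. A vertex $v \in N(u) \setminus N(u')$ satisfies: some suffix of $s_u$ equals a prefix of $s_v$, but no suffix of $s_{u'}$ equals a prefix of $s_v$. If two such vertices $v, v'$ had their overlap with $u$ realised at the \emph{same} minimal length $i$, then $\suff(s_u,i)$ is a common prefix of both $s_v$ and $s_{v'}$; I want to leverage this to show one of $v,v'$ must also overlap $u'$, or otherwise derive a collision. The cleanest route is probably: for each $v \in N(u)$, let $i(v)$ be the minimum overlap length with $u$; the map $v \mapsto \pref(s_v, i(v)) = \suff(s_u, i(v))$ takes at most $r$ values (one per length), and vertices hitting the same value have nested constraints that can be compared against $u'$ to absorb all but one of them into $N(u')$. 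The symmetric argument bounds $|N(u') \setminus N(u)|$, giving $\dt(u,u') \le r-1 < r \le \dt(G)$, a contradiction; hence $r(G) \ge \dt(G) + 1$.

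The main obstacle I anticipate is the bookkeeping in the ``absorption'' step — showing that among the neighbours of $u$ sharing a given minimal overlap length with $u$, all but one must also be neighbours of $u'$. This requires carefully using that if $s_u$ and $s_{u'}$ share a long enough common suffix then overlaps transfer from $u$ to $u'$, together with the hypothesis that $u$ has degree $\ge 2$ to guarantee the relevant suffix is genuinely nonempty (this is exactly where the ``maximum degree at least two'' assumption is used — a degree-one or degree-zero vertex gives no constraint and the bound would fail, as the $K_{1,1}$ or edgeless cases show). I would also need to handle the case where the two chosen same-part vertices lie in $T$ rather than $S$ separately, since the overlap relation is directional, but the counting argument is structurally identical with prefixes and suffixes exchanged.
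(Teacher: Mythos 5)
There is a genuine gap, and it sits exactly where you flag the ``main obstacle'': the absorption step is not just hard bookkeeping, it is unavailable for the pair you chose. (Note also that the paper itself does not prove Theorem~\ref{thm:lowerb}; it quotes it from Chikhi et al., so your attempt has to stand on its own.) You fix $u,u'$ as a pair attaining $\dt(G)$ and then try to bound $|N(u)\setminus N(u')|$ using the labeling, but a pair chosen only by minimizing $\dt$ has no relation whatsoever to the labeling: your transfer mechanism (``if $s_u$ and $s_{u'}$ share a long enough common suffix then overlaps transfer'') presupposes a common suffix that you never establish and that need not exist. The claim ``among the neighbours of $u$ sharing a given minimal overlap length with $u$, all but one must also be neighbours of $u'$'' is false for a generic same-part pair: take the double star with centres $u,u'$ joined to a common vertex $w$ and $n$ private leaves each, labeled $\ell(u)=ab$, $\ell(u')=cb$, $\ell(w)=bd$, private leaves of $u$ labeled $ab$ and of $u'$ labeled $cb$; here $r=2$, all $n$ vertices of $N(u)\setminus N(u')$ overlap $u$ at the same length, and none can be ``absorbed'' into $N(u')$. (That pair is not the $\dt$-minimizer, but it shows the absorption cannot be powered by the labeling alone; the pair must be \emph{produced} by the labeling, not chosen first.) Two further slips: the assertion that all neighbours of $u$ share the prefix $\suff(s_u,i_0)$ is wrong, since a neighbour realised by an overlap of length $j>i_0$ has $\pref(s_v,i_0)$ equal to the block of $s_u$ at positions $r-j+1,\dots,r-j+i_0$, not to $\suff(s_u,i_0)$; and even if one vertex per overlap length survived, lengths range over up to $r$ values, giving $\le r$ rather than the required $\le r-1$.

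A workable route (in the spirit of the cited source) chooses the pair extremally with respect to the labeling: assume all labels have length $r$, and among all same-part pairs pick $x\neq y$ maximizing the length $k$ of a common suffix (for a pair in $S$) or common prefix (for a pair in $T$). Then any two vertices of $N(x)\setminus N(y)$ must overlap $x$ at pairwise distinct lengths, all exceeding $k$ — if two of them used the same length they would themselves form a same-part pair with a common prefix/suffix longer than $k$, contradicting maximality — so $\dt(x,y)\le r-k\le r-1$ whenever $k\ge 1$. The hypothesis $\Delta(G)\ge 2$ is then used to dispose of the degenerate case $k=0$: there, distinct same-part labels already differ in their last (resp.\ first) character, which forces every vertex to have degree at most $r$, and the two neighbours of a vertex of degree at least two form a same-part pair with a common neighbour, hence with $\dt\le r-1$. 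The essential ideas your proposal is missing are precisely these two: the pair must come from an extremal property of the labeling, and coinciding overlap lengths are used to contradict that extremality rather than to absorb vertices into the other neighbourhood.
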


Using the above theorem, it is shown in~\cite{MilanicRayanMedvedev} that there exist graphs of readability at least linear in the number of vertices.
The bipartite graph $G = (S\cup T, E)$ with $2^n-1$ vertices on both sides with this property is obtained by the following rules:
\begin{enumerate}
\item $S = \{v_s \mid v \in \{0,1\}^n \backslash \{0\}^n\}$ and $T = \{v_t \mid v \in \{0,1\}^n \backslash \{0\}^n\}$
\item $
		E(G) = \left\{ (u_s, v_t) \in S \times T \mid \sum_{i=1}^n{u[i]\cdot v[i]} \equiv 1 \pmod 2 \right\}$
\end{enumerate}
One can show that for the graph $G$ obtained this way we have $r(G) \geq \frac{2\cdot(2^n - 1)}4 = \frac{|V(G)|}4$. The idea of the proof is to show first that if two distinct vertices in the same part of the bipartition of $G$ have a common neighbor, then they have exactly $\frac{|V(G)|}{4}$ neighbors. Combining this argument with Theorem~\ref{thm:lowerb} we get the desired conclusion. These graphs are interesting since for any positive integer $r$ one can construct a bipartite graph of readability at least $r$ (although, this can be done more easily also with trees \cite{MilanicRayanMedvedev}).

\begin{sloppypar}
We now introduce the notions of \textit{decomposition} of a bipartite graph and the \textit{hierarchial-union-of-bicliques} rule (shortly \textit{HUB-rule}) for a decomposition (and consequently the \textit{HUB-number} of a bipartite graph) in order to be able to improve upper bound given by Braga and Meidanis~\cite{BragaMeidanis}. The rule is introduced in the paper \cite{MilanicRayanMedvedev}. Also, we will use these notions to explore the readability of some special graphs.
\end{sloppypar}
\begin{definition}
Let $G$ be a graph. The \textit{decomposition} of size $k$ of  $G$ is a function $w : E(G) \to \{1,\dots,k\}$.
\end{definition}

Intuitively, the decomposition of $G$ is introduced in order to split the set of edges of a graph into pairwise disjoint sets according to the minimum length of the strings assigned to the endpoints of edges. Given a bipartite graph $G = (S \cup T, E)$ and a decomposition $w$ of $G$ of size $k$, a partition of the set of edges to pairwise disjoint sets $E_1, \dots, E_k$ is defined by $E_i = \{e \in E(G) \mid \omega(e) = i\}$ for all $i \in \{1,\dots,k\}$. Consequently, one can define graphs $G_1^w = (S \cup T, E_1^w), \dots, G_k^w = (S \cup T, E_k^w)$ with respect to $w$ where $E_k^w = \{ e \in E(G) \mid w(e) = i \}$.

\begin{figure}[h!]
\begin{center}
\includegraphics[width=1\linewidth]{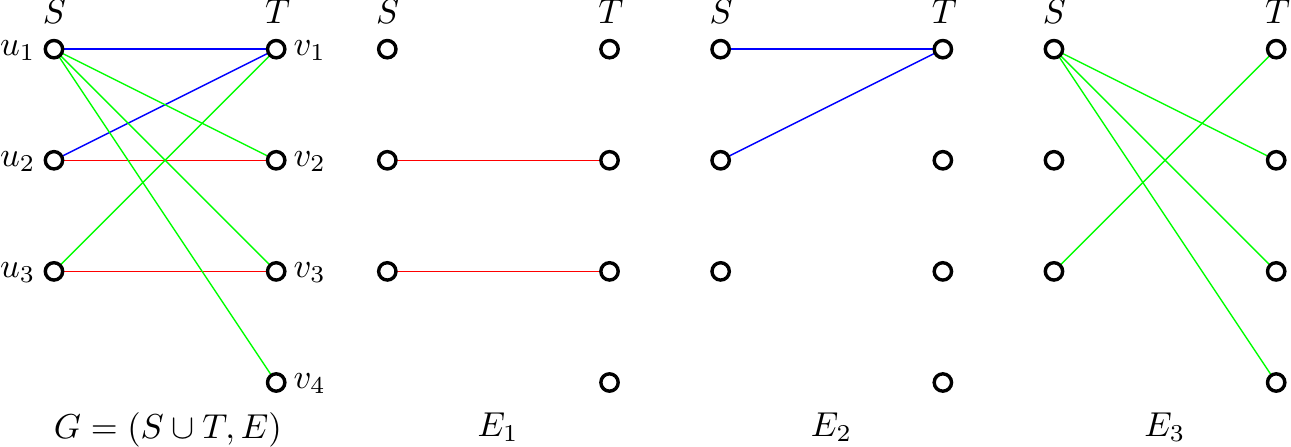}
\end{center}
\caption{Example of a decomposition and the resulting partitioning the set of edges of a bipartite graph $G$}\label{fig:decomposition}
\end{figure}

Figure~\ref{fig:decomposition} provides an example of a decomposition of size $3$. $E_1$ consists of red edges, mapped to $1$, $E_2$ consists of blue edges, mapped to $2$ and $E_3$ consists of green edges, mapped to $3$. The following holds: $E = E_1\cup E_2\cup E_3$.

We say that two distinct vertices of a graph are \textit{twins} if they have the same (open) neighborhoods. A vertex of a graph is \textit{non-isolated} if its (open) neighborhood is non-empty.
\begin{definition}\label{def:hub_rule}
Let $G = (S\cup T, E)$ be a bipartite graph and let $w$ be a decomposition of size $k$. We say that $w$ satisfies the \textit{HUB-rule} if the following hold:
\begin{enumerate}
\item for all $i \in \{1,\dots,k\}$, $G_i^{w}$ is a disjoint union of bicliques and
\item if two distinct vertices $u \in S$ and $v \in T$ are non-isolated twins in $G_i^{w}$ for some $i \in \{2,\dots,k\}$ then for all $j \in \{1,\dots,i-1\}$ $u$ and $v$ are twins in $G_{j}^{w}$.
\end{enumerate}
\end{definition}

In what follows, if a decomposition $w$ is clear from the context, we will usually write $G_i$ instead of $G_i^w$.
An example of a decomposition of a bipartite graph $G$ that satisfies the HUB-rule is any $w : E(G) \to \{1,\dots,k\}$ such that $G_1^w$, defined as above, is a disjoint union of bicliques and $E_2^w, \dots E_k^w$ are matchings in $G$. This is true since by definition $G_1^w$ is a disjoint union of bicliques and since $E_2^w, \dots, E_k^w$ are matchings in $G$ then there are no pairs of twin vertices. This example is the main motivation for exploring the readability of grid graphs introduced in Chapter 7.

Consider now a bipartite graph $G = (S \cup T, E)$ and an overlap labeling $\ell$ of $G$. Then one can define decomposition $w$ such that for any edge $e = uv \in E(G)$, $w(e) = ov(\ell(u),\ell(v))$. Clearly, $w$ is well defined. The decomposition obtained this way will be called \textit{$\ell$-decomposition}. The next theorem gives a connection between $\ell$-decompositions of $G$ and the HUB-rule:
\begin{theorem}\label{thm:ldcmhubrule}
Let $\ell$ be an overlap labeling of a bipartite graph $G = (S\cup T, E)$. Then, the $\ell$-decomposition satisfies the HUB-rule.
\end{theorem}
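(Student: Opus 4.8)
The plan is to verify directly that the $\ell$-decomposition $w$, defined by $w(uv) = ov(\ell(u),\ell(v))$ for each edge $uv \in E(G)$, satisfies both conditions of the HUB-rule (Definition~\ref{def:hub_rule}). Throughout, fix $r = len(\ell)$, so every string $\ell(x)$ has length at most $r$; write $G_i = G_i^w$ for the graph on $S \cup T$ whose edges are exactly those $e$ with $ov(\ell(u),\ell(v)) = i$.

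For condition~(1), I would show that each $G_i$ is a disjoint union of bicliques, i.e.\ that it contains no induced path on four vertices $u - v - u' - v'$ with $u,u' \in S$ and $v,v' \in T$. Suppose $uv$, $u'v$, $u'v'$ are all edges of $G_i$, so each of these three pairs overlaps by exactly $i$: $\suff(\ell(u),i) = \pref(\ell(v),i)$, $\suff(\ell(u'),i) = \pref(\ell(v),i)$, and $\suff(\ell(u'),i) = \pref(\ell(v'),i)$. From the first two, $\suff(\ell(u),i) = \suff(\ell(u'),i)$; combining with the third gives $\suff(\ell(u),i) = \pref(\ell(v'),i)$, so $ov(\ell(u),\ell(v')) \ge i$. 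I then need that it is exactly $i$ and hence $uv' \in E(G_i)$: any overlap of length $j < i$ between $\ell(u)$ and $\ell(v')$ would be a suffix of $\suff(\ell(u),i) = \suff(\ell(u'),i)$ and a prefix of $\pref(\ell(v'),i) = \suff(\ell(u'),i)$ as well—wait, more carefully, $ov$ is the \emph{minimum} overlap length, so $ov(\ell(u),\ell(v')) = j$ for some $j \le i$; but then $\suff(\ell(u),j) = \pref(\ell(v'),j)$, and since $\suff(\ell(u),i) = \suff(\ell(u'),i)$ and $\pref(\ell(v'),i) = \suff(\ell(u'),i)$, one checks $\suff(\ell(u'),j) = \pref(\ell(v),j)$ by reading off the appropriate length-$j$ window, forcing $ov(\ell(u'),\ell(v)) = j$, hence $j = i$. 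Thus $uv' \in E(G_i)$, so $G_i$ has no induced $2K_2$-free obstruction, i.e.\ it is a disjoint union of bicliques.

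For condition~(2), suppose $u \in S$ and $v \in T$ are non-isolated twins in $G_i$ for some $i \ge 2$; fix $j \in \{1,\dots,i-1\}$ and let $w \in S$ be a common neighbour of $u$ and $v$ in $G_i$ (such $w$ exists on one side; say $w \in S$, the $w \in T$ case being symmetric). Since $u$ and $v$ are in opposite parts, "twins in $G_i$" means $N_{G_i}(u) = N_{G_i}(v)$ as subsets of $S \cup T$; because $u$ is non-isolated it has a neighbour $w \in T$, and then $v$ also has $w$ as a neighbour in $G_i$, so $uw, vw \in E(G_i)$, giving $\suff(\ell(u),i) = \pref(\ell(w),i) = \suff(\ell(v),i)$ — hence $\ell(u)$ and $\ell(v)$ share their length-$i$ suffix, so in particular their length-$j$ suffixes agree for every $j \le i$. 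I claim this forces $u,v$ to be twins in $G_j$. Take any edge of $G_j$ incident to $u$, say $ux$ with $x \in T$, so $ov(\ell(u),\ell(x)) = j$, i.e.\ $\suff(\ell(u),j) = \pref(\ell(x),j)$; since $\suff(\ell(v),j) = \suff(\ell(u),j)$ we get $\suff(\ell(v),j) = \pref(\ell(x),j)$, so $ov(\ell(v),\ell(x)) = j$ — again, the minimality of $ov$ and the coincidence of the suffixes makes the overlap value for $v$ equal to that for $u$, namely $j$ — hence $vx \in E(G_j)$. By symmetry between $u$ and $v$, we conclude $N_{G_j}(u) = N_{G_j}(v)$, i.e.\ $u$ and $v$ are twins in $G_j$.

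The main obstacle is the bookkeeping around the fact that $ov(s,t)$ is the \emph{minimum} overlap length, not just "some" overlap length: knowing that two strings share a suffix/prefix of length $i$ tells us the overlap is \emph{at most} $i$, and one must argue in each case that it cannot be strictly smaller — equivalently, that the decomposition value is well-defined and respects the equalities of suffixes. This is handled uniformly by the observation that if $\suff(\ell(a),i) = \suff(\ell(b),i)$ then for every $j \le i$ and every string $s$, $\suff(\ell(a),j) = \pref(s,j) \iff \suff(\ell(b),j) = \pref(s,j)$, so $ov(\ell(a),s) = ov(\ell(b),s)$ whenever this common value is $\le i$; I would isolate this as a one-line observation and invoke it in both parts above.
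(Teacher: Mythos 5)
Your handling of condition (1) of Definition~\ref{def:hub_rule} is essentially correct: showing that $uv,u'v,u'v'\in E(G_i)$ forces $uv'\in E(G_i)$ (equivalently, that $G_i$ has no induced $P_4$; ``$2K_2$-free obstruction'' is the wrong name for this), and the care you take with the fact that $ov$ is the \emph{minimum} overlap length is exactly where the content lies; the closing observation that a shared length-$i$ suffix forces $ov(\ell(a),s)=ov(\ell(b),s)$ whenever this value is at most $i$ is the right uniform tool. One small omission: before writing $uv'\in E(G_i)$ you should note that $ov(\ell(u),\ell(v'))>0$ gives $uv'\in E(G)$ because $\ell$ is an overlap labeling; only then does $w(uv')$ exist. (For comparison purposes, note the paper itself gives no proof of Theorem~\ref{thm:ldcmhubrule} and defers to~\cite{MilanicRayanMedvedev}, so your argument stands on its own.)

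The genuine problem is condition (2). You take the printed hypothesis ``$u\in S$ and $v\in T$ are non-isolated twins in $G_i$'' literally and then argue as if $u$ and $v$ were on the same side. Every $G_i$ is a spanning subgraph of the bipartite graph $G$, so $N_{G_i}(u)\subseteq T$ and $N_{G_i}(v)\subseteq S$; two non-isolated vertices in opposite parts can therefore never satisfy $N_{G_i}(u)=N_{G_i}(v)$, and under the literal reading the condition is vacuously true -- there is nothing to prove. Your text instead asserts $vw\in E(G_i)$ with $v,w\in T$ and later $vx\in E(G_j)$ with $v,x\in T$, adjacencies that cannot exist, and it uses the suffix of $\ell(v)$ for a vertex $v\in T$, although in the bipartite model only prefixes of $T$-labels are relevant to overlaps. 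The printed definition is almost certainly a typo for ``two distinct vertices in the same part'' (this is the version in~\cite{MilanicRayanMedvedev}, and it is the reading implicit in the discussion of the twins $u_1,u_2$ in Figure~\ref{fig:decomposition} and in the proof of Lemma~\ref{lemma:hubrg}); a sound write-up should either record the vacuity of the literal statement or, for the intended one, run your argument with $u,v\in S$ via the common length-$i$ suffix obtained from a common $G_i$-neighbour (and symmetrically with $u,v\in T$ via the common length-$i$ prefix), adding the remark that $ov(\ell(v),\ell(x))=j>0$ yields $vx\in E(G)$ and hence $vx\in E(G_j)$. With the two vertices placed on the same side, your suffix-sharing plus minimality argument goes through verbatim and is the expected proof.
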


A proof of Theorem~\ref{thm:ldcmhubrule} can be found in~\cite{MilanicRayanMedvedev}.
It is easy to see that the $\ell$-decomposition given in Figure~\ref{fig:decom_hub_rule} satisfies the HUB-rule (in fact, it satisfies the conditions stated in the paragraph following Definition~\ref{def:hub_rule}). Figure~\ref{fig:decomposition} provides an example of a decomposition that does not satisfy the HUB-rule: clearly, each of the graphs $G_1, G_2, G_3$ is a disjoint union of bicliques but, vertices $u_1$ and $u_2$ are twins in $G_2$ and not in $G_1$ and thus the second condition of Definition~\ref{def:hub_rule} is not met.

\bigskip
\bigskip
\begin{figure}[h!]
\begin{center}
\includegraphics[width=1\linewidth]{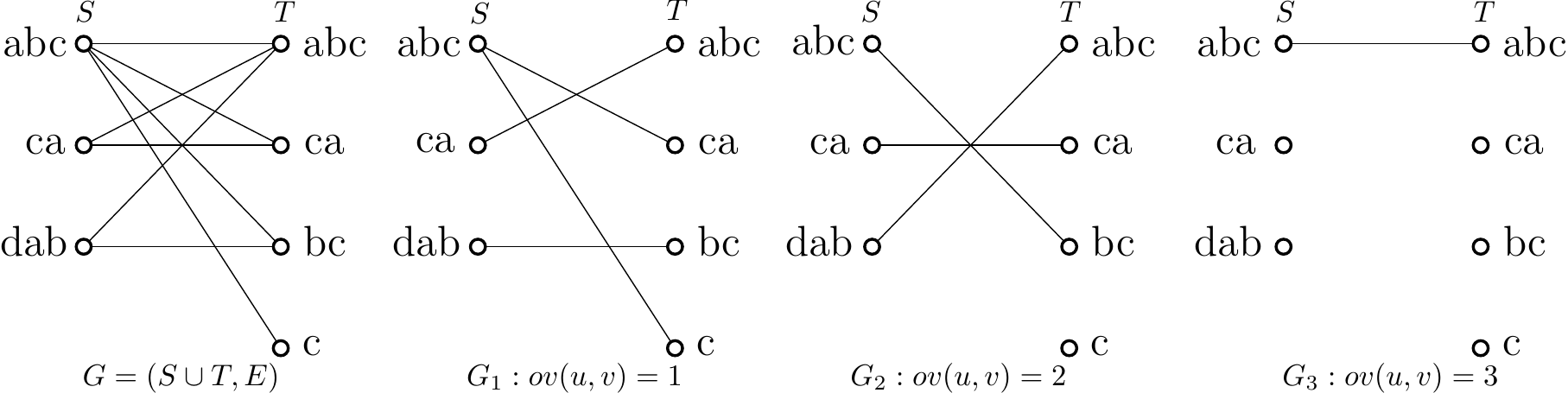}
\end{center}
\caption{Example of $\ell$-decomposition for an overlap labeling $\ell$ of a graph $G$. Each of $G_1,G_2,G_3$ is a  disjoint union of bicliques.} \label{fig:decom_hub_rule}
\end{figure}

Motivated by the HUB-rule, Chikhi at al.~\cite{MilanicRayanMedvedev} introduced the HUB-number of a bipartite graph $G$.
\begin{definition}
The \emph{HUB-number} of a bipartite graph $G$, denoted by $\hub(G)$, is the minimum positive integer $k$ such that there exists a decomposition of size $k$ which satisfies the HUB-rule.
\end{definition}

One of the motivations for introducing the HUB-number of a bipartite graph is the following corollary of Theorem~\ref{thm:ldcmhubrule}:
\begin{corollary}
Let $G$ be a bipartite graph. Then $r(G) \geq \hub(G)$.
\end{corollary}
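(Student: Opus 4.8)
The plan is to extract, from an optimal overlap labeling of $G$, a decomposition of $G$ of size at most $r(G)$ that satisfies the HUB-rule; since $\hub(G)$ is by definition the \emph{minimum} size of a HUB-rule decomposition, the inequality $r(G)\geq\hub(G)$ follows immediately.

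First I would use the definition of readability to fix an overlap labeling $\ell$ of $G$ with $len(\ell)=r(G)$ (we may assume $G$ has at least one edge; the edgeless case is degenerate and depends only on conventions). Next I would pass to the $\ell$-decomposition $w$ of $G$, that is, the function defined by $w(e)=ov(\ell(u),\ell(v))$ for each edge $e=uv\in E(G)$, which was already observed to be well defined. The key point is to check that this $w$ has size at most $r(G)$: since $\ell$ is an overlap labeling, $ov(\ell(u),\ell(v))\geq 1$ for every edge $uv$, and on the other hand $ov(\ell(u),\ell(v))\leq\min\{|\ell(u)|,|\ell(v)|\}\leq len(\ell)=r(G)$ by the definition of $ov$. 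Hence $w$ maps $E(G)$ into $\{1,\dots,r(G)\}$, so it is a decomposition of $G$ of size $r(G)$ in the sense of the definition.

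Finally, by Theorem~\ref{thm:ldcmhubrule} the $\ell$-decomposition $w$ satisfies the HUB-rule. Therefore a decomposition of $G$ of size $r(G)$ satisfying the HUB-rule exists, and by minimality of $\hub(G)$ we conclude $\hub(G)\leq r(G)$, i.e.\ $r(G)\geq\hub(G)$.

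There is essentially no obstacle in this argument, as all the substantial work is contained in Theorem~\ref{thm:ldcmhubrule}. The only subtlety to keep in mind is that a ``decomposition of size $k$'' is merely a function into $\{1,\dots,k\}$ and need not be surjective, so the bound $ov(\ell(u),\ell(v))\leq r(G)$ is already enough even if not every value in $\{1,\dots,r(G)\}$ is attained by $w$.
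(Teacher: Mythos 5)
Your proposal is correct and follows essentially the same route as the paper: take an optimal overlap labeling, pass to its $\ell$-decomposition, invoke Theorem~\ref{thm:ldcmhubrule}, and conclude by minimality of $\hub(G)$. The extra check that the overlap lengths lie in $\{1,\dots,r(G)\}$ is a detail the paper leaves implicit, but otherwise the arguments coincide.
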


\begin{proof}
Let $G$ be a bipartite graph of readability $r$. Then there is an overlap labeling $\ell$ of $G$ of size $r$. The $\ell$-decomposition is of size $r$ and satisfies the HUB-rule by Theorem~\ref{thm:ldcmhubrule} which implies $r(G) = r \geq \hub(G)$.
\end{proof}
An upper bound for the readability by means of the HUB-number is captured by the next theorem:
\begin{theorem}[Chikhi at al.~\cite{MilanicRayanMedvedev}]\label{thm:upbrd}
Let $G$ be a bipartite graph. Then $r(G) \leq 2^{\hub(G)} -1$.
\end{theorem}

We now present lemma which will show that the upper bound given in Theorem~\ref{thm:upbrd} is at most equal to the upper bound $r(G) \leq 2^{\Delta(G)+1}-1$ from \cite{BragaMeidanis}.

\begin{lemma}\label{lemma:hubrg}
Let $G = (S \cup T, E)$ be a bipartite graph. Then $\hub(G) \leq \Delta(G)$.
\end{lemma}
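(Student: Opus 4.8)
The plan is to exhibit a decomposition of $G$ of size $\Delta(G)$ that satisfies the HUB-rule; by the definition of the HUB-number, the existence of such a decomposition immediately gives $\hub(G)\le\Delta(G)$. The natural source of the decomposition is K\"onig's Line Coloring Theorem (Theorem~\ref{thm:Konig}): since $G$ is bipartite, $\chi'(G)=\Delta(G)$, so there is a proper edge-coloring of $G$ using $\Delta(G)$ colors, and each color class is a matching. I would let $w\colon E(G)\to\{1,\dots,\Delta(G)\}$ be exactly this coloring, so that each graph $G_i^{w}$ consists of a matching in $G$ together with the remaining vertices as isolated vertices. (Here one may assume $G$ has at least one edge, which is the only case relevant to the subsequent comparison of upper bounds.)

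It then remains to verify the two conditions of Definition~\ref{def:hub_rule}. Condition~1 is routine: a matching is a disjoint union of bicliques, since each of its edges spans a copy of $K_{1,1}$ and every other vertex is a trivial one-vertex biclique. The substantive point is Condition~2, and the key observation is that a matching contains no two distinct vertices that are non-isolated twins: such vertices would have a common neighbour in $G_i^{w}$, and that neighbour would then be incident with two edges of the matching $G_i^{w}$, which is impossible. Hence the hypothesis of Condition~2 is never met for our $w$, so the implication holds vacuously, for every $i\in\{2,\dots,\Delta(G)\}$.

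With both conditions verified, $w$ is a decomposition of size $\Delta(G)$ satisfying the HUB-rule, so $\hub(G)\le\Delta(G)$. I do not expect a genuine obstacle here; the only step requiring a little care is the vacuity argument for Condition~2, where one must be sure that "non-isolated twins" really cannot occur inside a matching — and in particular that this is so already for $G_1^{w}$, so that no special structure of the first part of the decomposition (as in the worked example following Definition~\ref{def:hub_rule}) is needed.
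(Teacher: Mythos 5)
Your proposal is correct and follows essentially the same route as the paper: decompose $E(G)$ into $\Delta(G)$ matchings via K\"onig's Line Coloring Theorem, note that each colour class (with isolated vertices) is a disjoint union of bicliques, and observe that condition~2 of the HUB-rule holds vacuously because a matching has no non-isolated twins. Your vacuity argument via a common neighbour meeting two matching edges is in fact a bit more careful than the paper's one-line justification of the same point.
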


\begin{proof}
Let $G = (S \cup T, E)$ be a bipartite graph. Let $L = \{G_1, G_2, \dots G_k\}$ for some positive integer $k$ be a decomposition of $G$ into matchings, i.e., for all $i \in \{1,\dots,k\}$ $G_i$ has the same set of vertices as $G$ and the set of edges of $G_i$ is matching in $G$. The decomposition $w$ defined as $G_i^w = G_i$ for $i \in \{1,\dots,k\}$ satisfies HUB-rule since: condition (i) of definition of HUB-rule is obviously satisfied since $G_i^\omega$ for all $i \in \{1,\dots,k\}$ are matchings of $G$ when isolated points of $G_i$ are removed. Condition (ii) is satisfied since for any $i \in \{1,\dots,k\}$ there is no vertices which are twins. By Theorem~\ref{thm:Konig}, $G$ can be decomposed into $\Delta(G)$ matchings, which, by the above arguments, gives the decomposition of size at most $\Delta(G)$ that satisfies the HUB-rule.
\end{proof}

Clearly, combining Theorem~\ref{thm:upbrd} and Lemma~\ref{lemma:hubrg} we can see that the upper bound $2^{\Delta(G)+1}-1$ is improved. As an example, consider complete bipartite graph $K_{n,n}$ with $n$ vertices on both sides. The upper bound for the readability of $K_{n,n}$ given by~\cite{BragaMeidanis} is equal to $2^{n+1}-1$ while the upper bound given by Theorem~\ref{thm:upbrd} is equal to $1$, since $\hub(K_{n,n}) = 1$ achieved by the decomposition mapping each edge to $1$. However, the complexity of calculating the HUB-number is not known in general, which limits the use of corresponding bounds either just in theory or for some special graph classes.

\chapter{Readability of two-dimensional grid graphs}

In this chapter we will explore the readability of \textit{two-dimensional grid graphs}.  We will show that readability is at most $3$ and moreover, we present a polynomial time algorithm for constructing an optimal overlap labeling of such graphs. Sometimes we will write just \textit{grid} or \textit{grid graph} when referring to the two-dimensional grid graph.

\begin{definition}
The \emph{Cartesian product} of graphs $G$ and $H$ is the graph $G \Box H$ such that:
\begin{enumerate}
\item The vertex set of $G\Box H$ is the Cartesian product $V(G) \times V(H)$.
\item Two vertices $(u,u')$ and $(v,v')$ are adjacent in $G\Box H$ if and only if
\begin{itemize}
\item $u = v$ and $u'$ is adjacent to $v'$ in $H$, or
\item $u' = v'$ and $u$ is adjacent to $v$ in $G$.
\end{itemize}
\end{enumerate}
\end{definition}

\begin{definition}
A two-dimensional grid graph $G_{m,n}$ of size $m\times n$ is the Cartesian product $P_m \Box P_n$ of paths on $m$ and $n$ vertices.
\end{definition}

%Observe that the grid graph of size $m\times n$ is isomorphic to the graph $n\times m$. So in further text, if not mentioned differently, we will assume without loss of generality that for the grid graph of the size $m \times n$, $m \leq n$.
Observe that a grid graph of an arbitrary size is bipartite.
An example of a grid graph is given in Figure~\ref{fig:grid_graph_4x4}.

\bigskip
\begin{figure}[h!]
\begin{center}
\includegraphics[width=0.5\linewidth]{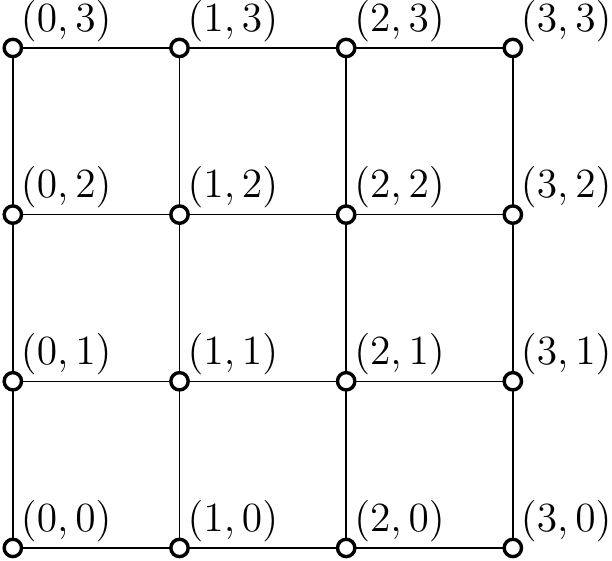}
\end{center}
\caption{An example of a two-dimensional grid graph: the $4\times 4$ grid, that is, the Cartesian product of two copies of $P_4$ paths. The vertices are denoted as described on page \pageref{lemma:lbcgn}.}\label{fig:grid_graph_4x4}
\end{figure}

\section{A polynomial time algorithm for constructing an optimal overlap labeling of grids}

\begin{theorem}\label{thm:rgrid}
Consider two integers $m \geq 3$ and $n \geq 3$ and let $G$ be the grid graph of size $m\times n$. Then $r(G) = 3$.
\end{theorem}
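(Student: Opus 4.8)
The proof naturally splits into two halves: the upper bound $r(G) \le 3$ and the lower bound $r(G) \ge 3$.

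For the upper bound, the plan is to exhibit an explicit overlap labeling of length $3$ that works for every grid. First I would set up coordinates, labeling each vertex by a pair $(i,j)$ with $1 \le i \le m$, $1 \le j \le n$, so that $(i,j)$ is adjacent to $(i\pm 1, j)$ and $(i,j\pm 1)$. Since the grid is bipartite with parts $S$ (vertices with $i+j$ even) and $T$ (vertices with $i+j$ odd), an overlap labeling only needs to record overlaps from $S$ to $T$. The idea is to assign to each vertex a string of length $3$ over a suitable alphabet so that (a) every grid edge induces an overlap of positive length, and (b) no non-adjacent pair (in the same bipartition sense, $u \in S$, $v \in T$) overlaps. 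A good strategy is to decompose the edge set of the grid into a disjoint union of bicliques plus matchings (the ``horizontal'' edges and ``vertical'' edges each form a union of paths, which can be $2$-edge-colored into matchings); by the remark following Definition~\ref{def:hub_rule}, such a decomposition of size $3$ satisfies the HUB-rule, so $\hub(G) \le 3$, and one would then either invoke Theorem~\ref{thm:upbrd} (which gives only $r(G) \le 2^3 - 1 = 7$, too weak) or, better, build the length-$3$ labeling by hand: use position $1$ to encode one matching of horizontal edges, position $3$ to encode the other, and the middle position $2$ to encode the vertical edges, choosing characters so that overlaps of length $1$ occur exactly along intended edges and overlaps of length $2$ or $3$ never occur spuriously. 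Verifying (b) — that no unwanted overlap of any length $1,2,3$ appears — is the routine but delicate bookkeeping step.

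For the lower bound $r(G) \ge 3$, the plan is to show $r(G) \ge 2$ and then rule out $r(G) = 2$. First, $G_{m,n}$ for $m,n \ge 3$ is not a disjoint union of bicliques (it contains an induced $P_4$, hence an induced $C_6$ or at least is not a disjoint union of complete bipartite graphs), so by Theorem~\ref{thm:read1} we have $r(G) \ne 1$, i.e. $r(G) \ge 2$. To exclude $r(G) = 2$, I would apply the lower bound of Theorem~\ref{thm:lowerb}: $r(G) \ge \dt(G) + 1$, so it suffices to find two vertices $u,v$ in the same part with $\dt(u,v) \ge 2$, equivalently to show every same-part pair has $|N(u)\setminus N(v)| \ge 2$ or $|N(v)\setminus N(u)| \ge 2$ — wait, $\dt(G)$ is the \emph{minimum} over same-part pairs, so I actually need $\dt(u,v) \ge 2$ for \emph{all} same-part pairs $u \ne v$. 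Since $m,n \ge 3$, any two distinct vertices in the same bipartition class either have disjoint neighborhoods (giving distinctness equal to the larger degree, which is $\ge 2$) or share at most one neighbor while each has degree $\ge 2$, again forcing $\dt(u,v) \ge 2$; a short case analysis on the positions of $u$ and $v$ (corner/edge/interior) completes this. Hence $\dt(G) \ge 2$ and $r(G) \ge 3$.

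I expect the main obstacle to be the \emph{explicit construction} for the upper bound: producing a concrete length-$3$ labeling and then checking that absolutely no spurious overlap of length $1$, $2$, or $3$ arises between non-adjacent same-part vertices, uniformly over all $m,n \ge 3$ (including the boundary rows and columns, where degrees drop and special care is needed). The lower bound via distinctness is comparatively clean once the case analysis is organized by vertex type. If the direct construction proves unwieldy, the fallback is to describe the labeling algorithmically (as the theorem's statement promises a polynomial-time algorithm) and argue correctness by the structural properties of the HUB-decomposition rather than by brute-force character comparison.
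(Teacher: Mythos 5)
Your lower bound strategy does not work: the distinctness of a grid graph is $1$, not $2$. Concretely, take $u = (0,1)$ and $v = (1,0)$, the two neighbours of the corner $(0,0)$ (coordinates as in the paper). They lie in the same part of the bipartition, and $N(u) = \{(0,0),(0,2),(1,1)\}$, $N(v) = \{(0,0),(2,0),(1,1)\}$, so $|N(u)\setminus N(v)| = |N(v)\setminus N(u)| = 1$ and $\dt(u,v) = 1$; hence $\dt(G_{m,n}) = 1$ for all $m,n \geq 3$ and Theorem~\ref{thm:lowerb} only yields $r(G) \geq 2$. The flaw in your case analysis is the claim that two same-part vertices sharing a neighbour still have distinctness at least $2$: when the shared neighbours account for all but one neighbour of each vertex (as above, or for two corners of a $3\times n$ grid such as $(0,0)$ and $(2,0)$), both difference sets are singletons. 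The paper gets the lower bound by an entirely different route: it proves by an exhaustive case analysis (Lemma~\ref{lemma:rgd}) that the $7$-vertex graph $F$, a domino with a pendant vertex attached to a degree-$3$ vertex, has readability exactly $3$, observes that $F$ is an induced subgraph of every $G_{m,n}$ with $m,n\geq 3$, and invokes monotonicity of readability under induced subgraphs (Lemma~\ref{thm:rdbind}). Some such structural argument (or, alternatively, showing the grid has no feasible matching in the sense of Theorem~\ref{thm:chrrdb2}, after checking twin-freeness) is genuinely needed; no distinctness computation can give $3$ here.

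Your upper bound plan is in the right spirit but is only a sketch, and the delicate part you flag (no spurious overlaps of length $1$, $2$, or $3$, uniformly over all boundary configurations) is exactly where the work lies. The paper avoids boundary case distinctions by embedding $G_{m,n}$ as an induced subgraph of a toroidal grid $TG_k$ with $k=\max\{m,n\}+1$ and constructing a length-$3$ labeling of the torus from a decomposition of its edge set into a $2$-factor consisting of disjoint $4$-cycles (overlaps of length $1$), and two perfect matchings (overlaps of lengths $2$ and $3$), then verifying that no unintended overlap arises; the bound for the grid follows again from Lemma~\ref{thm:rdbind}. If you pursue your direct construction on the grid, you must supply that verification in full; as it stands, neither half of your argument is complete, and the lower-bound half is based on a false inequality.
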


In order to be able to prove the above theorem, we introduce the following lemma:

\begin{lemma}\label{thm:rdbind}
Let $G$ be a graph and let $H$ be an induced subgraph of $G$. Then $r(H) \leq r(G)$.
\end{lemma}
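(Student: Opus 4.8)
The plan is to take an optimal overlap labeling $\ell$ of $G$ — one with $\mathrm{len}(\ell) = r(G)$ — and restrict it to the vertex set of $H$. Since $H$ is an induced subgraph of $G$, two vertices $u, v$ of $H$ are adjacent in $H$ if and only if they are adjacent in $G$, which by the overlap-labeling property happens if and only if $ov(\ell(u), \ell(v)) > 0$. Hence the restriction $\ell|_{V(H)}$ is still an overlap labeling, now of $H$.

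The only subtlety is that $H$ and $G$ might be directed or bipartite graphs, and in the directed case an overlap labeling must be \emph{injective} (see the definition of $r(D)$ for digraphs). I would handle this by noting that a restriction of an injective function is injective, so $\ell|_{V(H)}$ remains injective when $G$ is a digraph; in the bipartite case injectivity is not required and there is nothing to check. In either setting, if $H$ is obtained from $G$ by deleting vertices we should check that the induced-subgraph relation is compatible with the bipartition (the parts of $H$ are just the intersections of the parts of $G$ with $V(H)$), which is immediate.

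Finally I would conclude: $\ell|_{V(H)}$ is a (resp. injective) overlap labeling of $H$ of length at most $\mathrm{len}(\ell) = r(G)$, since the maximum length of a string over a subset of vertices is no larger than over all vertices. By the definition of readability as the minimum length of such a labeling, $r(H) \leq \mathrm{len}(\ell|_{V(H)}) \leq r(G)$.

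There is essentially no hard part here; the lemma is a routine monotonicity statement. The only thing to be careful about is matching the argument to whichever of the two readability definitions (digraph or bipartite) is intended — the cleanest approach is to phrase the restriction argument uniformly and remark that injectivity, when required, is inherited automatically.
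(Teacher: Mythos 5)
Your proof is correct and is essentially the same as the paper's: restrict an optimal overlap labeling of $G$ to $V(H)$, note that the induced-subgraph property preserves the overlap/adjacency equivalence, and observe the length can only decrease. The extra remarks on injectivity and bipartition are fine but not needed beyond what the paper states.
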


\begin{proof}
Let $G$ be a graph of readability $r$ and let $\ell$ be an overlap labeling of $G$ of length $r$. We define a labeling $\ell_H$ of $H$ as follows: for any vertex $u \in V(H)$, $\ell_H(u) = \ell(u)$. By the definition of induced subgraph, we conclude that $\ell_H$ is an overlap labeling of $H$. Obviously, $len(\ell_H) \leq len(\ell)$ which completes the proof.
\end{proof}

The idea of the proof of Theorem~\ref{thm:rgrid} is to show that the lower bound on the readability of a grid graph is $3$ and then find an overlap labeling of length $3$. In order to prove the lower bound using Theorem~\ref{thm:rdbind} we will introduce the following graph (already mentioned in the previous chapter):
\begin{definition}\label{def:domino}
The \emph{domino} is the grid graph of size $2\times 3$.
\end{definition}
Consider the graph $F$ obtained by taking the graph $F = K_1 + domino$ and adding an edge between $K_1$ and one of the vertices of the domino with degree $3$.
Graph $F$ is also bipartite, see Figure~\ref{fig:graph_F}.

\begin{figure}[h!]
\begin{center}
\includegraphics[width=0.25\linewidth]{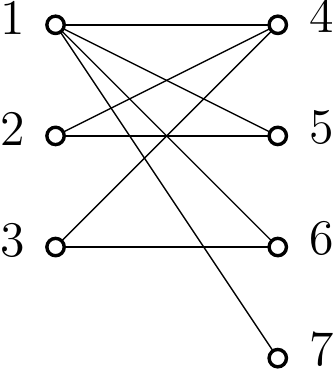}
\end{center}
\caption{An example of bipartitioning of graph $F$}\label{fig:graph_F}
\end{figure}

\begin{lemma}\label{lemma:rgd}
The readability of $F$ is 3.
\end{lemma}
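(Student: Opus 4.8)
The plan is to establish the two inequalities $r(F) \geq 3$ and $r(F) \leq 3$ separately. For the lower bound, I would first argue that $r(F) \geq 2$: the graph $F$ contains an induced $C_6$ (the domino contains a $6$-cycle, or more simply $F$ is not a disjoint union of bicliques since, e.g., $P_4$ is an induced subgraph and it is connected but not a biclique), so by Theorem~\ref{thm:read1} we have $r(F) \neq 1$, hence $r(F) \geq 2$. To push this to $r(F) \geq 3$, I would invoke Theorem~\ref{thm:chrrdb2}: after checking that $F$ is twin-free (the extra vertex attached to a degree-$3$ vertex of the domino breaks any potential twins — one should verify no two vertices in the same part have equal neighborhoods), it suffices to show $F$ has \emph{no} feasible matching. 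Here I would enumerate: $F$ contains the domino $D$ as an induced subgraph, so by condition (3) of Definition~\ref{def:feasible_matching} any feasible matching $M$ must satisfy $M \cap E(D) \in \{\{e_2,e_6\},\{e_3,e_5\}\}$; I would then check that in either case $F - M$ fails to be a disjoint union of bicliques (condition (1)), because the pendant edge joining $K_1$ to the domino, together with the surviving domino edges, creates an induced $P_4$ or $C_6$ in $F-M$ regardless of whether that pendant edge is in $M$. A short case analysis on the two admissible choices for $M\cap E(D)$ and the two possibilities (pendant edge in $M$ or not) closes this.

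For the upper bound $r(F) \leq 3$, I would exhibit an explicit overlap labeling $\ell$ of $F$ with strings of length $3$. Concretely, I would fix the bipartition $F = (S \cup T, E)$ shown in Figure~\ref{fig:graph_F}, write down a small alphabet, and assign length-$3$ strings to the seven vertices so that $ov(\ell(u),\ell(v)) > 0$ exactly for the edges of $F$. A natural way to find such a labeling by hand is to start from a feasible-matching-like decomposition of the domino (which has readability $2$ and hence a length-$2$ labeling) and then spend the third character to attach the pendant vertex while killing all the unwanted overlaps; alternatively one can just guess-and-check a labeling directly, which is feasible since $F$ has only $7$ vertices. I would present the labeling in a short table and verify edge-by-edge and non-edge-by-non-edge that the overlap condition holds.

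The main obstacle I expect is the lower-bound direction, specifically the verification that $F$ has no feasible matching: this requires correctly identifying \emph{all} induced copies of $C_6$ and of the domino inside $F$ (not just the obvious one) so that the constraints from conditions (2) and (3) of Definition~\ref{def:feasible_matching} are all accounted for, and then showing the resulting system is infeasible. If that case analysis turns out to be delicate, a cleaner alternative would be a direct ad hoc argument: assume an overlap labeling of length $2$ exists, and derive a contradiction by tracking which vertex-pairs must overlap by $1$ versus by $2$ along the domino part of $F$, using the rigidity of overlaps of length $2$ (an overlap of length $2$ between $u$ and $v$ forces two consecutive character equalities, which propagates strong constraints to neighbors). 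I would fall back on whichever of these two routes produces the shorter rigorous argument.
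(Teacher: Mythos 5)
Your plan is correct, but for the lower bound it takes a genuinely different route from the paper. The paper proves $r(F)\geq 3$ by a direct, self-contained contradiction argument: it assumes a length-$2$ overlap labeling with $\ell(1)=ab$, splits into four cases according to whether $ov(1,4)$ and $ov(1,6)$ equal $1$ or $2$, and propagates character equalities along the domino until a forbidden overlap or a missing overlap (typically at vertex $5$ or the pendant vertex $7$) appears; this is exactly the ``ad hoc fallback'' you describe at the end. Your primary route instead goes through Theorem~\ref{thm:chrrdb2}: after checking twin-freeness (which does hold for $F$), you rule out a feasible matching. That argument does go through, and in fact more smoothly than you anticipate: $F$ contains no induced $C_6$ at all (every $6$-cycle of $F$ carries the middle chord of the domino), so condition (2) of Definition~\ref{def:feasible_matching} is vacuous; the unique induced domino forces $M\cap E(\mathrm{domino})\in\{\{e_2,e_6\},\{e_3,e_5\}\}$, and both of these pairs saturate both degree-$3$ vertices of the domino, so the pendant edge can never lie in $M$; in either case $F-M$ contains an induced $P_4$ through the pendant vertex, so it is not a disjoint union of bicliques. (Your preliminary step $r(F)\geq 2$ via Theorem~\ref{thm:read1} is then redundant.) The trade-off: your route is shorter and more structural, but it rests on Theorem~\ref{thm:chrrdb2}, which the paper only cites from an unpublished manuscript and does not prove, whereas the paper's case analysis is elementary and self-contained, which matters since Lemma~\ref{lemma:rgd} underpins the exact readability of all grid and toroidal grid graphs. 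For the upper bound both you and the paper exhibit a length-$3$ labeling (the paper points to the first image of Figure~\ref{fig:decom_hub_rule}); your idea of extending a length-$2$ labeling of the domino is a fine way to produce it, provided you verify all non-edges, in particular those involving the pendant vertex.
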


\begin{proof}
We will prove the lemma by contradiction. We use the same denotation of vertices as in Figure~\ref{fig:graph_F}. Suppose that there exists an overlap labeling $\ell$ of $F$ with $len(\ell) \leq 2$.
We may assume without loss of generality that all strings assigned to the vertices of $F$ are of length 2 and moreover, we assume that $\ell(1) = ab$ for some distinct characters $a$ and $b$ since if $a = b$ then we can set $\ell(1) = ca$ for $c \neq a$ and obtain an overlap labeling of $F$ of length 2 and obtain the desired.
Observe that $\ell(1) \neq \ell(2)$ since the open neighborhoods are different. The same argument is used to prove that $\ell(1) \neq \ell(3)$ and $\ell(2) \neq \ell(3)$. Similarly, the strings $\ell(4)$ and $\ell(7)$ are pairwise distinct.
\begin{enumerate}
\item Suppose that $ov(1,4) = 2$ and $ov(1,6) = 2$. Then $\ell(4) = ab$ and $\ell(6) = ab$. Since $\ell(1) \neq \ell(3)$, the second character of vertex $3$ must be equal to $a$ in order to have an overlap between $3$ and $4$. Let $\ell(3) = ca$ for some character $c$. Since $\ell(2) \neq \ell(3)$ and we must have an overlap between $2$ and $4$, it must hold that $\ell(2) = da$ for some character $d$. But then we have an overlapping between $2$ and $6$ which is not allowed since $2$ and $6$ are not adjacent, so we obtained a contradiction.

\item Suppose that $ov(1,4) = 2$ and $ov(1,6) = 1$. Then $\ell(4) = ab$ and the first character of vertex $6$ is equal to $b$. By the above $\ell(1) \neq \ell(3)$ which imply that the second character of $3$ is equal to $a$ in order to have an overlap between $3$ and $4$. Then, the only possibility to have an overlap between $3$ and $6$ is $\ell(3) = \ell(6) = ba$. Apply similar arguments as above to obtain $\ell(2) = ca$ for some character $c \neq b$. The first character of $5$ can not be equal to $a$ since we would have $ov(3,5) > 0$. Since $2$ and $5$ are adjacent it must hold that $\ell(5) = ca$. But then, there is no overlapping of length at most 2 between vertices $1$ and $5$ (recall that we supposed $a \neq b$), a contradiction.

\item Suppose now that $ov(1,4) = 1$ and $ov(1,6) = 2$. The first character of $4$ is equal to $b$ and $\ell(6) = ab$. Since $\ell(3) \neq \ell(1)$, the second character of $3$ is equal to $a$ and in order to have an overlap between $3$ and $4$ it must hold that $\ell(3) = \ell(4) = ba$.
Since $\ell(2) \neq \ell(3)$ and $2$ and $4$ are adjacent we conclude $\ell(2) = cb$ for some $c \neq a$.
Furthermore, $1$ and $7$ are adjacent but the first character of $7$ can not be equal to $b$ since we would have $ov(2,7) > 0$ and $\ell(7) \neq ab$ because we would have $ov(3,7) > 0$. Then an overlapping between $1$ and $7$ of length at most 2 is not possible, a contradiction.

\item Suppose that $ov(1,4) = 1$ and $ov(1,6) = 1$. Then the first character of $4$ and the first character of $6$ are both equal to $b$. The second character of $2$ must be different from $b$ since otherwise we would have $ov(2,6) > 0$. Since 2 and 4 are adjacent, we infer that $\ell(2) = \ell(4) = bd$ for some character $d \neq b$. Since $\ell(3) \neq \ell(2)$ and 3 is adjacent to 4 we infer that the second character of 3 is $b$. Since 2 and 5 are adjacent and 3 and 5 are not adjacent, the only possibility is that $ov(2,5) = 1$ i.e., the first character of vertex 5 is equal to $d$. By the above inequality, $d \neq b$ and since 1 and 5 are adjacent we have $ov(1,5)=2$, which implies $\ell(5) = db$ with $d = a$. Since 1 and 7 are adjacent we must have $ov(1,7) > 0$. But if $ov(1,7)= 1$ then the first character of 7 is equal to $b$ which is not possible since it would imply an overlapping between 3 and 7 which are not adjacent. If $ov(1,7) = 2$ then $\ell(7) = ab$ and since $d=a$ we have an overlapping between 2 and 7 which is not allowed. Thus, there is no overlap between 1 and 7 of length at most 2, a contradiction.
\end{enumerate}
By the above analysis, we conclude that $r(F) \geq 3$. An overlap labeling of length 3 is given in Figure~\ref{fig:decom_hub_rule} (first image). Thus, $r(F) = 3$ as claimed.
\end{proof}

Recall that the grid graph of size $m\times n$ is defined as the Cartesian product of the paths $P_m$ and $P_n$. If we denote the vertices of $P_m$ with $\mathbb{Z}_m = \{0, 1, \dots m-1\}$ and the vertices of $P_n$ with $\mathbb{Z}_n = \{0,1,\dots,n-1\}$ then $V(G_{m,n}) = \mathbb{Z}_m \times \mathbb{Z}_n$ and two vertices $(i,j)$ and $(i',j')$ are adjacent if and only if $|i-i'| + |j-j'| = 1$. We can thus represent $G_{m,n}$ in the two-dimensional coordinate system as in Figure~\ref{fig:grid_graph_4x4}.

For a grid graph $G$, we will denote by $L_G$ function assigning coordinate pairs to vertices of $G$. Sometimes we will write just $(i,j)$  thinking of vertex $u = L_G^{-1}((i,j))$.

We now introduce the class of \textit{toroidal grid graphs}. For positive integers $m \geq 3$ and $n \geq 3$, the \textit{toroidal grid graph} $TG_{m,n}$ is the Cartesian product of the cycles $C_m$ and $C_n$. When both $m$ and $n$ are even, $TG_{m,n}$ is bipartite. For the purpose of establishing an upper bound on the readability of grid graphs, we will consider the following special case of toroidal grid graphs. For a positive integer $n$, we introduce the graph $TG_n$ by setting $TG_n = TG_{4n,4n}$.

\begin{lemma}\label{lemma:lbcgn}
For every positive integer $n$, the graph $TG_n$ is bipartite and $r(TG_n) \geq 3$.
\end{lemma}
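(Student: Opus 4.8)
The plan is to handle the two claims separately. The bipartiteness of $TG_n = TG_{4n,4n}$ is immediate: the toroidal grid $TG_{m,n} = C_m \Box C_n$ is bipartite whenever both $m$ and $n$ are even (color each vertex $(i,j)$ by the parity of $i+j$; an edge changes exactly one coordinate by $1$, hence changes the parity), and here $m=n=4n$ is even. So the only real content is the lower bound $r(TG_n) \geq 3$.

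For the lower bound, the natural route is to exhibit an induced subgraph of $TG_n$ isomorphic to the graph $F$ from Lemma~\ref{lemma:rgd} and then invoke Lemma~\ref{thm:rdbind} (readability is monotone under induced subgraphs), which gives $r(TG_n) \geq r(F) = 3$. Recall $F$ is $K_1 + \text{domino}$ with one extra edge joining the $K_1$ to a degree-$3$ vertex of the domino; concretely, $F$ has $7$ vertices, and it is bipartite. The first step is therefore to pin down an explicit set of $7$ vertices in $TG_{4n,4n}$ whose induced subgraph is exactly $F$. Since $TG_{4n,4n}$ is vertex-transitive and each $4 \times 4$ (or even $2 \times 3$) block of coordinates with $n \geq 1$ embeds without wraparound identifications (because $4n \geq 4$), I can take a small patch of the grid: the six vertices $\{(0,0),(0,1),(0,2),(1,0),(1,1),(1,2)\}$ form an induced domino, and I then adjoin one more vertex adjacent to a corner of this domino — e.g. $(2,0)$, which is adjacent to $(1,0)$ — playing the role of the pendant $K_1$. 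One must check that the resulting $7$ vertices induce no extra edges in $TG_{4n,4n}$: the only candidate extra adjacencies are those created by the torus identification, and for $n \geq 1$ the coordinates $0,1,2$ are pairwise non-adjacent-through-wraparound in $C_{4n}$ (the wraparound edge in $C_{4n}$ joins $0$ and $4n-1$, which is not among $\{0,1,2\}$ when $4n-1 \geq 3$, i.e. always). Hence the induced subgraph on these $7$ vertices is precisely the domino plus a pendant vertex at a degree-$3$ domino vertex, which is $F$.

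The main obstacle — though a minor one — is simply the bookkeeping of verifying that the chosen $7$ vertices induce $F$ and not some supergraph: one must confirm both that all required edges of $F$ are present (straightforward from the adjacency rule $|i-i'|+|j-j'|=1$ read modulo $4n$) and, more delicately, that the pendant vertex has degree exactly $1$ within the induced subgraph, i.e. that $(2,0)$ is not accidentally adjacent to $(0,0)$ or to any other selected vertex. Since $r \geq 3$ wrapping may identify antipodal coordinates, it is worth noting explicitly that for $n \geq 1$ we have $4n \geq 4 > 2$, so none of the small offsets used ($0,1,2$ in either coordinate) collide. Once the induced copy of $F$ is in hand, Lemma~\ref{thm:rdbind} and Lemma~\ref{lemma:rgd} finish the argument: $r(TG_n) \geq r(F) = 3$.
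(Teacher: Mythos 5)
Your overall strategy coincides with the paper's: bipartiteness of $TG_{4n,4n}$ (the paper argues via the odd distance $4n-1$ between the endpoints of the added wraparound edges in $G_{4n,4n}$, you via the parity coloring of $i+j$; both are fine since $4n$ is even), and the lower bound by exhibiting an induced copy of $F$ and invoking Lemma~\ref{thm:rdbind} together with Lemma~\ref{lemma:rgd}. However, your explicit vertex choice does not produce $F$. By definition, $F$ is the domino with one extra vertex joined to a vertex of the domino of degree $3$, i.e.\ to one of its two middle vertices. In your patch the domino is induced on $\{(0,0),(0,1),(0,2),(1,0),(1,1),(1,2)\}$, whose degree-$3$ vertices are $(0,1)$ and $(1,1)$; the vertex $(2,0)$ you adjoin is adjacent (among the chosen vertices) only to $(1,0)$, which is a corner of degree $2$. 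So your seven vertices induce the domino with a pendant at a degree-$2$ vertex, a graph that is not isomorphic to $F$, and Lemma~\ref{lemma:rgd} says nothing about it; nowhere is it established that this corner-pendant graph has readability at least $3$, so the inequality $r(TG_n)\geq 3$ does not follow as written. (Your closing sentence even asserts the pendant sits at a degree-$3$ domino vertex, contradicting your own construction.)

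The repair is immediate: attach the pendant at a middle vertex, e.g.\ replace $(2,0)$ by $(2,1)$, whose only neighbor among your six domino vertices is $(1,1)$; or simply use the paper's set $(0,1),(1,0),(1,1),(1,2),(2,0),(2,1),(2,2)$, where $(0,1)$ is the pendant attached to the degree-$3$ vertex $(1,1)$. Your inducedness check (no wraparound adjacencies arise among coordinates $0,1,2$ because $4n\geq 4$) carries over verbatim, and with that correction the argument is the same as the paper's.
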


\begin{proof}
Since any grid graph is bipartite it is enough to show that vertices denoted with $(0,j)$ and $(4n-1,j)$ (as well as $(j,0)$ and $(j,4n-1)$ for $j = 0,1,\dots,4n-1$) are in different parts of bipartition of the grid graph $G_{4n,4n}$ because vertices denoted with $(0,j)$ and $(4n-1,j)$ ($(j,0)$ and $(j,4n-1)$) are not adjacent in $G_{4n,4n}$. This is true, since the distance between these two vertices in graph $G_{4n,4n}$ is $4n-1$, an odd number (recall that two vertices of a connected bipartite graph belong to the same part of the bipartition if and only if the distance between them is an even number). Thus, adding an edge between $(0,j)$ and $(4n-1,j)$ will not affect bipartiteness. We can use symmetry to infer the same for vertices $(j,0)$ and $(j,4n-1)$ for $j \in \{0,\dots, 4n-1\}$.

To show that $r(TG_n) \geq 3$, we will show that $TG_n$ contains $F$ (defined above) as induced subgraph. For example, take vertices denoted with $$(0,1), (1,0), (1,1), (1,2), (2,0), (2,1), (2,2).$$ Obviously, the subgraph of $TG_n$ induced by these vertices is isomorphic to $F$. By Lemma~\ref{thm:rdbind} and Lemma~\ref{lemma:rgd} we conclude $r(TG_n) \geq 3$.
\end{proof}

A bipartition of $TG_n = (S\cup T, E)$ can be obtained by the following rule: $S = \{(i,j) : i + j \equiv 0 \text{ (mod 2)} \}$ and $T = \{ (i,j) : i+ j \equiv 1 \text{ (mod 2)}\}$. In what follows, we will consider this bipartition of $TG_n$. For the sake of simplicity we  add $4n$ vertices (and corresponding edges) to the graph $TG_n$ with coordinates $(4n,0),(4n,1), \dots, (4n,4n-1)$, which we identify with $(0,0), (0,1), \dots, (0,4n-1)$, respectively, and we also add another $4n$ vertices with coordinates $(0,4n), (1,4n), \dots, (4n-1,4n)$, which we identify with $(0,0), (1,0), \dots, (4n-1,4n)$, respectively. We also add vertex $(4n,4n)$, which we identify with $(0,0)$. For example, vertices $(4n-1,0)$ and $(4n,0)$ are adjacent in $TG_n$ since $(0,0)$ and $(4n-1,0)$ are adjacent. Note that this identification is natural in view of the fact that the vertex set of $TG_n$ is $\mathbb{Z}_{4n}\times \mathbb{Z}_{4n}$. Figure~\ref{fig:tg_4x4} provides an example. Graph $TG_1$ is presented on the left image in Figure~\ref{fig:tg_4x4}. The edges which do not appear in corresponding grid graph ($G_{4,4}$) are marked with red. On the right image of Figure~\ref{fig:tg_4x4}, we added vertices $(j,4)$ and $(4,j)$ for $j\in\{0,\dots,4\}$ which are identified with vertices $(j,0)$ and $(0,j)$ for $j \in \{0,\dots,4\}$ respectively. For example, there is a red edge between $(0,4)$ and $(0,3)$ since there is a red edge between $(0,0)$ and $(0,3)$ (on the left image).
\bigskip
\bigskip
\begin{figure}[h!]
\begin{center}
\includegraphics[width=1\linewidth]{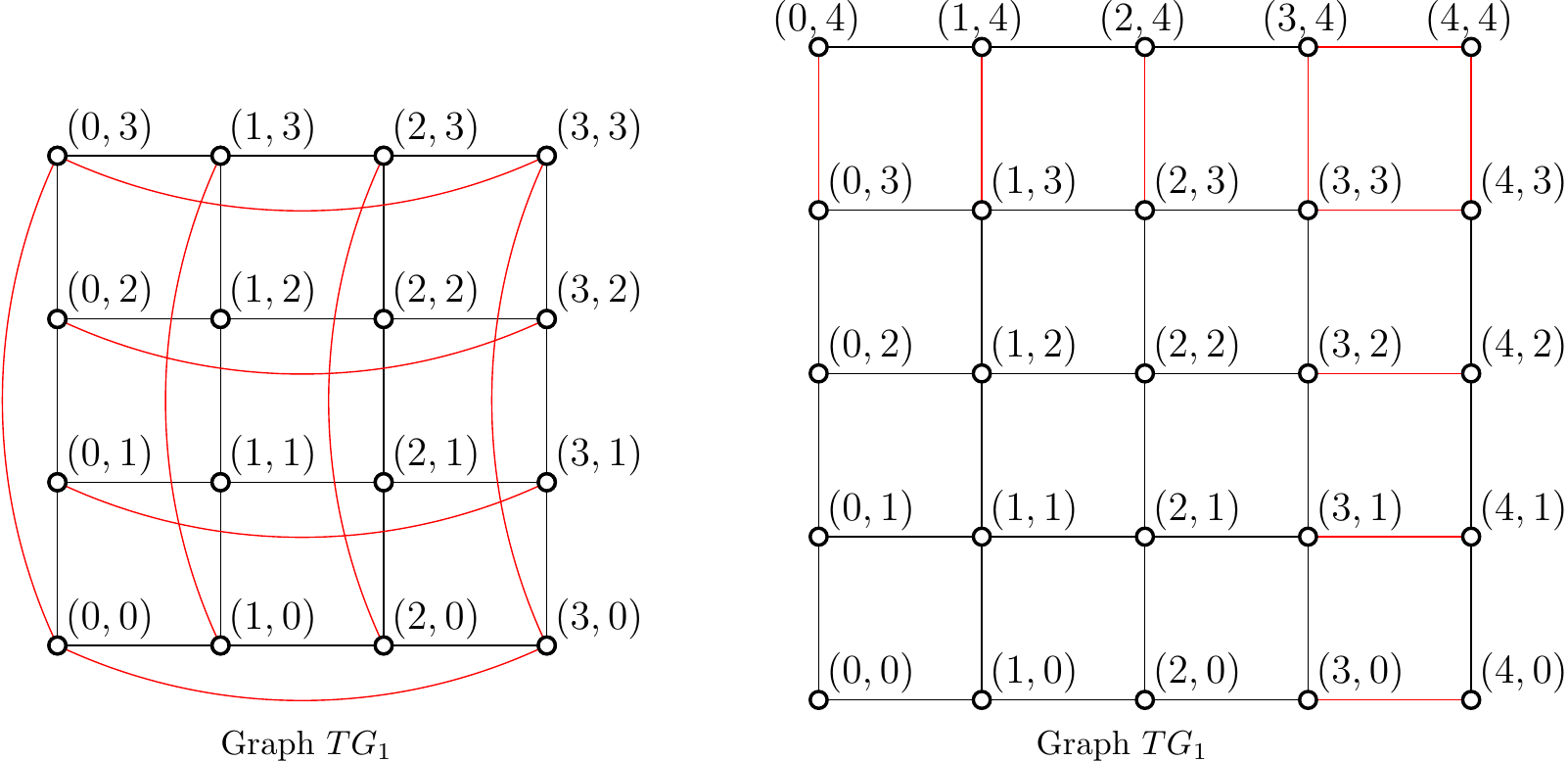}
\end{center}
\caption{Graph $TG_1$ presented in two equivalent but visually different ways.} \label{fig:tg_4x4}
\end{figure}

We will now prove that graph $TG_n$ is of readability exactly 3. We will define the length of the overlapping for each edge and then construct an overlap labeling. The pseudocode of the algorithm for determining the length of an overlapping of any two adjacent vertices of $TG_n$ is given in Algorithm~\ref{algorithm4}. In the algorithm, calculations are done modulo $4n$.
\begin{algorithm}[h!]\label{algorithm4}
\KwIn{Graph $TG_n$ and denotation of vertices $L = L_{TG_n}$ defined above}
\KwOut{$\{G_1, G_2, G_3\}$, $G_i$ for $i \in \{1,2,3\}$ is a spanning subgraph of $TG_n$ with the set of edges which will have an overlapping of size $i$ in the overlap labeling of $TG_n$}
\caption{Defining the size of the overlapping for edges of graph $TG_n$}
{
	$G_1 = G_2 = G_3 = (V(TG_n), \emptyset)$ \\
	\For{$i \in \{0,\dots, 4n-1\}$}
	{
		\For{$j \in \{0, \dots, 4n-1\}$}
		{
			\If{$(i \equiv 0 \pmod 2 \text{ and } j \equiv 0 \pmod 4) $ or \\
			$(i \equiv 1 \pmod 2 \text{ and } j \equiv 2 \pmod 4) $		
			}
			{
				$u_1 = (i,j)$;
				$u_2 = (i+1,j+1)$;
				$v_1 = (i,j+1)$;
				$v_2 = (i+1,j)$; \\
				add edges $u_1v_1$, $u_1v_2$, $u_2v_1$, $u_2v_2$ to $E(G_1)$
			}
			\If{$i \equiv 1 \pmod 2$}
			{
				$u = (i,j)$;
				$v = (i+1,j)$; \\
				add edge $uv$ to $E(G_2)$
			}
			\If{$(i \equiv 2 \pmod 4 \text{ and } j \equiv 0 \pmod 2)$ or \\
			$i \equiv 0 \pmod 4 \text{ and } j \equiv 1 \pmod 2)$			
			}
			{
				$u_1 = (i,j)$;
				$v_1 = (i,j+1)$;
				$u_2 = (i+1,j)$;
				$v_2 = (i+1,j+1)$; \\
				add edges $u_1v_1, u_2v_2$ to $E(G_3)$
			}
		}
	}
}
\Return{$\{ G_1, G_2, G_3 \}$;}
\end{algorithm}
The result of Algorithm~\ref{algorithm4} for graph $TG_2$ is given in Figure~\ref{fig:alg_exec2}.

The structure of graphs $G_1, G_2$ and $G_3$, obtained by Algorithm~\ref{algorithm4} with the input $(TG_n, L = L_{TG_n})$ for an arbitrary positive integer $n$, are captured by Theorem~\ref{thm:resultOfAlg}.

\begin{definition}\label{def:k_factor}
A \emph{regular graph} is a graph where each vertex has the same number of neighbors, i.e., every vertex has the same degree. A \emph{k-regular graph} is a regular graph with vertices of degree $k$. A \emph{k-factor} of a graph is spanning $k$-regular subgraph.
\end{definition}

\begin{theorem}\label{thm:resultOfAlg}
Let $TG_n$ and function $L_{TG_n}$ be given. Denote by $G_1, G_2, G_3$ the graphs obtained by the Algorithm~\ref{algorithm4} for the input $(TG_n, L = L_{TG_n})$. Then $G_1$ is a $2$-factor of $TG_n$ consisting of a disjoint union of 4-cycles, while $G_2$ and $G_3$ are $1$-factors of $TG_n$. Moreover, every edge of $TG_n$ is contained in a unique graph among $G_1$, $G_2$, $G_3$.
\end{theorem}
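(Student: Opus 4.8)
The statement is purely combinatorial: we must verify that Algorithm~\ref{algorithm4}, run on $TG_n$, partitions $E(TG_n)$ into three spanning subgraphs $G_1, G_2, G_3$ with the claimed degree structure. I would organize the proof around three claims: (a) every edge of $TG_n$ is placed into at least one of $G_1, G_2, G_3$; (b) no edge is placed into more than one; (c) the local degree counts come out as stated ($G_1$ is $2$-regular and its components are $4$-cycles, $G_2$ and $G_3$ are $1$-regular). Since all arithmetic in the algorithm is modulo $4n$, and the residues that trigger the three \texttt{if}-blocks depend only on $i \bmod 4$ and $j \bmod 4$, the whole analysis reduces to bookkeeping on the $4\times 4$ torus of residue classes $(i \bmod 4, j \bmod 4)$; I would make that reduction explicit first, so that "checking all cases" is genuinely finite.

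**Key steps.** First I would set up notation: an edge of $TG_n$ is either \emph{horizontal}, joining $(i,j)$ and $(i,j+1)$, or \emph{vertical}, joining $(i,j)$ and $(i+1,j)$ (indices mod $4n$). I would then trace which iteration of the double loop can add each edge. The $G_2$-block adds exactly the vertical edges $(i,j)$--$(i+1,j)$ with $i$ odd; as $i$ ranges over $\mathbb{Z}_{4n}$ each vertical edge is "the bottom edge at some row $i$" for exactly one $i$, so $G_2$ consists precisely of the vertical edges in odd rows, giving each vertex exactly one incident $G_2$-edge (a vertex in row $i$ uses either the edge to row $i+1$, if $i$ odd, or the edge to row $i-1$, if $i$ even) — hence $G_2$ is a $1$-factor. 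Next, the $G_3$-block adds horizontal edges $(i,j)$--$(i,j+1)$ and $(i+1,j)$--$(i+1,j+1)$ when $(i,j)$ lies in one of two residue patterns; I would check that the union over all triggering $(i,j)$ yields exactly the horizontal edges in rows $\equiv 0,2 \pmod 4$ at the appropriate column parities, and that each vertex meets exactly one such edge, so $G_3$ is also a $1$-factor. Finally, for $G_1$, the block fires at $(i,j)$ with $(i,j) \equiv (0,0)$ or $(1,2) \pmod 4$ (roughly) and adds the four edges of the unit square with corners $(i,j),(i{+}1,j),(i,j{+}1),(i{+}1,j{+}1)$; I would show these squares are pairwise edge-disjoint and vertex-disjoint and tile all remaining edges, so each is a $4$-cycle and $G_1$ is a $2$-factor whose components are $4$-cycles.

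**Putting it together.** Having identified $G_2$ with "vertical edges in odd rows," $G_3$ with a specific family of horizontal edges, and $G_1$ with the edges of a disjoint family of unit squares, disjointness (claim (b)) is immediate: $G_2$ is vertical-only, $G_3$ is horizontal-only, and one checks the squares of $G_1$ avoid the odd-row vertical edges and the $G_3$-horizontal edges. Completeness (claim (a)) follows by a counting/exhaustion argument on the residue torus: each of the $16$ residue classes of an edge's "anchor" $(i,j)$ must land in exactly one block. I would present this as a short table over $i,j \in \{0,1,2,3\}$. The degree assertions then follow from the local picture at a vertex of each residue class.

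**Main obstacle.** The hard part is purely organizational: keeping the index conventions consistent (which corner of a unit square is the "anchor," whether an edge is counted at row $i$ or $i+1$) so that the three families genuinely partition $E(TG_n)$ with no edge double-counted and none omitted — in particular making sure the two residue patterns inside each of the $G_1$- and $G_3$-blocks don't overlap with each other and together cover exactly the intended edges. Once the reduction to the $4\times 4$ residue torus is in place this is a finite check, but it is the step most prone to off-by-one errors, so I would be careful to fix the orientation conventions once at the outset and use them uniformly, ideally accompanied by the picture in Figure~\ref{fig:alg_exec2} as a sanity check for $n=2$.
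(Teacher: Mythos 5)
Your proposal is correct and takes essentially the same route as the paper: a finite check over residue classes mod $4$, vertex-disjointness of the unit squares for $G_1$, parity arguments showing $G_2$ and $G_3$ are perfect matchings, and a table over the two edge types to establish that every edge lies in exactly one of $G_1,G_2,G_3$ (the paper's Tables~\ref{tbl:horizontal_decomposition} and~\ref{tbl:vertical_decomposition}). Only note that your provisional identifications of the $G_1$ and $G_3$ edge families are slightly off as stated (e.g.\ $G_3$ contains horizontal edges in every row, with the column parity alternating according to the row residue mod $4$), which is precisely the index-bookkeeping pitfall you yourself flag as the main obstacle.
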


\begin{proof}
To each added 4-cycle $C$ in step 7, we can associate a unique vertex, denoted by $v(C)$, namely the vertex in $C$ closest to the origin (the vertex $u_1$ in line $6$ of the Algorithm~\ref{algorithm4}). Denote by $(i,j)$ coordinates of $v(C)$.
\begin{itemize}
\item If $v(C) \in S$ then $i \equiv 0 \pmod 4$ and $j \equiv 0 \pmod 2$.
\item If $v(C) \in T$ then $i \equiv 2 \pmod 4$ and $j \equiv 1 \pmod 2$.
\end{itemize}
It follows that any two $4$-cycles $C$ and $D$ added in line $7$ are vertex-disjoint.

We will now prove that $G_2$ is a $1$-factor of $TG_n$. The edges of $G_2$ are defined at step 10. Suppose there are vertices $u,v,v_1 \in V(G_2)$ such that $uv \in E(G_2)$ and $uv_1 \in E(G_2)$. Let $L_{TG_n}(u) = (i,j)$. Since the edges of $G_2$ are of the form $(i,j)(i+1,j)$ for some $i,j \in \{0,\dots,4n-1\}$, we have two possibilities for the coordinates of $v_1$ and $v_2$. They are $L_{TG_n}(v_1) = (i-1,j)$ and $L_{TG_n}(v_2) = (i+1,j)$ or $L_{TG_n}(v_2) = (i-1,j)$ and $L_{TG_n}(v_1) = (i+1,j)$. We may assume without loss of generality that $L_{TG_n}(v_1) = (i-1,j)$ and $L_{TG_n}(v_2) = (i+1,j)$. Then, since $uv_1 \in E(G_2)$ it must hold that $i \equiv 1 \text{ (mod 2)}$ and since $uv_2 \in E(G_2)$ we conclude $i \equiv 1 \text{ (mod 2)}$, a contradiction.

The proof that $G_3$ is $1$-factor of $TG_n$ is similar as above proof since all edges are of the form $(i,j)(i+1,j)$ for some $i \in \{0,\dots,4n-1\}$ and $j\in \{0,\dots,4n-1\}$.

The fact that the graphs $G_1, G_2$ and $G_3$ decompose $TG_n$ follows from the tables (Table~\ref{tbl:horizontal_decomposition} and Table~\ref{tbl:vertical_decomposition}) considering all possible types of edges in $TG_n$.

\begin{table}[h!]
\centering
\caption{Decomposition of the edges of $TG_n$ of the type $(i,j)(i,j+1)$. Each entry is one of the graphs $G_1$, $G_2$ or $G_3$, 
the one containing the edge $(i,j)(i,j+1)$.}
\label{tbl:horizontal_decomposition}
\begin{tabular}{|c|c|c|c|c|}
\hline
\diagbox[]{$j \mod 4$}{$i \mod 4$} & 0 & 1 & 2 & 3 \\ \hline
0 & $G_1$ & $G_1$ & $G_1$ & $G_1$ \\ \hline
1 & $G_2$ & $G_2$ & $G_2$ & $G_2$ \\ \hline
2 & $G_1$ & $G_1$ & $G_1$ & $G_1$ \\ \hline
3 & $G_2$ & $G_2$ & $G_2$ & $G_2$ \\ \hline
\end{tabular}
\end{table}

\begin{table}[h!]
\centering
\caption{Decomposition of the edges of $TG_n$ of the type $(i,j)(i+1,j)$. Each entry is one of the graphs $G_1$, $G_2$ or $G_3$,
the one containing the edge $(i,j)(i+1,j)$.}
\label{tbl:vertical_decomposition}
\begin{tabular}{|c|c|c|c|c|}
\hline
\diagbox[]{$j \mod 4$}{$i \mod 4$} & 0 & 1 & 2 & 3 \\ \hline
0 & $G_1$ & $G_3$ & $G_1$ & $G_3$ \\ \hline
1 & $G_3$ & $G_1$ & $G_3$ & $G_1$ \\ \hline
2 & $G_1$ & $G_3$ & $G_1$ & $G_3$ \\ \hline
3 & $G_3$ & $G_1$ & $G_3$ & $G_1$ \\ \hline
\end{tabular}
\end{table}

\end{proof}

%pretpostavimo da sam ovo pravilno dokazao, pa poslije pogledaj jos jednom

Consider again the graph $TG_n$ for an arbitrary positive integer $n$ and denote by $S$ and $T$ the two parts of the bipartition of $TG_n$.
Let $G_1, G_2, G_3$ be the graphs obtained by Algorithm~\ref{algorithm4} for the input $(TG_n, L_{TG_n})$. To construct an overlap labeling of $TG_n$, we first assign a string of length 3 consisting of null characters to each vertex $v \in V(TG_n)$ and then we do the following:
\begin{enumerate}\label{proc:labproc}
\item For each 4-cycle $C$ of $G_1$ assign character $C$ to all $v \in V(C)$ as follows: if $v \in S$ then place the character to the third position of the string assigned to $v$, otherwise (when $v \in T$) place the character to the first position of the string assigned to $v$.
\item For each edge of $G_2$ construct an overlap of length $2$.
\item For each edge of $G_3$ construct an overlap of length $3$.
\end{enumerate}

We will now give arguments showing that the above procedure is well defined (see also example in Section 6.2). Denote by $s_u$ the string assigned to vertex $u$ and the $i$th character for $i \in \{1,2,3\}$ by $s_u(i)$. The first step is obviously well defined since by above $G_1$ is a disjoint union of $4$-cycles. After the first step, each vertex of $TG_n$ will be labeled with a character.

Let $e = uv \in E(G_2)$ and assume without loss of generality that $u \in S$ and $v \in T$. After the first step $s_u(3) = a$ and $s_v(1) = b$ for some distinct characters $a$ and $b$ (since by the construction of the algorithm, if $uv \in E(G_2)$ then it can not happen that $uv \in E(G_1)$). To make an overlap of length $2$ we set $s_u(2) = b$ and $s_v(2) = a$.

The third step is the most difficult one. After the second step, for each $u \in V(TG_n)$ the string $s_u$ has exactly two characters different from the null character (since $G_2$ is a $1$-factor of $TG_n$).
We will show that if there is an edge $e = uv \in E(G_3)$ and both $s_u$ and $s_v$ have two characters different from the null character (i.e., $s_u(i) \neq *$ for $i \in \{2,3\}$ and $s_v(j) \neq *$ for $j \in \{1,2\}$), then $s_u(2) = s_v(2)$.
Let $uv \in E(G_3)$ and $u \in S$, $v \in T$.
By above, there exists $e_1 \in E(G_2)$ such that $u$ is an endpoint of $e_1$ and there exists $e_2 \in E(G_2)$ such that $v$ is an endpoint of $e_2$.
In that case let $u_1$ be vertex of $V(TG_n)$ such that $uu_1 \in E(G_2)$ and $v_1 \in V(TG_n)$ such that $vv_1 \in E(G_2)$.
Then by the algorithm, it must hold $u_1v_1 \in E(G_1)$. Thus $s_{u_1}(1) = s_{v_1}(3) = a$ for some character $a$. Now, since $uu_1 \in E(G_2)$, by the above, we have $s_u(2) = s_{u_1}(1) = a$ and since $vv_1 \in E(G_2)$ then $s_v(2) = s_v(3) = a$ which implies $s_u(2) = s_v(2)$. Ta make an overlapping of length 3, we set $s_u(1) = c$ and $s_v(3) = b$ (see Section 7.1 for an example).

By the above arguments, we have an assignment to vertices of $TG_n$ such that if $e = uv \in E(TG_n)$ we have $ov(s_u,s_v) > 0$. To prove that such assignment is an overlap labeling of $TG_n$ we need to prove that if for some distinct vertices $u \in S$, $v \in T$ we have $ov(s_u, s_v) > 0$ then $uv \in E(TG_n)$. We will analyze only one case since we can use symmetry to state the same arguments for the other cases.

Suppose that for some $u \in S$ and $v \in T$, $ov(s_u,s_v) > 0$. If $ov(s_u, s_v) = 1$ then $s_u(1) = s_v(3)$. By our procedure we assign the same character to $s_u(1)$ and $s_v(3)$ if and only if $u$ and $v$ belong to the same $4$-cycle of $G_1$, thus $uv \in E(G_1)$ which imply $uv \in E(TG_n)$.

Suppose that $ov(s_u, s_v) = 2$. Let $s_u(3) = a$ and $s_v(1) = b$ for some characters $a$ and $b$.
Since $ov(s_u, s_v) = 2$ we have $s_u(2) = b$ and $s_v(2) = a$.
By our procedure such assignment to $s_u$ and $s_v$ can happen if there is edge $uv \in E(G_2)$ which implies $uv \in E(TG_n)$.

Lastly, suppose that $ov(s_u,s_v) = 3$ and let $s_u(1) = a$, $s_u(2) = b$ and $s_u(3) = c$ for some characters $a,b,c$. Note that $a \neq c$ since $ov(s_u, s_v) > 1$.
There exists a unique vertex $u_1 \in V(TG_n)$ such that $uu_1 \in E(G_2)$ and $s_{u_1}(1) = b$ because by the procedure we change the second character if and only if such a situation occurs. Similarly, there exists a unique $v_1 \in V(TG_n)$ such that $vv_1 \in E(G_2)$ and $s_{v_1}(3) = b$.
We conclude that $u_1$ and $v_1$ belong to the same $4$-cycle of $G_1$. Let $L_{TG_n}(u) = (i,j)$.
By symmetry, we may assume that vertices which belong to the same $4$-cycle in $G_1$ as vertex $u$ are denoted with $(i-1,j), (i, j+1), (i-1,j+1)$, the other cases will follow by symmetry (see Figure~\ref{fig:last_proof}).
Also, since $uu_1 \in E(G_2)$ by the construction of graph $TG_n$ and above assumptions it must follow that $LG_{TG_n}(u_1) = (i+1,j)$.
Since $u_1$ and $v_1$ belong to the same $4$-cycle of $G_1$, we have several possibilities: (i) $L_{TG_n}(v_1) = (i+2,j-1)$, (ii) $L_{TG_n}(v_1) = (i+2,j)$ and (iii) $L_{TG_n}(v_1) = (i+1,j-1)$. Situation (i) can not occur since in that case $v_1 \in T$ and $v \in T$ so there is no edge between them (see Figure~\ref{fig:last_proof}, left image).
If $L_{TG_n}(v_1) = (i+2,j)$ then by Algorithm~\ref{algorithm4} and since $vv_1 \in E(G_2)$, it must follow that $L_{TG_n}(v) = (i+3,j)$. By the algorithm, if $u$ and $v$ belong to different $4$-cycles of $G_1$ and $s_u(2) = s_v(2)$, then we can have an overlapping of length 3 if there is an edge between $u$ and $v$ (more precisely, if there is an edge $uv \in E(G_3)$ and we set $s_v(3) = s_u(3)$ and $s_u(1) = s_v(1)$). But, by the construction of the graph $TG_n$, vertices $(i,j)$ and $(i+3,j)$ are not adjacent (since $a \neq c$) , which means that $L_{TG_n}(v_1)$ can not be equal to $(i+2,j)$ (see Figure~\ref{fig:last_proof}, right image).

Lastly if $L_{TG_n}(v_1) = (i+1,j-1)$ then by the algorithm (since $vv_1 \in E(G_2)$) the only possibility is that $L_{TG_n}(v) = (i,j-1)$. But then, by the construction of the graph $TG_n$ $u$ and $v$ are adjacent and we obtained desired (see Figure~\ref{fig:last_proof_last_case}).

\begin{figure}[h!]
\begin{center}
\includegraphics[width=1\linewidth]{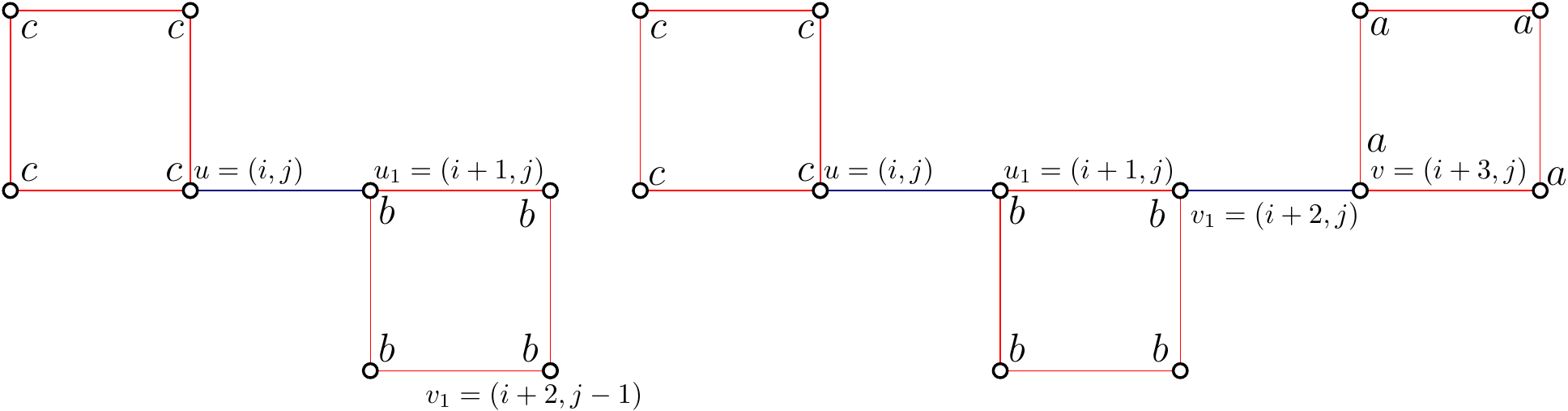}
\end{center}
\caption{The left image corresponds to the case $v_1 = L_{TG_n}^{-1}(i+2,j-1)$. The right image corresponds to the case $v = L_{TG_n}^{-1}(i,j+3)$.} \label{fig:last_proof}
\end{figure}

\begin{figure}[h!]
\begin{center}
\includegraphics[width=0.3\linewidth]{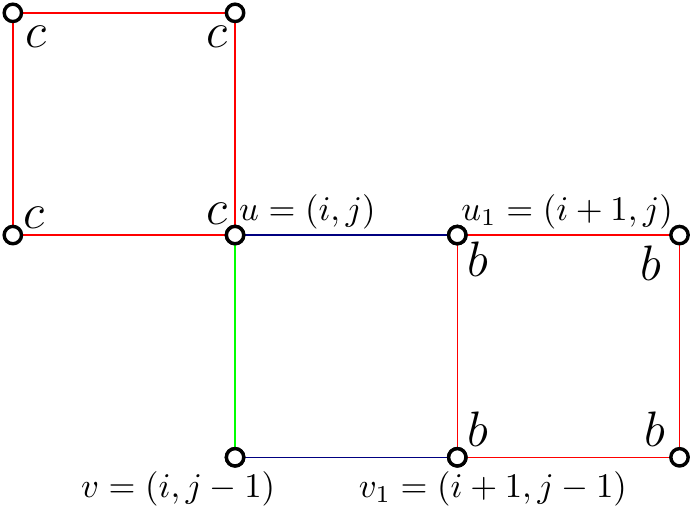}
\end{center}
\caption{The third case of above proof: if $v_1 = (i+1,j-1)$ and $vv_1 \in E(G_2)$ then the only possibility is $v = (i,j-1)$. In that case $uv \in E(TG_n)$.} \label{fig:last_proof_last_case}
\end{figure}

There are three further cases with respect to the position of $u$ in the $4$-cycle of $G_1$. In all those cases we apply similar arguments to the ones given above.

By the stated arguments, the assignment obtained by above procedure is an overlap labeling of length $3$. Combining this result with Lemma~\ref{lemma:lbcgn} we get that $r(TG_n) = 3$. Now we are able to prove Theorem~\ref{thm:rgrid}:
\begin{proof}[Proof of Theorem~\ref{thm:rgrid}]
Let $G_{m,n}$ be grid graph of size $m\times n$, $m\geq 3$ and $n\geq 3$.
One can easily check that the subgraph of $G_{m,n}$ induced by vertices $$(0,1), (1,0), (1,1), (1,2), (2,0), (2,1), (2,2)$$ is isomorphic to graph $F$.
By Theorem~\ref{thm:rdbind} and Lemma~\ref{lemma:rgd} we obtain $r(G_{m,n}) \geq 3$. Let $k = \max\{m,n\}+1$. Clearly, $G_{m,n}$ is induced subgraph of $TG_{k}$ and hence $r(G_{m,n}) \leq r(TG_{k})$. By the above discussion, $r(TG_{k}) = 3$. Thus $$3 = r(F) \leq r(G_{m,n}) \leq r(TG_{k}) =3$$.
\end{proof}

\section{An example}
We conclude the chapter with a concrete example illustrating the construction of an overlap labeling of a toroidal grid graph $TG_2$ as described above (see Figure~\ref{fig:alg_exec2}).

Let $G_1,G_2$ and $G_3$ be the graphs obtained by Algorithm~\ref{algorithm4} for input $(TG_2,L=L_{TG_2})$. $G_1$ is a disjoint union of 4-cycles denoted by red color, $G_2$ consists of edges colored with blue color and $G_3$ consists of edges colored with green color. By the procedure for constructing an optimal overlap labeling, we first assign a unique character to each 4-cycle (note that the size of the alphabet used to construct an overlap labeling of length 3 is equal to number of 4-cycles in $G_1$). Suppose for example that we assign characters $a,b,c$ to cycles with lower left vertices $(0,0),(2,1),(2,7)$, respectively. Since $(1,1)(2,1) \in E(G_2)$ and $(1,0)(2,0) \in E(G_2)$ we assign strings $*ba$, $ba*$, $*ac$ to vertices $(1,1), (2,1), (2,0)$, respectively (recall that $*$ stands for null character, i.e., we did not yet assign any character on the position where $*$ is placed). Since $(2,0)(2,1) \in E(G_3)$ we have to construct overlapping of length 3. This is possible since the second character of $(2,1)$ is equal to the second character of $(2,0)$. We assign string $bac$ to both vertices $(2,0)$ and $(2,1)$ and make overlapping of length 3 without affecting to overlaps constructed while processing edges of $G_2$. The other edges are processed similarly.

\bigskip
\begin{figure}[h!]
\begin{center}
\includegraphics[width=0.6\linewidth]{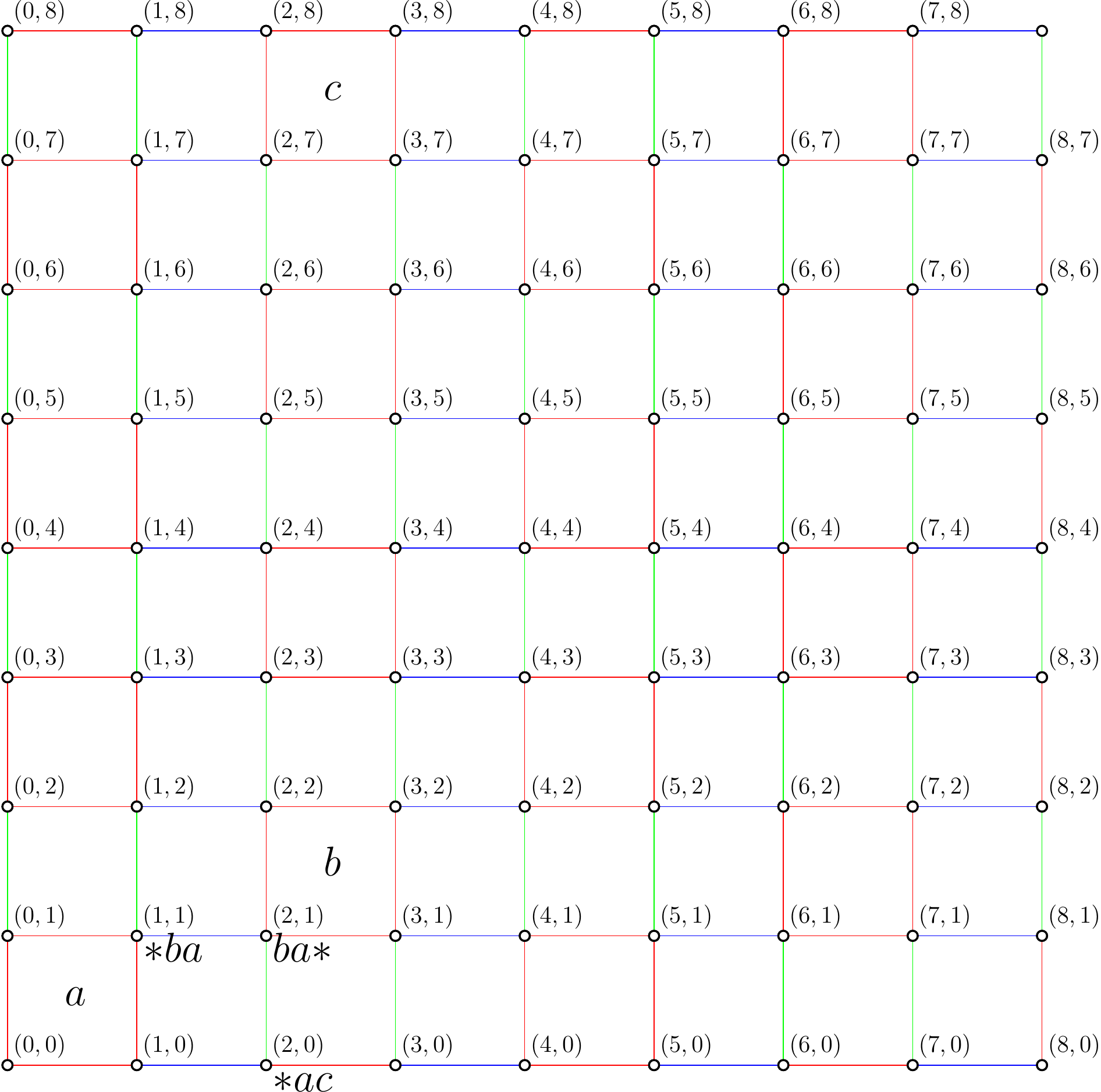}
\end{center}
\caption{Constructing an overlap labeling of $TG_2$} \label{fig:alg_exec2}
\end{figure}

%%%%%%%%%%%%%%%%%%%%%%%%%%%%%%%%%% Conclusion %%%%%%%%%%%%%%%%%%%%%%%%%%%%%%%%%%%%%

\chapter{Conclusion}
\thispagestyle{fancy}

In the final project paper, we reviewed an algorithm for building a set of strings given its overlap graph given by Braga and Meidanis \cite{BragaMeidanis}. Although the derived running time of the algorithm is exponential in the maximum degree, one can make it polynomial using programming languages which allows memory manipulation (e.g., using pointers) such as \textit{C++}. However, the upper bound for the readability derived with Algorithm~\ref{algorithm11} is in general very weak. This motivated us to formulate an ILP model for the exact computation of readability (for balanced bipartite graphs as well as for digraphs).While polynomial, the number of variables and constraints of the ILPs can be large already for moderately sized graphs. However, with fast ILP solvers, the models can be used to understand the behavior of readability of small graphs and to state hypotheses for readability in general. We implemented the ILP for bipartite graphs in Java~\cite{java} using IBM$^{\scriptscriptstyle\textregistered{}}$ ILOG$^{\scriptscriptstyle\textregistered{}}$ CPLEX$^{\scriptscriptstyle\textregistered{}}$ Optimization Studio~\cite{cplex}. The implementation is available at \cite{github}.

We reviewed bounds for readability using tools from graph theory developed in \cite{MilanicRayanMedvedev} and presented theorems that are the basis for the derivation of polynomial time algorithms for checking if a graph is of readability at most $2$. The lower bound given by the parameter \textit{distinctness} can be computed in polynomial time. The time complexity of computing bounds given by \textit{HUB}-number of a graph is unknown and such bounds can therefore only be used in theory. However, this theory motivated us to establish the readability of grid graphs and toroidal grid graphs. We showed that readability of such graphs is at most $3$ and moreover for any (toroidal) grid graph of size $m\times n$ for some integers $m \geq 3,n \geq 3$, the readability is exactly 3.
We also presented a polynomial time algorithm for constructing an optimal overlap labeling of such graphs.
It can be verified that the HUB-number of graph $F$ (see Figure~\ref{fig:graph_F}) is at least 3, which implies that for all large enough grid graphs and for all toroidal grid graphs, the readability coincides with their HUB-number (which in general is only a lower bound for the readability).
In future work, we will try to identify further graph classes satisfying this condition and develop an explicit formula for an optimal overlap labeling of toroidal grid graphs.

%%%%%%%%%%%%%%%%%%%%%%%%%%%%%%%%%% Summary of the final project paper in Slovene  %%%%%%%%%%%%%%%%%%%%%%%%%%%%%%%%%%%%%

\chapter{Povzetek naloge v slovenskem jeziku}
\thispagestyle{fancy}

V zaklju\v cni nalogi smo obravnavali \textit{od\v citljivost} grafa. Naj bo $C$ kon\v cna mno\v zica besed nad poljubno kon\v cno abecedo. Naj bosta $s_1$ in $s_2$ dve poljubni besedi iz mno\v zice $C$. Pravimo, da obstaja prekrivanje velikosti $k$ med besedama $s_1$ in $s_2$, \v ce velja $\suff(s_1,k) = \pref(s_2,k)$, kjer $\suff(s,k)$ ozna\v cuje pripono besede $s$ dol\v zine $k$, $\pref(s,k)$ pa njeno predpono dol\v zine $k$. Na podlagi tega definirajmo funkcijo $ov : C\times C \to \mathbb{N}\cup \{0\}$ s predpisom $ov(s_1,s_2) = \min_{k>0}\{\suff(s_1,k)=\pref(s_2,k)\}$, \v ce neko tako prekrivanje obstaja, in $0$, sicer. Z besedami povedano, $ov(s_1,s_2)$ ozna\v cuje velikost najmanj\v sega prekrivanja med besedama $s_1$ in $s_2$.
Iz tega lahko tvorimo digraf na naslednji na\v cin: vsaka beseda predstavlja eno vozli\v s\v ce in dve vozli\v s\v ci sta povezani natanko tedaj ko je $ov(s_1,s_2) > 0$, kjer sta $s_1$ in $s_2$ besede, ki ustrezata vozli\v s\v cema.
Graf, pridobljen na ta na\v cin, bomo imenovali \textit{graf prekrivanj} dane mno\v zice besed. Poglejmo obratni problem.
 Naj bo dan digraf $D = (V,A)$. Naloga je poiskati tako mno\v zico besed $C$, da bo njen graf prekrivanj izomorfen danemu digrafu $D$.
 Funkcijo, ki vsakem vozli\v s\v cu digrafa $D$ dodeli neko besedo, bomo imenovali \textit{ozna\v cevalna funkcija}.
 Dol\v zina ozna\v cevalne funkcije je najve\v cja dol\v zina besede med vsemi besedami, ki jih ozna\v cevalna funkcija dodeli vozli\v s\v cem. Od\v citljivost digrafa $D$ definiramo kot tako najmanj\v se pozitivno \v stevilo $k$, da obstaja injektivna ozna\v cevalna funkcija digrafa $D$ dol\v zine $k$.
 Algoritem~\ref{algorithm11} (na strani \pageref{algorithm11}) poka\v ze, da je od\v citljivost dobro definiran parameter. Iz Algoritma~\ref{algorithm11} takoj dobimo zgornjo mejo za od\v citljivost, in sicer $2^{p+1}-1$, kjer je $p=\max\{\Delta^+(D), \Delta^-(D)\}$. Ra\v cunska zahtevnost problema izra\v cuna od\v citljivosti danega digrafa \v se ni znana.

\begin{sloppypar}
\v Ceprav je od\v citljivost v osnovi definirana za digrafe, s pomo\v cjo Izreka~\ref{thm:rdb_assymp} (na strani \pageref{thm:rdb_assymp}) lahko \v studiramo od\v citljivost na dvodelnih uravnote\v zenih grafih. Definicija od\v citljivosti za dvodelne (uravnote\v zene) grafe je skoraj ista kot za digrafe. Razlika je zgolj v tem, da ne zahtevamo injektivnosti ozna\v cevalne funkcije.

Od\v citljivost grafa je te\v zko pora\v cunati \v ze za majhne grafe. V ta namen smo v \v cetrtem poglavju predstavili celo\v stevilski linearni program (CLP) za natan\v cno ra\v cunanje od\v citljivosti dvodelnih uravnote\v zenih grafov. V petem poglavju smo predstavili CLP za izra\v cun od\v citljivosti danega digrafa. V praksi je opisani CLP uporaben zgolj za majhne grafe, ker je sicer \v stevilo spremenljivk in omejitev zelo veliko in izvajanje programa (implementiranega s pomo\v cjo CPLEXa) traja zelo dolgo.
\end{sloppypar}
Poleg zgornje meje dobljene z Algoritmom~\ref{algorithm11} je predstavljenih \v se nekaj drugih.
Obravnavan je polinomski algoritem za prepoznavanje grafov od\v citljivosti 2.
Potem smo predstavili spodnjo mejo s pomo\v cjo parametra \textit{razlo\v cljivosti}. Spodnjo in zgornjo mejo lahko pora\v cunamo tudi s pomo\v cjo HUB-\v stevila ($hub(G)$) danega grafa. Velja, da je od\v citljivost dvodelnega uravnote\v zenega grafa ve\v cja ali enaka HUB-\v stevilu in je kve\v cjemu $2^{hub(G)}-1$. Ra\v cunska zahtevnost izra\v cuna HUB-\v stevila danega dvodelnega grafa \v zal ni znana, tako da so meje na osnovi HUB-\v stevila uporabne zgolj v teoriji.

V zadnjem poglavju zaklju\v cne naloge smo obravnavali od\v citljivost dvodimenzionalnih in toroidalnih mre\v z. Pokazali smo, da je od\v citljivost teh kve\v cjemu 3.
Natan\v cneje, od\v citljivost dane mre\v ze ali toroidalne mre\v ze je enaka 3 (razen za $2\times n$ mre\v ze, ki so od\v citljivosti kve\v cjemu 2).
Pri tem smo \v se predstavili algoritem, ki v polinomskem \v casu zgradi optimalno ozna\v cevalno funkcijo dol\v zine 3 za dano toroidalno mre\v zo.

%%%%%%%%%%%%%%%%%%%%%%%%%%%%%%%% Bibliography %%%%%%%%%%%%%%%%%%%%%%%%%%%%%%%%%

\newpage

%%%%%%%%%%%%%%%%%%%%%%%%%%%%%%%%%%%% Appendices %%%%%%%%%%%%%%%%%%%%%%%%%%%%%%%%%%%%%

%\pagestyle{fancyplain}
%\vspace*{\fill}
%     \begin{center}
%          \bf{\Huge{Appendices}}
%     \end{center}
%\vspace*{\fill}
%\thispagestyle{fancy}
%
%\appendix
%\thispagestyle{empty}
%\pagenumbering{gobble}
%
%\addtocontents{toc}{\setcounter{tocdepth}{-1}}
%
%% Be careful:
%% the command
%% \thispagestyle{empty}
%% has to be present on every page of each appendix (so that the document header is not displayed)
%
%\addtocontents{toc}{\setcounter{tocdepth}{2}}
\end{document}